\def\dOi{14(1:3)2018}
\let\lmcsproof=\proof
\colorlet{dblue}{blue!40!black}
\renewenvironment{proof}{\lmcsproof}{\hfill\qed}
\newcommand{\remove}[1]{}
\theoremstyle{plain}
\newtheorem{theorem}{Theorem}%[section]
\newtheorem*{conjecture*}{Conjecture}
\theoremstyle{definition}
\newtheorem{formalisation}[theorem]{Formalisation}
\newcommand{\msf}{\mathsf}
\newcommand{\mit}{\mathit}
\newcommand{\ap}{{\scriptsize@}}
\newcommand{\pairlft}{{\langle}}
\newcommand{\pairrgt}{{\rangle}}
\newcommand{\pairsep}{{,\,}}
\newcommand{\pairstr}[1]{\pairlft#1\pairrgt}
\newcommand{\pair}[2]{\pairstr{#1\pairsep#2}}
\newcommand{\sBNFis}{{{:}{:}{=}}}
\newcommand{\BNFor}{\mathrel{|}}
\newcommand{\coBNFis}{\mathrel{\sBNFis^{\text{co}}}}
\newcommand{\atrs}{\mathcal{R}}
\newcommand{\sred}{{\rightarrow}}
\newcommand{\rerat}[2]{\mathrel{\sred_{#1,#2}}}
\newcommand{\mred}{\to\hspace{-2.8mm}\to}
\newcommand{\ieq}{\stackrel{\infty}{=}}
\newcommand{\harpoondown}{\mathrel{\ooalign{$\leftharpoondown$\cr$\rightharpoondown$\cr}}}
\newcommand{\ieqdown}{\stackrel{\infty}{{\harpoondown}}}%
\newcommand{\ired}{\to^\infty}%
\newcommand{\iredi}{\mathrel{\reflectbox{$\ired$}}}
\newcommand{\ireddown}{\rightharpoondown^\infty}%
\newcommand{\ireddownfin}{\stackrel{\makebox(0,0){\raisebox{3pt}{{\tiny$<$}}}}{\rightharpoondown}^\infty}%
\newcommand{\ireddownmfin}{\stackrel{\makebox(0,0){\raisebox{3pt}{{\tiny($<$)}}}}{\rightharpoondown}^\infty}%
\newcommand{\rstep}{\to_{\varepsilon}}
\newcommand{\ored}{\to^{\le \omega}}%
\newcommand{\cmred}{\to_{m}}
\newcommand{\cired}{\to_{i}}
\newcommand{\cored}{\to_{o}}
\newcommand{\crstep}{\to_{r}}
\newcommand{\ibi}{\stackrel{\infty}{\to}}%
\newcommand{\ibiinv}{\stackrel{\infty}{\leftarrow}}%
\newcommand{\ibidown}{\stackrel{\infty}{\rightharpoondown}}%
\newcommand{\redord}{\to_{\textit{ord}}}
\newcommand{\redorddown}{\rightharpoondown_{\textit{ord}}}
\newcommand{\iredord}{\ired_{\textit{ord}}}
\newcommand{\iredorddown}{\ireddown_{\textit{ord}}}
\newcommand{\fun}[1]{\mathsf{#1}}
\newcommand{\relcomp}{\mathbin{\boldsymbol{;}}}
\newcommand{\funcomp}{\circ}
\newcommand{\nat}{\mathbb{N}}
\newcommand{\sarity}{\mit{ar}}
\newcommand{\arity}[1]{\sarity(#1)}
\newcommand{\avars}{\mathcal{X}}
\newcommand{\vars}[1]{\mathcal{V}\hspace{-.1ex}\mit{ar}(#1)}
\newcommand{\ster}{\mathit{Ter}}
\newcommand{\ter}[2]{\ster(#1,#2)}
\newcommand{\siter}{\ster^{\infty}}
\newcommand{\iter}[2]{\siter(#1,#2)}
\newcommand{\sbis}{{\sim}}
\newcommand{\xbis}{\mathrel{\sbis}}
\newcommand{\posemp}{\varepsilon}
\newcommand{\apos}{p}
\newcommand{\pos}[1]{\mathcal{P}\!os(#1)}
\newcommand{\subtrm}[2]{#1|_{#2}}
\newcommand{\asubst}{\sigma}
\newcommand{\slfp}{\mu}
\newcommand{\lfp}[2]{\slfp {#1}.\,#2}
\newcommand{\sgfp}{\nu}
\newcommand{\gfp}[2]{\sgfp {#1}.\,#2}
\newcommand{\down}[1]{\overline{#1}}
\newcommand{\id}{\mathrm{Id}}%
\newcommand{\tlat}{L}
\newcommand{\Pow}{\mathcal{P}}
\newcommand{\rsplit}{\ensuremath{\msf{split}}}
\newcommand{\rlift}{\ensuremath{\msf{lift}}}
\newcommand{\rid}{\ensuremath{\msf{id}}}%
\newcommand{\nest}{\textit{der}}
\newcommand{\fp}{\textit{fp}}
\newcommand{\boxx}[1]{\colorbox[rgb]{0.99,0.78,0.07}{\kern0.15em#1\kern0.15em}\quad}
\newcommand{\hole}{\raisebox{-2pt}{\scalebox{.7}[1.5]{$\Box$}}}
\newcommand{\sdefd}{:=}
\newcommand{\defd}{\mathrel{\sdefd}}
\newcommand{\aes}{\atrs}%
\renewcommand{\emptyset}{\varnothing}
\newcommand{\fap}[2]{#1(#2)}
\newcommand{\bfap}[3]{\fap{#1}{#2,#3}}
\newcommand{\smetric}{\mathsf{d}}
\newcommand{\metric}{\bfap{\smetric}}
\newcommand{\smktree}{\mathfrak{T}}
\newcommand{\mktree}{\fap{\smktree}}
\newcommand{\smktreedown}{\smktree'}
\newcommand{\mktreedown}{\fap{\smktreedown}}
\newcommand{\smktreedownfin}{\smktree'_{<}}
\newcommand{\mktreedownfin}{\fap{\smktreedownfin}}
\newcommand{\smktreedownmfin}{\smktree'_{(<)}}
\newcommand{\mktreedownmfin}{\fap{\smktreedownmfin}}
\newcommand{\sto}{\rightsquigarrow}
\newcommand{\tail}{\msf{f}}
\newcommand{\proj}[2]{\mathit{proj}_{#1}(#2)}
\newcommand{\rulapp}{\mathit{rulapp}}
\setlist[enumerate]{font=\normalfont,labelindent=*,leftmargin=*,start=1,label=(\roman*)}
\begin{document}

\title[Coinductive Foundations of Infinitary Rewriting and Equational Logic]
{Coinductive Foundations\\ of Infinitary Rewriting\\ and Infinitary Equational Logic\rsuper*}

\author[J.~Endrullis]{J\"{o}rg Endrullis\rsuper a}
\address[]{{\lsuper{a,c}}%
  Department of Computer Science, 
  VU University Amsterdam, The Netherlands
}
\email{\{j.endrullis, r.d.a.hendriks\}@vu.nl}

\author[H.~N.~Hansen]{Helle Hvid Hansen\rsuper b}
\address[]{{\lsuper b}%
  Department of Engineering Systems and Services,
  Delft University of Technology, The Netherlands
%  Delft University, The Netherlands
}
\email{h.h.hansen@tudelft.nl}

\author[D.~Hendriks]{Dimitri Hendriks\rsuper c} 
\address[]{\vspace{-18 pt}}
%\email{r.d.a.hendriks@vu.nl}

\author[A.~Polonsky]{Andrew Polonsky\rsuper d}
\address[]{{\lsuper d}%
  Institut de Recherche en Informatique Fondamentale, Paris Diderot University, France
}
\email{andrew.polonsky@gmail.com}

\author[A.~Silva]{Alexandra Silva\rsuper e}
\address[]{{\lsuper e}%
  Department of Computer Science, University College London, England
}
\email{alexandra.silva@ucl.ac.uk}

\keywords{infinitary rewriting, infinitary equational reasoning, coinduction}

\subjclass{D.1.1, D.3.1, F.4.1, F.4.2, I.1.1, I.1.3}

\titlecomment{{\lsuper*}%
  This is a modified and extended version of~\cite{endr:hans:hend:polo:silv:2015}
  which appeared in the proceedings of RTA~2015.
}

\maketitle
\newcommand{\die}[1]{}
\begin{abstract}
\die{  We present a coinductive framework for developing the infinitary analogues of equational reasoning 
  and term rewriting in a uniform way. We define $\ieq$, a notion of 
  infinitary equational reasoning with respect to an equality relation
  $=_{\aes}$, and $\ired$, the standard notion of infinitary rewriting
  with respect to a reduction relation $\to_{\atrs}$, as follows:}
We present a coinductive framework for defining and reasoning about the infinitary analogues of equational logic
  and term rewriting in a uniform way. We define $\ieq$, the
  infinitary extension of a given equational theory $=_{\aes}$,
  and $\ired$, the standard notion of infinitary rewriting
  associated to a reduction relation $\to_{\atrs}$, as follows:
  \begin{align*}
    {\ieq} &\;\;\defd\;\; \gfp{R}{(=_{\aes} \cup \mathrel{\down{R}})^*} \\ %
    {\ired} &\;\;\defd\;\; \lfp{R}{\gfp{S}{(\to_{\atrs} \cup \mathrel{\down{R}})^*\relcomp \down{S}}} %
  \end{align*}
  Here $\slfp$ and $\sgfp$ are the least and greatest fixed-point operators, respectively,
  and 
  \begin{align*}
    \down{R} \;\defd\; \{\,\pair{f(s_1,\ldots,s_n)}{\,f(t_1,\ldots,t_n)} \mid f \in \Sigma,\, s_1\! \mathrel{R} t_1,\ldots,s_n\! \mathrel{R} t_n\,\} \,\cup\, \id  \,.
  \end{align*}
  The setup captures rewrite sequences of arbitrary ordinal length, 
  but it has neither the need for ordinals nor for metric convergence. 
  This makes the framework especially suitable for formalizations in theorem provers.
\end{abstract}

\section{Introduction}\label{sec:intro}

We present a coinductive framework for defining infinitary equational reasoning and infinitary rewriting
in a uniform way. The framework is free of ordinals, metric convergence and partial orders on terms which have been
essential in earlier definitions of the 
concept of infinitary rewriting~\cite{ders:kapl:plai:1991,kenn:klop:slee:vrie:1995a,klop:vrij:2005,kenn:vrie:2003,kahr:2013,bahr:2010,bahr:2010b,bahr:2012,endr:hend:klop:2012}.

Infinitary rewriting is a generalization of the ordinary finitary rewriting
to infinite terms and infinite reductions (including reductions of ordinal length greater than $\omega$).
For the definition of rewrite sequences of ordinal length,
there is a design choice concerning the exclusion of jumps at limit ordinals,
as illustrated in the ill-formed rewrite sequence 
  \begin{align*}
    \underbrace{\fun{a} \to \fun{a} \to \fun{a} \to \cdots}_{\text{$\omega$-many steps}} \;\fun{b} \to \fun{b}
  \end{align*}
where the rewrite system is $\atrs = \{\,\fun{a} \to \fun{a},\, \fun{b}\to\fun{b}\,\}$.
The rewrite sequence remains for $\omega$ steps at $\fun{a}$ and in the limit step `jumps' to $\fun{b}$.
To ensure connectedness at limit ordinals, the usual choices are:
\begin{enumerate}
  \item \emph{weak convergence} (also called `Cauchy convergence'), where it suffices that the sequence of terms converges towards the limit term, and
  \item \emph{strong convergence}, which additionally requires that the `rewriting activity', i.e., the depth of the rewrite steps,
    tends to infinity when approaching the limit.
\end{enumerate}
The notion of strong convergence incorporates the flavor of `progress', or `productivity',
in the sense that there is only a finite number of rewrite steps at every depth.
Moreover, it leads to a more satisfactory metatheory where redex occurrences can be 
traced over limit steps.

While infinitary rewriting has been studied extensively,
notions of infinitary equational reasoning have not received much attention.
Some of the few works in this area are by Kahrs~\cite{kahr:2013} and by Lombardi, R\'{\i}os and de~Vrijer~\cite{lomb:rios:vrij:2014},
see \textit{Related Work} below.
% One of the few works in this area is~\cite{kahr:2013}, 
% where Kahrs defines a variant of infinitary equational reasoning
% based on a topological closure of the rewrite relation.
% Like weakly convergent rewriting,
% his notion of infinitary equational reasoning allows for infinitely many rewriting steps at any finite depth.
% While strong convergence is a central concept in infinitary rewriting,
% there has been no analogue of strong convergence for infinitary equational reasoning.
The reason is that the usual definition of infinitary rewriting is
based on ordinals to index the rewrite steps, 
and hence the rewrite direction is incorporated from the start.
This is different for the framework we propose here, 
which enables us to define several natural notions:
infinitary equational reasoning, bi-infinite rewriting, and the standard concept of infinitary rewriting.
All of these have strong convergence `built-in'.

We define \emph{infinitary equational reasoning} 
with respect to a system of equations $\aes$, as a relation~${\ieq}$
on potentially infinite terms %
by the following 
mutually coinductive rules:
\begin{gather}
  \begin{aligned}
    \infer=
    {s \ieq t}
    {s \mathrel{(=_\aes \cup \ieqdown)^*} t}
    &&\qquad\qquad\qquad&&
    \infer=
    {f(s_1,s_2,\ldots,s_n) \ieqdown f(t_1,t_2,\ldots,t_n)}
    {s_1 \ieq t_1 & \cdots & s_n \ieq t_n}
  \end{aligned}
  \label{rules:intro:ieq}
\end{gather}
The relation ${\ieqdown}$ stands for infinitary equational reasoning below the root.
The coinductive nature of the rules means that the proof trees
need not be well-founded.
Reading the rules bottom-up,
the first rule allows for an arbitrary, but finite, number of rewrite steps at any finite depth (of the term tree).
The second rule enforces that we eventually proceed with the arguments, and hence the activity tends to infinity.

\begin{figure}[h!]
  \begin{framed}
  \vspace{1.5ex}
  \begin{align*}
    \infer=
    {\fun{C}^\omega \ieq \fun{a}}
    {
      \infer=
      {\fun{C}^\omega \ieqdown \fun{C}(\fun{a})}
      {\infer={\fun{C}^\omega \ieq \fun{a}}
          {\makebox(0,0){
            \hspace{-16mm}\begin{tikzpicture}[baseline=15ex,xscale=1.3,yscale=1.4]
            \draw [->,thick,dotted] (0,0) -- (0,1mm) to[out=90,in=80] (-11mm,-1mm) to[out=-100,in=-160,looseness=1.6] (-0mm,-13mm);
            \end{tikzpicture}
          }}
      }
      &
      \fun{C}(\fun{a}) =_\aes \fun{a}
    }
  \end{align*}
  \vspace{-2ex}
  \end{framed}
  \caption{Derivation of $\fun{C}^\omega \ieq \fun{a}$.}
  \label{fig:ieq}
\end{figure}

\begin{exa}\label{ex:Ca:a}
  Let $\aes$ consist of the equation 
  \begin{align*}
    \fun{C}(\fun{a}) = \fun{a} \;.
  \end{align*}
  We write $\fun{C}^\omega$ to denote the infinite term $\fun{C}(\fun{C}(\fun{C}(\ldots)))$,
  the solution of the equation $X = \fun{C}(X)$.
  Using the rules~\eqref{rules:intro:ieq}, we can derive 
  $\fun{C}^\omega \ieq \fun{a}$ as shown in Figure~\ref{fig:ieq}.
  This is an infinite proof tree as indicated by the loop 
  \raisebox{.5ex}{\tikz \draw [->,thick,dotted] (0,0) -- (5mm,0mm);}
  in which the sequence 
  $\fun{C}^\omega \ieqdown \fun{C}(\fun{a}) =_\aes \fun{a}$
  is written 
  by juxtaposing
  $\fun{C}^\omega \ieqdown \fun{C}(\fun{a})$ and $\fun{C}(\fun{a}) =_\aes \fun{a}$.
\end{exa}

Many of the proof trees we consider in this paper are regular trees,
that is, trees having only a finite number of distinct subtrees.
These infinite trees are convenient since they can be depicted
by a `finite tree' enriched with loops \raisebox{.5ex}{\tikz \draw [->,thick,dotted] (0,0) -- (5mm,0mm);}.
However, we emphasise that our framework is not limited to regular trees.

\begin{exa}\label{ex:irregular}
  For an example involving non-regular proof trees, let $\aes$ consist of the equation
  \begin{align*}
    \fun{b}(x) &= \fun{C}(\fun{b}(\fun{S}(x))) \;.
  \end{align*}
  Then we can derive $\fun{b}(x) \ieq \fun{C}^\omega$ by the non-regular proof tree shown in Figure~\ref{fig:nonreg}.
\end{exa}

\begin{figure}[h!]
  \begin{framed}
  \vspace{1.5ex}
  \begin{align*}
    \infer=
    {\fun{b}(x) = \fun{C}^\omega}
    {
      \fun{b}(x) =_\aes \fun{C}(\fun{b}(\fun{S}(x)))
      &
      \infer=
      {\fun{C}(\fun{b}(\fun{S}(x))) \ieqdown \fun{C}^\omega}
      {
        \infer=
        {\fun{b}(\fun{S}(x)) \ieq \fun{C}^\omega}
        {
          \fun{b}(\fun{S}(x)) =_\aes \fun{C}(\fun{b}(\fun{S}(\fun{S}(x))))
          &
          \infer=
          {\fun{C}(\fun{b}(\fun{S}(\fun{S}(x)))) \ieqdown \fun{C}^\omega}
          {
            \infer
            {\fun{b}(\fun{S}(\fun{S}(x))) \ieq \fun{C}^\omega}
            {\vdots}
          }
        }
      }
    }
  \end{align*}
  \vspace{-2ex}
  \end{framed}
  \caption{Non-regular proof tree for $\fun{b}(x) = \fun{C}^\omega$.}
  \label{fig:nonreg}
\end{figure}

\noindent Using the greatest fixed-point constructor~$\sgfp$, we can define $\ieq$ equivalently as follows:
\begin{align}
  {\ieq} &\;\;\defd\;\; \gfp{R}{(=_\aes \cup \mathrel{\down{R}})^*} \,,
  \label{eq:ieq}
\end{align}
where $\down{R}$, corresponding to the second rule in~\eqref{rules:intro:ieq}, is defined by
\begin{align}
  \down{R} \;\defd\; \{\,\pair{f(s_1,\ldots,s_n)}{\,f(t_1,\ldots,t_n)} \mid f \in \Sigma,\; s_1 \mathrel{R} t_1,\,\ldots,s_n \mathrel{R} t_n\,\} \,\cup\, \id  \,.
\end{align}
This is a new and interesting notion of infinitary (strongly convergent) equational reasoning.
% % which we believe it is worthwhile to study its properties and explore its application potential.
% Here we only do an initial investigation; we relate the three relations $\ieq$, $\ibi$, and $\ired$,  % we consider in this paper, 
% and their equivalence closures in Section~\ref{sec:venn}. 
% For example, we show that 
% $\ieq$ strictly subsumes infinitary conversion $(\iredi \cup \ired)^*$.

%

Now let $\atrs$ be a term rewriting system (TRS).
If we use $\to_\atrs$ instead of $=_\aes$ in the rules~\eqref{rules:intro:ieq}, %
we obtain what we call \emph{bi-infinite rewriting $\ibi$}\,:
\begin{gather}
  \begin{aligned}
    \infer=
    {s \ibi t}
    {s \mathrel{(\to_\atrs \cup \ibidown)^*} t}
    &&\qquad\qquad\qquad&&
    \infer=
    {f(s_1,s_2,\ldots,s_n) \ibidown f(t_1,t_2,\ldots,t_n)}
    {s_1 \ibi t_1 & \cdots & s_n \ibi t_n}
  \end{aligned}
  \label{rules:intro:ibi}
\end{gather}
corresponding to the following fixed-point definition:
\begin{align}
  {\ibi} &\;\;\defd\;\; \gfp{R}{(\to_\atrs \cup \mathrel{\down{R}})^*} \,.
  \label{eq:intro:ibi}
\end{align}
We write $\ibi$ to distinguish bi-infinite rewriting 
from the standard notion $\ired$ of (strongly convergent) infinitary rewriting~\cite{tere:2003}.
The symbol $\infty$ is centered above $\to$ in $\ibi$ 
to indicate that bi-infinite rewriting is `balanced', 
in the sense that it allows rewrite sequences to be extended infinitely forwards, but also infinitely backwards.
Here backwards does \emph{not} refer to reversing the arrow $\leftarrow_{\atrs}$.
For example, for $\atrs = \{\, \fun{C}(\fun{a}) \to \fun{a} \,\}$
we have the backward-infinite rewrite sequence $\cdots \to \fun{C}(\fun{C}(\fun{a})) \to \fun{C}(\fun{a}) \to \fun{a}$
and hence $\fun{C}^\omega \ibi \fun{a}$.
The proof tree for $\fun{C}^\omega \ibi \fun{a}$
has the same shape as the proof tree displayed in Figure~\ref{fig:ieq};
the only difference is that $\ieq$ is replaced by $\ibi$ and $\ieqdown$ by $\ibidown$.
In contrast, the standard notion $\ired$ of infinitary rewriting only takes into account forward limits
and we do \emph{not} have $\fun{C}^\omega \ired \fun{a}$.

We have the following strict inclusions:
\begin{align}
  \label{eq:inclusion1}
  {\ired} \;\;\subsetneq\;\; {\ibi} \;\;\subsetneq\;\; {\ieq} \;\,
\end{align}
In our framework, these inclusions follow directly from the fact 
that the proof trees for $\ired$ (see below)
are a restriction of the proof trees for $\ibi$
which in turn are a restriction of the proof trees for $\ieq$. 
It is also easy to see that each inclusion is strict.
For the first, see above. For the second, just note that $\ibi$ is not symmetric.

Finally, by a further restriction of the proof trees,
we obtain the standard concept of (strongly convergent) infinitary rewriting $\ired$. 
Using least and greatest fixed-point operators, we define:
\begin{align}
  {\ired} &\;\;\defd\;\; \lfp{R}{\gfp{S}{(\to \cup \mathrel{\down{R}})^*\relcomp \down{S}}} \,,
  \label{eq:intro:ired}
\end{align}
where $\relcomp$ denotes relational composition in diagrammatic order, that is:
\begin{align*}
  x \mathrel{(R \relcomp S)} y \iff \exists z.\; x \mathrel{R} z \wedge z \mathrel{S} y\;.
\end{align*}
The greatest fixed point defined using the variable $S$ is a coinductively defined relation. Thus only the last step in the sequence $(\to \cup \mathrel{\down{R}})^*\relcomp \down{S}$ is coinductive.
This corresponds to the following fact about reductions $\sigma$ of ordinal length:
every strict prefix of $\sigma$ must be shorter than $\sigma$ itself,
while strict suffixes may have the same length as $\sigma$.

If we replace $\slfp$ by $\sgfp$ in \eqref{eq:intro:ired}, 
we get a definition equivalent to~$\ibi$ defined by~\eqref{eq:intro:ibi}.
To see that it is at least as strong, note that $\id \subseteq \down{S}$.

Conversely, $\ired$ can be obtained by a restriction of the proof trees obtained by the rules~\eqref{rules:intro:ibi} for $\ibi$.
Assume that in a proof tree using the rules~\eqref{rules:intro:ibi},
we mark those occurrences of $\ibidown$ that 
are followed by another step in the premise of the rule 
(i.e., those that are not the last step in the premise).
Thus we split $\ibidown$ into $\ireddown$ and~$\ireddownfin$.
Then the restriction to obtain the relation $\ired$ is to forbid infinite nesting of marked symbols~$\ireddownfin$.
This marking is made precise in the following rules:
\begin{align}
  \begin{aligned}
  \infer=
  {s \ired t}
  {s \mathrel{(\to \cup \ireddownfin)^*} \relcomp \ireddown t}
  &\hspace{.4cm}&  
  \infer=
  {f(s_1,s_2,\ldots,s_n) \ireddownmfin f(t_1,t_2,\ldots,t_n)}
  {s_1 \ired t_1 & \cdots & s_n \ired t_n}
  &\hspace{.4cm}&
  \infer=
  {s \ireddownmfin s}
  {}
  \end{aligned}
  \label{rules:restrict}
\end{align}
Here $\ireddown$ stands for infinitary rewriting below the root,
and $\ireddownfin$ is its marked version.
The symbol $\ireddownmfin$ stands for both $\ireddown$ and $\ireddownfin$.
Correspondingly, the rule in the middle is an abbreviation for two rules.
The axiom ${s \ireddown s}$ serves to `restore' reflexivity, that is,
it models the identity steps in $\down{S}$ in~\eqref{eq:intro:ired}.
Intuitively, $s \ireddownfin t$ can be thought of as an infinitary rewrite sequence 
below the root, shorter than the sequence we are defining.

We have an infinitary strongly convergent rewrite sequence from $s$ to $t$ 
if and only if $s \ired t$ can be derived by the rules~\eqref{rules:restrict}
in a (not necessarily well-founded) proof tree without infinite nesting of $\ireddownfin$,
that is, proof trees in which all paths (ascending through the proof tree) contain only
finitely many occurrences of $\ireddownfin$.
The depth requirement in the definition of strong convergence
arises naturally in the rules~\eqref{rules:restrict}, 
in particular the middle rule 
pushes the activity to the arguments.
The fact that the rules~\eqref{rules:restrict} 
capture the infinitary rewriting relation $\ired$
is a consequence of a result due to~\cite{kenn:klop:slee:vrie:1995a}
which states that every strongly convergent rewrite sequence
contains only a finite number of steps at any depth $d \in \nat$, 
in particular only a finite number of root steps~$\rstep$.
Hence every strongly convergent reduction is of the form ${(\ireddownfin \relcomp \rstep)^*} \relcomp \ireddown$
as in the premise of the first rule,
where the steps $\ireddownfin$ are reductions of shorter length.

We conclude with an example of a TRS that allows for a rewrite sequence of length beyond $\omega$.

\begin{figure}[h!]
  \begin{framed}
  \begin{align*}
    \infer=
    {\fun{a} \ired \fun{C}^\omega}
    {
      \fun{a} \rstep \fun{C}(\fun{a})
      &
      \infer=
      {\fun{C}(\fun{a}) \ireddown \fun{C}^\omega}
      {\infer={\fun{a} \ired \fun{C}^\omega}
          {\makebox(0,0){
            \hspace{17mm}\begin{tikzpicture}[baseline=13.5ex,xscale=1.3,yscale=1.3]
            \draw [->,thick,dotted] (0,0) -- (0,1mm) to[out=90,in=100] (11mm,-1mm) to[out=-80,in=-20,looseness=1.4] (-0mm,-13mm);
            \end{tikzpicture}
          }}
      }
    }
  \end{align*}
  \end{framed}
  \caption{A reduction $\fun{a} \ired \fun{C}^\omega$.}
  \label{fig:aComega}
\end{figure}

\begin{figure}[h!]
  \begin{framed}
  \begin{align*}
    \infer=
    {\fun{f}(\fun{a},\fun{b}) \ired \fun{D}}
    {
      \infer=
      {\fun{f}(\fun{a},\fun{b}) \ireddownfin \fun{f}(\fun{C}^\omega,\fun{C}^\omega)}
      {
        \infer=
        {\fun{a} \ired \fun{C}^\omega}
        {\text{like Figure~\ref{fig:aComega}}}
        &
        \infer=
        {\fun{b} \ired \fun{C}^\omega}
        {\text{like Figure~\ref{fig:aComega}}}
      }
      & 
      \fun{f}(\fun{C}^\omega,\fun{C}^\omega) \rstep \fun{D}
    }
  \end{align*}
  \end{framed}
  \caption{A reduction $\fun{f}(\fun{a},\fun{b}) \ired \fun{D}$.}
  \label{fig:fab}
\end{figure}
  
\begin{exa}\label{ex:fab}
  We consider the term rewriting system from~\cite{ders:kapl:plai:1991} with the following rules:
  \begin{align*}
    \fun{f}(x,x) &\to \fun{D} & 
    \fun{a} &\to \fun{C}(\fun{a}) &
    \fun{b} &\to \fun{C}(\fun{b}) \,.
  \end{align*}
  We then have $\fun{a} \ired \fun{C}^\omega$, that is, an infinite reduction from $\fun{a}$ to $\fun{C}^\omega$ in the limit:
  \begin{align*}
    \fun{a} \to \fun{C}(\fun{a}) \to \fun{C}(\fun{C}(\fun{a})) \to \fun{C}(\fun{C}(\fun{C}(\fun{a}))) \to \cdots 
    \to^\omega \fun{C}^\omega \,.
  \end{align*}
  Using the proof rules~\eqref{rules:restrict}, we can derive $\fun{a} \ired \fun{C}^\omega$ as shown in Figure~\ref{fig:aComega}.
  The proof tree in Figure~\ref{fig:aComega} can be described as follows:
  We have an infinitary rewrite sequence from $\fun{a}$ to~$\fun{C}^\omega$
  since we have a root step from $\fun{a}$ to $\fun{C}(\fun{a})$, and
  an infinitary reduction below the root from $\fun{C}(\fun{a})$ to $\fun{C^\omega}$.
  The latter reduction $\fun{C}(\fun{a}) \ireddown \fun{C^\omega}$ is in turn witnessed
  by the infinitary rewrite sequence $\fun{a} \ired \fun{C}^\omega$ 
  on the direct subterms.

  We also have the following reduction, now of length $\omega+1$:
  \begin{align*}
    \fun{f}(\fun{a},\fun{b}) 
    \to \fun{f}(\fun{C}(\fun{a}),\fun{b}) 
    \to \fun{f}(\fun{C}(\fun{a}),\fun{C}(\fun{b}))
    \to \cdots \to^\omega \fun{f}(\fun{C}^\omega,\fun{C}^\omega)
    \to \fun{D} \,.
  \end{align*}
  That is, after an infinite rewrite sequence of length $\omega$, 
  we reach the limit term $\fun{f}(\fun{C}^\omega,\fun{C}^\omega)$,
  and we then continue with a rewrite step from $\fun{f}(\fun{C}^\omega,\fun{C}^\omega)$ to $\fun{D}$.
  Figure~\ref{fig:fab} shows how this rewrite sequence \mbox{$\fun{f}(\fun{a},\fun{b}) \ired \fun{D}$}
  can be derived in our setup.
  We note that the rewrite sequence $\fun{f}(\fun{a},\fun{b}) \ired \fun{D}$
  cannot be `compressed' to length $\omega$. 
  So there is no reduction $\fun{f}(\fun{a},\fun{b}) \to^{\le \omega} \fun{D}$.
\end{exa}

\subsubsection*{Related Work}
While % the basic idea of 
a coinductive treatment of infinitary rewriting is not new~\cite{coqu:1996,joac:2004,endr:polo:2012b},
the previous approaches % have in common that they 
only capture rewrite sequences of length at most $\omega$. 
The coinductive framework that we present here captures all strongly
convergent rewrite sequences of arbitrary ordinal length.

From the topological perspective, various notions of infinitary rewriting 
and infinitary equational reasoning have been studied in~\cite{kahr:2013}.
%
% We note that, due to strong convergence, none of the rewrite notions %
% considered in our paper are pointwise forward closed in general. 
The closure operator $S_E$ from \cite{kahr:2013} is closely related to our notion of infinitary equational reasoning $\ieq$.
The operator $S_E$ is defined by $S_E(R) = (S \funcomp E)^{\star}(R)$ where 
\begin{enumerate}
 \item $E(R)$ is the equivalence closure of $R$, and
 \item $S(R)$ is the strongly convergent rewrite relation obtained from (single steps) $R$,
 \item and $f^{\star}(R)$ is defined as $\mu x.\,R\cup f(x)$.
\end{enumerate}
Although defined in very different ways, 
the relations $S_E(\to)$ and $\ieq$ typically coincide.
% In Section~\ref{sec:venn} we show that ${S_E(\to)} \subseteq {\ieq}$ for every rewrite system
% and we give an example for which the inclusion is strict. 
%

In~\cite{lomb:rios:vrij:2014}, Lombardi, R\'{\i}os and de~Vrijer introduce
infinitary equational reasoning based on limits 
to reason about permutation equivalence of infinitary reductions that are modelled by proof terms.

Martijn Vermaat has formalized infinitary rewriting using metric convergence (in place of strong convergence) 
in the Coq proof assistant~\cite{verm:2010}, and proved that weakly orthogonal infinitary rewriting 
does not have the property $\mathrm{UN}$ of unique normal forms, see~\cite{endr:hend:grab:klop:oost:2014}.
While his formalization could be extended to strong convergence,
it remains to be investigated to what extent it can be used 
for the further development of the theory of infinitary rewriting.

Ketema and Simonsen~\cite{kete:simo:2013} 
introduce the notion of `computable infinite reductions'~\cite{kete:simo:2013},
where terms as well as reductions are computable,
and provide a Haskell implementation of the Compression Lemma for this notion of reduction.

This current paper is an extended version of~\cite{endr:hans:hend:polo:silv:2015}. 
The most important changes are:%
\begin{enumerate}
%   \item 
%     We have extended the introduction of coinduction (Section~\ref{sec:coinduction})
%     to make the paper accessible to readers that are not familiar with coinduction. \todo{Should we remove this item?}
%   \item 
%     We have included a detailed comparison of 
%     our notion of infinitary equational reasoning~$\ieq$
%     with the notion $S_{E}(\to)$ from~\cite{kahr:2013}, see Section~\ref{sec:venn}.
  \item 
    We have introduced a novel notion of permutation equivalence on infinitary rewrite sequences,
    which we call \emph{parallel permutation equivalence}. 
    We show that two rewrite sequences are parallel permutation equivalent if and only if
    they are represented by the same proof tree in our framework, see Section~\ref{sec:permute}.
  \item 
    We have rewritten and extended the description of the Coq formalisation of the Compression Lemma (Section~\ref{sec:coq}).
\end{enumerate}

\subsubsection*{Outline}
In Section~\ref{sec:prelims} we introduce infinitary rewriting in the usual way
based on ordinals,
and with convergence at every limit ordinal.
Section~\ref{sec:coinduction} is a short explanation of (co)induction and fixed-point rules.
The two new definitions of infinitary rewriting~$\ired$ based on
mixing induction and coinduction, as well as their equivalence, are spelled out in Section~\ref{sec:itrs}.
Then, in Section~\ref{sec:equivalence},
we prove the equivalence of these new definitions of infinitary rewriting
with the standard definition.
In Section~\ref{sec:ieq} we present the above introduced relations~$\ieq$ and~$\ibi$
of infinitary equational reasoning and bi-infinite rewriting.
In Section~\ref{sec:venn} we compare the three relations $\ieq$, $\ibi$ and $\ired$.
In Section~\ref{sec:permute} we present our new work on parallel permutation equivalence and canonical proof trees.
As an application, we show in Section~\ref{sec:coq} that our framework
is suitable for formalizations in theorem provers.
We conclude in Section~\ref{sec:conclusion}.

\section{Preliminaries on Term Rewriting}\label{sec:prelims}

We give a brief introduction to infinitary rewriting.
For further reading on infinitary rewriting we refer to
\cite{klop:vrij:2005,tere:2003,bare:klop:2009,endr:hend:klop:2012},
for an introduction to finitary rewriting to
\cite{klop:1992,tere:2003,baad:nipk:1998,bare:1977}.

A \emph{signature $\Sigma$} is a set of symbols $f$ each having a fixed arity $\arity{f} \in \nat$.
Let $\avars$ be an infinite set of variables such that $\avars \cap \Sigma = \emptyset$.
The set $\iter{\Sigma}{\avars}$ of (finite and) \emph{infinite terms} 
over $\Sigma$ and $\avars$ is coinductively defined by the following grammar:
\begin{align*}
  t \coBNFis x \BNFor f(\underbrace{t,\ldots,t}_{\text{$\arity{f}$ times}}) \; \text{($x \in \avars$, $f \in \Sigma$)} \,.
\end{align*}
This means that $\iter{\Sigma}{\avars}$ is defined as the largest set $T$ such that 
for all $t \in T$, either $t \in \avars$ or $t = f(t_1,t_2,\ldots,t_n)$ for some $f \in \Sigma$ with $\arity{f} = n$
and $t_1,t_2,\ldots,t_n \in T$.
So the grammar rules may be applied an infinite number of times, and equality on the terms is bisimilarity.
See further Section~\ref{sec:coinduction} for a brief introduction to coinduction.

We write $\id$ for the identity relation on terms, $\id \defd \{\pair{s}{s} \mid s \in \iter{\Sigma}{\avars}\}$.

\begin{rem}
  Alternatively, the set $\iter{\Sigma}{\avars}$ arises from the set of finite terms, $\ter{\Sigma}{\avars}$, 
  by metric completion,
  using the well-known distance function $\smetric$ 
  defined by
  $\metric{t}{s} = 2^{-n}$ if the $n$-th level of the terms $t,s \in \ter{\Sigma}{\avars}$ (viewed as labeled trees) 
  is the first level where a difference appears, 
  in case $t$ and $s$ are not identical; furthermore, $\metric{t}{t} = 0$.
  It is standard that this construction yields $\pair{\ter{\Sigma}{\avars}}{\smetric}$ as a metric space. 
  Now, infinite terms are obtained by taking the completion of this metric space, 
  and they are represented by infinite trees. 
  We will refer to the complete metric space arising in this way as $\pair{\iter{\Sigma}{\avars}}{\smetric}$, 
  where $\iter{\Sigma}{\avars}$ is the set of finite and infinite terms over~$\Sigma$.
\end{rem}

Let $t \in \iter{\Sigma}{\avars}$ be a finite or infinite term.
The set of \emph{positions $\pos{t}\subseteq \nat^*$ of $t$} is
defined by: $\posemp \in \pos{t}$, and 
$i p \in \pos{t}$ whenever $t = f(t_1,\ldots,t_n)$ 
with $1 \le i \le n$ and $p \in \pos{t_i}$.
For $p \in \pos{t}$, the \emph{subterm} $\subtrm{t}{p}$ of $t$ at position $p$
is defined by
$\subtrm{t}{\posemp} = t$ and
$\subtrm{f(t_1,\ldots,t_n)}{ip} = \subtrm{t_i}{p}$.
The set of \emph{variables $\vars{t}\subseteq \avars$ of $t$} is %
$\vars{t} = \{x \in \avars \mid \exists \, p\in \pos{t}.\,\subtrm{t}{p} = x\}$.

A \emph{substitution $\asubst$} is a map $\asubst : \avars \to \iter{\Sigma}{\avars}$;
its domain 
is extended to
$\iter{\Sigma}{\avars}$ essentially by corecursion: %
$\asubst(f(t_1,\ldots,t_n)) = f(\asubst(t_1),\ldots,\asubst(t_n))$
(cf.~\cite[Example~2.5(iv) and Remark 3.2]{Milius:CIA}).
For a term~$t$ and a substitution~$\asubst$, we write $t\sigma$ for $\sigma(t)$.
We write $x \mapsto s$ for the substitution defined by
$\asubst(x) = s$ and $\asubst(y) = y$ for all $y \ne x$.
Let $\hole$ be a fresh variable.
A \emph{context} $C$ is a term $\iter{\Sigma}{\avars \cup \{\hole\}}$
containing precisely one occurrence of %
$\hole$.
For contexts $C$ and terms $s$ we write $C[s]$ for $C(\hole \mapsto s)$.

A \emph{rewrite rule $\ell \to r$} over $\Sigma$ and $\avars$ is a pair 
$(\ell,r)$
of terms $\ell,r\in\iter{\Sigma}{\avars}$ such that the left-hand side $\ell$ is not a variable ($\ell \not\in \avars$),
and all variables in the right-hand side $r$ occur in $\ell$, $\vars{r} \subseteq \vars{\ell}$.
Note that we require neither the left-hand side nor the right-hand side of a rule
to be finite.

A \emph{term rewriting system (TRS) $\atrs$} over $\Sigma$ and $\avars$
is a set of rewrite rules over $\Sigma$ and $\avars$.
A TRS induces a rewrite relation on the set of terms as follows.
For $\apos \in \nat^\ast$ we define ${\rerat{\atrs}{\apos}} \subseteq \iter{\Sigma}{\avars} \times \iter{\Sigma}{\avars}$, 
a \emph{rewrite step at position $\apos$}, by
$
  C[\ell\sigma] \rerat{\atrs}{\apos} C[r\sigma]
$
if $C$ is a context with $\subtrm{C}{\apos} = \hole$,\; $\ell \to r \in \atrs$, and $\sigma : \avars \to \iter{\Sigma}{\avars}$.
We write $\rstep$ for \emph{root steps}, %
${\rstep} = \{\,(\ell\sigma,r\sigma) \mid \ell \to r \in \atrs,\; \text{$\sigma$ a substitution}\,\}$.
We write $s \to_{\atrs} t$ if $s \rerat{\atrs}{\apos} t$ for some $\apos\in\nat^\ast$.
A \emph{normal form} is a term without a redex occurrence,
that is, a term that is not of the form $C[\ell\sigma]$ for some context $C$, 
rule $\ell \to r\in \atrs$ and substitution $\sigma$.

A natural consequence of this construction is %
the notion of \emph{weak convergence}: 
we say that $t_0 \to t_1 \to t_2 \to \cdots$ is an infinite reduction sequence with limit $t$, 
if $t$ is the limit of the sequence $t_0,t_1,t_2, \ldots$ in the usual sense of metric convergence. 
In contrast, the central notion of \emph{strong convergence}
requires, in addition to weak convergence, that the depth of the redexes contracted in
successive steps tends to infinity when approaching a
limit ordinal from below.
This condition rules out the possibility that the action of redex contraction stays confined at the top, 
or stagnates at some finite level of depth. 

\begin{defi}\label{def:itrs:standard}
  A \emph{transfinite rewrite sequence} (of ordinal length $\alpha$)
  consists of an initial term $t_0$ and a sequence of rewrite steps 
  $(t_{\beta} \rerat{\atrs}{\apos_{\beta}} t_{\beta+1})_{\beta < \alpha}$
  such that for every limit ordinal $\lambda < \alpha$ we have that 
  if $\beta$ approaches $\lambda$ from below, then
  \begin{enumerate}[label=(\roman*)]
    \item\label{item:distance}
      the distance $\metric{t_\beta}{t_{\lambda}}$ tends to $0$ 
      and, moreover,
    \item\label{item:depth}
      the depth of the rewrite action, i.e., 
      the length of the position $\apos_\beta$, 
      tends to infinity.
  \end{enumerate}
  The sequence is called \emph{strongly convergent} 
  if $\alpha$ is a successor ordinal, 
  or there exists a term $t_\alpha$ such that
  the conditions~\ref{item:distance} and~\ref{item:depth}
  are fulfilled for every limit ordinal $\lambda \leq \alpha$;
  we then write $t_0\iredord t_\alpha$.
  The subscript $\mit{ord}$ is used in order to distinguish 
  $\iredord$ from the equivalent %
  relation $\ired$ 
  as defined in Definition~\ref{def:ired:fixedpoint}. 
  We sometimes write $t_0\redord^{\alpha} t_\alpha$ 
  to explicitly indicate the length $\alpha$ of the sequence.
  The sequence is called \emph{divergent} if it is not strongly convergent.
\end{defi}

There are several reasons why strong convergence is beneficial; 
the foremost being that in this way we can define the notion of \emph{descendant} 
(also \emph{residual}) over limit ordinals. 
Also the well-known Parallel Moves Lemma
and the Compression Lemma
fail for weak convergence, see~\cite{simo:2004} and \cite{ders:kapl:plai:1991} respectively.

\section{(Co)induction, Fixed Points and Relations}\label{sec:coinduction}

We briefly introduce the relevant concepts from
(co)algebra and (co)induction that will be used later throughout this
paper. 
For a more thorough introduction, we refer to \cite{jaco:rutt:2011}.
There will be two main points where coinduction will play a role, in the definition of terms and in the definition 
of term rewriting. 

Terms are usually defined with respect to a type constructor~$F$. 
For instance, consider the type of lists with elements in a given set $A$, given in a functional programming style:
\begin{verbatim}
  type List a = Nil | Cons a (List a)
\end{verbatim}
The above grammar corresponds to the type constructor
$F(X) = 1 + A \times X$ where the $1$ is used as a placeholder for the empty list {\tt Nil} 
and the second component represents the {\tt Cons} constructor.
Such a grammar can be interpreted in two ways: 
The \emph{inductive} interpretation yields as terms the set of finite lists,
and corresponds to the \emph{least fixed point} of~$F$.
The \emph{coinductive} interpretation yields as terms
the set of all finite or infinite lists,
and corresponds to the \emph{greatest fixed point} of~$F$.
More generally, the inductive interpretation of a type constructor 
yields closed finite terms (with well-founded syntax trees), and
dually, the coinductive interpretation 
yields closed possibly infinite terms.
For readers familiar with the categorical definitions of algebras
and coalgebras, these two interpretations amount to defining
closed finite terms as the \emph{initial $F$-algebra}, 
and closed possibly infinite terms as the \emph{final $F$-coalgebra}.

In order to formally define finite and infinite terms
over a signature $\Sigma$ and a set of variables $\avars$, consider
the associated type constructor $G_{\Sigma,\avars}(Y) = X + F_\Sigma(Y)$
where $F_\Sigma(Y) = \{ f(y_1, \ldots, y_n) \mid f \in \Sigma, y_1,\ldots,y_n \in Y, n=\arity{f}\}$.
Then $\ter{\Sigma}{\avars}$ is the least fixed point of $G_{\Sigma,\avars}$
and $\iter{\Sigma}{\avars}$ is the greatest fixed point of $G_{\Sigma,\avars}$.

Equality on finite terms is the expected syntactic/inductive definition. 
Equality of possibly infinite terms is \emph{bisimilarity}. 
For instance, in the above example, two finite or infinite lists are equal 
if and only if they are related by a {\tt List}-bisimulation, 
which is a relation $R \subseteq $ {\tt List a $\times$ List a} 
such that for all pairs in $R$ are of the form
\begin{enumerate}
  \item $(\texttt{Nil},\texttt{Nil})$, or
  \item $(\mathtt{Cons}\, a \, \sigma,\mathtt{Cons}\, b \,\tau)$ such that $a = b$ and $(\sigma,\tau) \in R$.
\end{enumerate}\medskip

\noindent Throughout this paper, we define and reason about relations on the set
$T := \iter{\Sigma}{\avars}$ of terms.
Such relations are elements of the powerset of $T \times T$, which we view as a
complete lattice $\tlat := \Pow(T \times T)$ in which joins and meets are given by
unions and intersections of relations.
Relations on terms can thus be defined as least and greatest fixed points
of monotone operators on $\tlat$, using the Knaster-Tarski fixed point theorem.
In $L$, an \emph{inductively defined relation} is 
a least fixed point $\lfp{X}{F(X)}$ of a monotone $F : \tlat \to \tlat$.
Dually, a \emph{coinductively defined relation} is
a greatest fixed point $\gfp{X}{F(X})$ of a monotone $F : \tlat \to \tlat$.
We will make frequent use of the fact
that $\gfp{Y}{F(Y)}$ is the greatest post-fixed point of $F$, that is,
\begin{gather}
\gfp{Y}{F(Y)} = \bigcup\,\{\, X \in L \mid X \subseteq F(X) \,\},
\label{eq:gfp-as-greatest-post-fp}
\end{gather}
and $\lfp{Y}{F(Y)}$ is the least pre-fixed point of $F$, that is,
\begin{gather}
\lfp{Y}{F(Y)} = \bigcap\,\{\, X \in L \mid F(X) \subseteq X \,\}
\label{eq:lfp-as-last-pre-fp}
\end{gather}
The above properties can be expressed as the following fixed point rules:
\begin{gather}
  \begin{aligned}
    \frac{X \subseteq F(X)}{X \subseteq \gfp{Y}{F(Y)}}(\nu\text{-rule}) 
    &&\qquad\qquad\qquad&&
    \frac{F(X) \subseteq X}{\lfp{Y}{F(Y)} \subseteq X}(\mu\text{-rule})
  \end{aligned}
  \label{eq:coind-ind-rules}
\end{gather}
These proof rules, in fact, show the connection to the more abstract categorical notions of induction and coinduction. This can be seen by viewing $L$ as a partial order $(L,\subseteq)$. A partial order $(P, \leq)$ can, in turn, be seen as a category in which the objects are the elements of $P$ and there is a unique arrow $X \to Y$ if $X \leq Y$. A functor on $(P,\leq)$ is then nothing but a monotone map $F$; an $F$-coalgebra $X \to F(X)$ is a post-fixed point of $F$; and a final $F$-coalgebra is a greatest fixed point of $F$. Dually, an $F$-algebra $F(X) \to X$ is a pre-fixed point of $F$, and an initial $F$-algebra is a least fixed point of $F$. The two proof rules express the universal properties of these final and initial objects.

We will use a number of basic operations on relations. These include
union $(\cup)$, reflexive, transitive closure ($^*$),
relation composition in diagrammatic order $(\relcomp)$,
and relation lifting which we define now.
\begin{defi}\label{def:lifting}
  For a relation $R \subseteq T \times T$ %
  we define its \emph{lifting $\down{R}$} (with respect to $\Sigma)$ by
  \begin{align*}
    \down{R} \;\defd\; \{\,\pair{f(s_1,\ldots,s_n)}{\,f(t_1,\ldots,t_n)} \mid f \in \Sigma ,\, \arity{f} = n\,, s_1 \mathrel{R} t_1,\ldots,s_n \mathrel{R} t_n\,\}
                \,\cup\, \id \,.
  \end{align*}
\end{defi}\medskip

\noindent It is straightforward to verify that all these operations are monotone (in all arguments). Hence any map $F\colon L \to L$ built from these operations will have a unique least and greatest fixed point.

\section{New Definitions of Infinitary Term Rewriting}\label{sec:itrs}

We present two new definitions of infinitary rewriting \mbox{$s \ired t$},
based on mixing induction and coinduction, and prove their equivalence.
In Section~\ref{sec:equivalence} we show they are equivalent to the standard definition based on ordinals.
We summarize the definitions:
\begin{enumerate}[label=\emph{\Alph*}.]
  \item \emph{Derivation Rules.}
        First, we define $s \ired t$ via a syntactic restriction on the proof trees 
        that arise from the coinductive rules~\eqref{rules:restrict}.
        The restriction excludes all proof trees that contain ascending paths
        with an infinite number of marked symbols.
  \medskip
  \item \emph{Mixed Induction and Coinduction.}
        Second, we define $s \ired t$ based on mutually mixing induction and coinduction,
        that is, least fixed points $\mu$ and greatest fixed points $\nu$.
\end{enumerate}
\noindent
In contrast to previous coinductive definitions~\cite{coqu:1996,joac:2004,endr:polo:2012b}, 
the setup proposed here captures all strongly convergent rewrite sequences (of arbitrary ordinal length).

Throughout this section, we fix a signature $\Sigma$ and a term
rewriting system $\atrs$ over $\Sigma$. 
We also abbreviate $T \defd \iter{\Sigma}{\avars}$.
\begin{nota}\label{not:transitivity}
  Instead of introducing separate derivation rules for transitivity,
  we write a reduction of the form
  $s_0 \rightsquigarrow s_1 \rightsquigarrow \cdots \rightsquigarrow s_n$
  as a sequence of single steps: 
  \begin{align*}
    \infer=
    {\text{conclusion}}
    {s_0 \rightsquigarrow s_1 \quad s_1 \rightsquigarrow s_2 \quad\cdots\quad s_{n-1} \rightsquigarrow s_n}
  \end{align*}
  \noindent
  This allows us to write the subproof immediately above a single step.
\end{nota}

\subsection{Derivation Rules}

\begin{defi}\label{def:ired:restrict}
  We define the relation ${\ired} \subset T \times T$ as follows.
  We have $s \ired t$ if there exists a 
  (finite or infinite) proof tree $\delta$ deriving $s \ired t$ using the following five rules:
  \begin{align*}
    \infer=[\rsplit]
    {s \ired t}
    {s \mathrel{(\ireddownfin \cup \rstep)^*} \relcomp \ireddown t}
    &&  
    \infer=[\rlift]
    {f(s_1,s_2,\ldots,s_n) \ireddownmfin f(t_1,t_2,\ldots,t_n)}
    {s_1 \ired t_1 & \cdots & s_n \ired t_n}
    &&
    \infer=[\rid]
    {s \ireddownmfin s}
    {}
  \end{align*}
  such that $\delta$ does not contain an infinite nesting of $\ireddownfin$,
  that is, such that there exists no path ascending 
  through the proof tree that meets an infinite number of symbols $\ireddownfin$. 
  The symbol $\ireddownmfin$ stands for $\ireddown$ or $\ireddownfin$;
  so the second rule is an abbreviation for two rules; similarly for the third rule.
\end{defi}
In the above definition, we tacitly assume that the root steps are derived by 
axioms of the form
\begin{align}
  \infer=[\ell \to r \in \atrs,\; \text{$\sigma$ a substitution}]{
    \ell\sigma \rstep r\sigma
  }{
  } \label{eq:axiom:rstep}
\end{align}
For keeping the proof trees compact, 
we will just write $\ell\sigma \rstep r\sigma$ in the proof trees
not mentioning rule and substitution.

We give some intuition for the rules in Definition~\ref{def:ired:restrict}.
The relations $\ireddownfin$ and $\ireddown$ are infinitary reductions below the root.
We use $\ireddownfin$ for constructing parts of the prefix (between root steps), and
$\ireddown$ for constructing a suffix of the reduction that we are defining.
When thinking of ordinal indexed rewrite sequences $\sigma$, 
a suffix of $\sigma$ can have length equal to $\sigma$,
while the length of every prefix of $\sigma$ must be strictly smaller than the length of $\sigma$.
The five rules (\rsplit, and the two versions of \rlift\ and \rid) can be interpreted as follows:
\begin{enumerate}
  \item 
    The \rsplit-rule:
    the term $s$ rewrites infinitarily to $t$, $s \ired t$, 
    if $s$ rewrites to $t$ using a finite sequence of (a) root steps,
    and (b) infinitary reductions $\ireddown$ below the root
    --- where infinitary reductions preceding root steps must be shorter than the derived reduction.
    \smallskip
  \item 
    The \rlift-rules: 
    the term $s$ rewrites infinitarily to $t$ below the root, $s \ireddownmfin t$, %
    if the terms are of the shape $s = f(s_1,s_2,\ldots,s_n)$ and $t = f(t_1,t_2,\ldots,t_n)$
    and there exist reductions %
    between the arguments:
    $s_1 \ired t_1$, \ldots, $s_n \ired t_n$.
    \smallskip
  \item 
    The \rid-rules allow for the rewrite relations~$\ireddownmfin$ %
    to be reflexive,
    and this in turn yields reflexivity of $\ired$.
    For variable-free terms, reflexivity can already be derived using the other %
    rules.
    For terms with variables, this %
    rule is needed 
    (unless we treat variables as constant symbols).
\end{enumerate}
For an example of a proof tree, 
we refer to Example~\ref{ex:fab} in the introduction. %

\subsection{Mixed Induction and Coinduction}

The next definition is based on mixing induction and coinduction. 
The inductive part is used to model the restriction 
to finite nesting of $\ireddownfin$ in the derivations of Definition~\ref{def:ired:restrict}.
The induction corresponds to a least fixed point $\slfp$,
while a coinductive rule to a greatest fixed point $\sgfp$.

\begin{defi}\label{def:ired:fixedpoint}
  We define the relation ${\ired} \subseteq T \times T$ by
  \begin{align}
    \label{eq:ired:defn}
      {\ired} \;\;\defd\;\; \lfp{R}{\gfp{S}{(\rstep \cup \mathrel{\down{R}})^*\relcomp \down{S}}} \,.
  \end{align}
\end{defi}

We argue why ${\ired}$ is well-defined.
Let $L \defd \Pow(T\times T)$ be the set of all relations on terms.
Define functions $G : L \times L \to L$ and $F : L \to L$ by
\begin{align}
  G(R,S) \defd (\rstep \cup \mathrel{\down{R}})^*\relcomp \down{S}
    \quad\text{ and }\quad
  F(R) \defd \gfp{S}{G(R,S)} = \gfp{S}{(\rstep \cup \mathrel{\down{R}})^*\relcomp \down{S}} \,.
  \label{eq:G:F}
\end{align}
It can easily be verified that $F$ and $G$ are monotone, in all their arguments,
with respect to set-theoretic inclusion.
Hence $F$ and $G$ have unique least and greatest fixed points.

In particular, the relation ${\ired}$ given by \eqref{eq:ired:defn} is well-defined.
\subsection{Equivalence}

We show equivalence of Definitions~\ref{def:ired:restrict} and~\ref{def:ired:fixedpoint}.
Intuitively, the $\slfp R$ in the fixed point definition 
corresponds to the nesting restriction in the definition using derivation rules. 
If one thinks of Definition~\ref{def:ired:fixedpoint} as $\lfp{R}{F(R)}$
with $F(R) = \gfp{S}{G(R,S)}$
(see equation~\eqref{eq:G:F}), then $F^{n+1}(\emptyset)$ 
are all infinite rewrite sequences that can be derived 
using proof trees where the nesting depth of the marked symbol $\ireddownfin$ is at most $n$. 

To avoid confusion we write $\ired_{\nest}$ for the relation $\ired$ defined in Definition~\ref{def:ired:restrict},
and $\ired_{\fp}$ for the relation $\ired$ defined in Definition~\ref{def:ired:fixedpoint}.
We show ${\ired_{\nest}} = {\ired_{\fp}}$.
Definition~\ref{def:ired:restrict} requires that the nesting structure of $\ireddownfin_\nest$
in proof trees is well-founded. As a consequence, we can associate to every proof tree
a (countable) ordinal that allows to embed the nesting structure in an order-preserving way.
We use $\omega_1$ to denote the first uncountable ordinal,
and we view ordinals as the set of all smaller ordinals
(then the elements of $\omega_1$ are all countable ordinals).

\begin{defi}
  Let $\delta$ be a proof tree as in Definition~\ref{def:ired:restrict},
  and let $\alpha$ be an ordinal.
  An \emph{$\alpha$-labeling of $\delta$} 
  is a labeling of all symbols $\ireddownfin_\nest$ in $\delta$ with elements from $\alpha$
  such that
  each label is strictly greater than all labels occurring in the subtrees (all labels above).
\end{defi}

\begin{lem}\label{lem:nest}
  Every proof tree as in Definition~\ref{def:ired:restrict}
  has an $\alpha$-labeling for some $\alpha \in \omega_1$.
\end{lem}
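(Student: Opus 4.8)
The plan is to build the $\alpha$-labeling directly from the well-foundedness assumption on the nesting structure of $\ireddownfin$ occurrences, by a rank function. First I would make precise the relevant order. Fix a proof tree $\delta$ as in Definition~\ref{def:ired:restrict}. Let $N$ be the set of occurrences of the symbol $\ireddownfin_\nest$ in $\delta$ (each occurrence sits at a node of the tree where the first rule was applied with that occurrence appearing in the premise, not as the last step). Define a strict partial order $\prec$ on $N$ by putting $m \prec m'$ whenever $m$ occurs strictly above $m'$ in $\delta$, i.e., $m$ lies in the subtree rooted at the node carrying $m'$. The hypothesis of Definition~\ref{def:ired:restrict} — no ascending path meets infinitely many $\ireddownfin_\nest$ — says precisely that $\prec$ is well-founded: any infinite $\prec$-ascending chain $m_0 \prec m_1 \prec \cdots$ would trace out an ascending path through $\delta$ meeting infinitely many marked symbols.

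Next I would assign to each $m \in N$ its rank $\rho(m)$ in this well-founded order, defined by the usual recursion $\rho(m) = \sup\{\rho(m') + 1 \mid m' \prec m\}$ (with $\sup \emptyset = 0$). Well-foundedness guarantees $\rho$ is total and ordinal-valued, and by construction $\rho(m) > \rho(m')$ whenever $m' \prec m$, i.e., whenever $m'$ lies in a subtree above $m$ — which is exactly the condition demanded of an $\alpha$-labeling. It remains to bound all these ranks below a single countable ordinal. Here I would use that $\delta$ is a countable tree: the underlying term signature is countable-branching (each $f$ has finite arity) and the rule set is applied to produce at most countably many premises per node, so $\delta$ has countably many nodes, hence $N$ is countable. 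A well-founded partial order on a countable set has rank (order type of its rank function's image, or simply the supremum of ranks plus one) a countable ordinal: the rank of each element is countable, and a countable supremum of countable ordinals is countable (by regularity of $\omega_1$). Take $\alpha$ to be that supremum plus one; then $\alpha \in \omega_1$ and $\rho$ restricted to $N$ is an $\alpha$-labeling of $\delta$.

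The main obstacle — really the only point requiring care rather than routine unwinding — is justifying that $\delta$ has only countably many nodes, and hence that the bound $\alpha$ stays below $\omega_1$ rather than merely being some ordinal. This rests on the fact that although proof trees here need not be well-founded (they may be infinitely deep), they are still \emph{finitely branching at each node}: the \rsplit-rule has a finite sequence $(\ireddownfin \cup \rstep)^*\relcomp\,\ireddown$ as premise, i.e. finitely many premises; the \rlift-rule has $\arity{f}$ premises, finitely many; the \rid-rule is an axiom. A finitely branching tree of any (even infinite) depth has at most countably many nodes — it embeds into $\nat^*$ — so $N$ is countable and the argument of the previous paragraph applies. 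I would also note in passing that well-foundedness of $\prec$ is exactly the negation of "some ascending path in $\delta$ meets infinitely many $\ireddownfin_\nest$", so no separate lemma is needed to connect the hypothesis to well-foundedness; it is a direct reformulation once one observes that a $\prec$-chain and an ascending path are the same data viewed two ways.
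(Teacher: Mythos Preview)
Your argument is correct and follows essentially the same route as the paper: both isolate the set of $\ireddownfin$-occurrences, use the nesting restriction to see that the ancestor relation on them is well-founded, and bound the resulting ordinal below $\omega_1$ via countability of the (finitely branching) proof tree; the paper linearizes the well-founded partial order to a well-order rather than taking the rank function, but this is cosmetic. One slip worth fixing: you justify well-foundedness of $\prec$ by excluding infinite $\prec$-\emph{ascending} chains $m_0 \prec m_1 \prec \cdots$, but with your convention ($m \prec m'$ means $m$ lies above $m'$) such a chain moves \emph{toward the root} and is automatically finite; what the nesting restriction actually forbids, and what you need for the rank recursion, is an infinite $\prec$-\emph{descending} chain $m_0 \succ m_1 \succ \cdots$, which indeed traces an ascending path in $\delta$.
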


\begin{proof}
  Let $\delta$ be a proof tree and 
  let $L(\delta)$ be the set of positions of the symbol
  $\ireddownfin_\nest$ in $\delta$.
  For positions $p,q \in L(\delta)$ we write $p < q$ if $p$ is a strict prefix of $q$.
  Then we have that $<^{-1}$ is well-founded, that is, 
  there is no infinite sequence $p_0 < p_1 < p_2 < \cdots$ with $p_i \in L(\delta)$ ($i \ge 0$)
  as a consequence of the nesting restriction on
  $\ireddownfin_\nest$. 

  By transfinite recursion, the well-founded order on $L(\delta)$
  extends to a well-order, isomorphic to some ordinal $\alpha$ --- 
  and $\alpha < \omega_1$ since $L(\delta)$ is a countable set.
\end{proof} 

\begin{defi}
  Let $\delta$ be a proof tree as in Definition~\ref{def:ired:restrict}.
  We define the \emph{nesting depth} of $\delta$ as 
  the least ordinal $\alpha \in \omega_1$ such that $\delta$ admits an $\alpha$-labeling.
  For every $\alpha \le \omega_1$, we define a relation
  ${\ired_{\alpha,\nest}}  \subseteq {\ired_{\nest}}$
  as follows:
  $s \ired_{\alpha,\nest} t$ whenever $s \ired_\nest t$
  can be derived using a proof with nesting depth $< \alpha$.
  Likewise we define relations
  ${\ireddown_{\alpha,\nest}}$ and
  ${\ireddownfin_{\alpha,\nest}}$\,.
\end{defi}

As a direct consequence of Lemma~\ref{lem:nest} we have:
\begin{cor}\label{cor:nest:omega1}
  We have ${\ired_{\omega_1,\nest}} = {\ired_\nest}$.
\end{cor}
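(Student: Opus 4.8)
The plan is to unpack both sides of the claimed equality ${\ired_{\omega_1,\nest}} = {\ired_\nest}$ and observe that one inclusion is immediate from the definitions while the other is exactly the content of Lemma~\ref{lem:nest}. First I would recall that by definition $s \ired_{\omega_1,\nest} t$ holds iff $s \ired_\nest t$ can be witnessed by a proof tree whose nesting depth is an ordinal $\alpha < \omega_1$, i.e.\ a countable ordinal. Since by definition ${\ired_{\alpha,\nest}} \subseteq {\ired_\nest}$ for every $\alpha$, and a fortiori for $\alpha = \omega_1$, the inclusion ${\ired_{\omega_1,\nest}} \subseteq {\ired_\nest}$ is trivial and needs no argument beyond citing the definition.

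For the reverse inclusion ${\ired_\nest} \subseteq {\ired_{\omega_1,\nest}}$, suppose $s \ired_\nest t$. Then there is some proof tree $\delta$ as in Definition~\ref{def:ired:restrict} deriving $s \ired t$, and by construction $\delta$ contains no infinite nesting of $\ireddownfin$. By Lemma~\ref{lem:nest}, $\delta$ admits an $\alpha$-labeling for some $\alpha \in \omega_1$; in particular the nesting depth of $\delta$, being the least such ordinal, is itself a countable ordinal, so it is $< \omega_1$. Hence $s \ired_{\omega_1,\nest} t$ by the definition of $\ired_{\omega_1,\nest}$. Combining the two inclusions yields the equality.

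I do not anticipate any real obstacle here: the corollary is a bookkeeping consequence of Lemma~\ref{lem:nest} together with the definition of the stratified relations $\ired_{\alpha,\nest}$. The only point worth stating carefully is the identification ``nesting depth $< \omega_1$'' $\iff$ ``nesting depth is a countable ordinal'', which holds precisely because we view $\omega_1$ as the set of all countable ordinals; this is already set up in the text preceding Lemma~\ref{lem:nest}. So the proof is essentially a one-line appeal to the lemma plus a matching of definitions, and I would write it as such.
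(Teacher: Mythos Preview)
Your proposal is correct and matches the paper's approach exactly: the paper simply states the corollary as a direct consequence of Lemma~\ref{lem:nest}, and what you have written is the natural unpacking of that one-line justification. There is nothing to add.
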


\begin{thm}\label{thm:equiv:der:fp}
  Definitions~\ref{def:ired:restrict} and~\ref{def:ired:fixedpoint} 
  define the same relation, ${\ired_{\nest}} = {\ired_{\fp}}$. %
\end{thm}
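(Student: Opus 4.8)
The plan is to prove the two inclusions ${\ired_{\nest}} \subseteq {\ired_{\fp}}$ and ${\ired_{\fp}} \subseteq {\ired_{\nest}}$ separately, in each case exploiting the match between the nesting restriction of Definition~\ref{def:ired:restrict} and the least fixed point $\slfp R$ of Definition~\ref{def:ired:fixedpoint}. Recall $F(R) = \gfp{S}{G(R,S)}$ with $G(R,S) = (\rstep \cup \down{R})^* \relcomp \down{S}$, so that ${\ired_{\fp}} = \lfp{R}{F(R)} = \bigcup_{\alpha} F^{\alpha}(\emptyset)$ by the usual ordinal iteration of a monotone operator (and the iteration closes at some countable ordinal, in fact at $\omega_1$ at the latest). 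The key bookkeeping device is already set up in the excerpt: the stratification ${\ired_{\alpha,\nest}}$ by nesting depth, together with Corollary~\ref{cor:nest:omega1} giving ${\ired_{\omega_1,\nest}} = {\ired_{\nest}}$.

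For ${\ired_{\nest}} \subseteq {\ired_{\fp}}$, I would prove by transfinite induction on $\alpha$ that ${\ired_{\alpha,\nest}} \subseteq F^{\alpha}(\emptyset)$, and simultaneously that ${\ireddown_{\alpha,\nest}} \subseteq \down{F^{\alpha}(\emptyset)}$ restricted appropriately --- more precisely, that a proof tree of nesting depth $<\alpha$ for $s \ired_\nest t$ yields $s \mathrel{F^{\alpha}(\emptyset)} t$. The successor step is the heart of it: given such a tree, its root is an instance of \rsplit, so $s \mathrel{(\ireddownfin_{<\alpha,\nest} \cup \rstep)^* \relcomp \ireddown_{\le\alpha,\nest}} t$, where the $\ireddownfin$ occurrences have strictly smaller nesting depth (hence land in $F^{\beta}(\emptyset)$ for $\beta<\alpha$ by the induction hypothesis, after pushing the lifting through \rlift), while the trailing $\ireddown$ may reuse depth $\alpha$. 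One then observes that $\ireddown_{\le\alpha,\nest}$ on a subtree, built by \rlift\ and \rid, is contained in a post-fixed point of $S \mapsto G(F^{\beta<\alpha}(\emptyset), S)$, hence in the greatest such, which is $F(F^{\beta<\alpha}(\emptyset))$-ish; coinduction (the $\nu$-rule) is exactly what licenses going from "below-root reductions that may be as long as $\sigma$" to membership in the $\gfp{S}{\cdot}$. Limit stages are routine unions. Then Corollary~\ref{cor:nest:omega1} and the fact that the $F$-iteration is contained in $\lfp{R}{F(R)}$ finish this direction.

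For the converse ${\ired_{\fp}} \subseteq {\ired_{\nest}}$, I would show by transfinite induction on $\alpha$ that $F^{\alpha}(\emptyset) \subseteq {\ired_{\alpha+1,\nest}}$ (or some similar offset), the point being that each application of the outer $F$ — i.e.\ each descent into the least fixed point — corresponds to one more permitted level of $\ireddownfin$-nesting. Fix $\alpha$ and assume the claim for all $\beta<\alpha$; then $F^{\alpha}(\emptyset) = \gfp{S}{G(\bigcup_{\beta<\alpha}F^{\beta}(\emptyset), S)}$, so any pair in it is, by the post-fixed-point property~\eqref{eq:gfp-as-greatest-post-fp}, of the form $(\rstep \cup \down{R_0})^* \relcomp \down{F^{\alpha}(\emptyset)}$ with $R_0 = \bigcup_{\beta<\alpha}F^{\beta}(\emptyset)$; unfolding this data recursively (coinductively) builds a proof tree whose \rsplit-rule has $\ireddownfin$-steps coming from $R_0 \subseteq {\ired_{\alpha,\nest}}$ (nesting depth $<\alpha$ by the IH, delivered through \rlift) and whose trailing $\ireddown$-branch is obtained by corecursively re-expanding $\down{F^{\alpha}(\emptyset)}$ via \rlift\ down the term, using \rid\ at leaves; the resulting tree has no infinite $\ireddownfin$-nesting because every strictly-ascending chain of $\ireddownfin$-marked nodes forces a strict decrease in the associated ordinal from the $R_0$-membership, which is well-founded.

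The main obstacle I anticipate is making precise the translation between a single $\gfp{S}{G(R,S)}$-layer and the shape of a proof subtree — in particular verifying that the "last step is coinductive, all earlier steps are inductive/shorter" reading of $(\rstep \cup \down{R})^* \relcomp \down{S}$ really does correspond to the \rsplit-rule with its asymmetric treatment of prefix ($\ireddownfin$) versus suffix ($\ireddown$), and that the coinductive unfolding of $\down{S}$ down through the term structure terminates correctly against the \rlift/\rid\ rules without accidentally creating an infinite $\ireddownfin$-path. Getting the ordinal offsets between nesting depth and $F$-iteration stage to line up exactly (rather than just "some countable ordinal") is the place where one must be careful; using the coarse bound $\omega_1$ via Lemma~\ref{lem:nest} and Corollary~\ref{cor:nest:omega1} lets one avoid fighting over the precise correspondence and close the argument with $F^{\omega_1}(\emptyset) = \lfp{R}{F(R)}$ on one side and ${\ired_{\omega_1,\nest}} = {\ired_\nest}$ on the other.
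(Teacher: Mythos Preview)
Your approach is correct in outline but takes a genuinely different route from the paper's. You work with the ordinal approximants $F^{\alpha}(\emptyset)$ of the least fixed point and aim to line them up stage-by-stage with the nesting-depth stratification $\ired_{\alpha,\nest}$. The paper instead uses the Knaster--Tarski characterisations directly via the $\mu$- and $\nu$-rules of~\eqref{eq:coind-ind-rules}, and never mentions $F^{\alpha}(\emptyset)$ at all.

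Concretely: for ${\ired_{\fp}} \subseteq {\ired_{\nest}}$, the paper simply shows that $\ired_{\nest}$ is a pre-fixed point of $F$, i.e.\ $F(\ired_{\nest}) \subseteq \ired_{\nest}$, and invokes the $\mu$-rule. This is a single corecursive construction of a proof tree from a pair in $F(\ired_{\nest})$; no transfinite induction on approximants is needed. For ${\ired_{\nest}} \subseteq {\ired_{\fp}}$, the paper does use transfinite induction on the nesting depth $\alpha$, but the target at every stage is $\ired_{\fp}$ itself rather than $F^{\alpha}(\emptyset)$: since $\ired_{\fp} = F(\ired_{\fp})$ is the greatest fixed point of $G(\ired_{\fp},\_)$, the $\nu$-rule reduces the goal to showing $\ired_{\alpha,\nest} \subseteq G(\ired_{\fp}, \ired_{\alpha,\nest})$, which follows by unfolding the \rsplit-rule and applying the induction hypothesis to the $\ireddownfin$-subtrees.

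What the paper's approach buys is that it sidesteps precisely the obstacle you anticipate: there is no need to match ordinal offsets between nesting depth and $F$-iteration stage, nor to argue that the $F$-iteration closes by $\omega_1$. What your approach would buy, were you to carry it through with exact indices, is the finer statement that the two stratifications coincide level-by-level; but since you end up falling back on the coarse $\omega_1$ bound anyway, that advantage is not realised, and the paper's argument is the cleaner one.
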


\begin{proof}
  We begin with ${\ired_{\fp}} \subseteq {\ired_{\nest}}$.
  Recall that $F(\ired_\nest)$ is the greatest fixed point of $G(\ired_\nest,\_)$, see~\eqref{eq:G:F}.
  Also, we have ${\ireddown_{\nest}} = {\ireddownfin_{\nest}} = \down{\ired_{\nest}\vphantom{i}}$, 
  and hence
  \begin{align}
    F({\ired_{\nest}}) &= (\rstep \cup \mathrel{\down{\ired_{\nest}\vphantom{i}}})^* \relcomp \down{F({\ired_{\nest}})} 
    = (\rstep \cup \mathrel{\ireddownfin_{\nest}})^* \relcomp \down{F({\ired_{\nest}})}\\
    \down{F({\ired_{\nest}})} &= \id \cup \{\,\pair{f(\vec{s})}{\,f(\vec{t})} \mid \vec{s} \,\mathrel{F(\ired_{\nest})}\, \vec{t}\,\}
    \label{eq:id:or:not}
  \end{align}
  where $\vec{s}$, $\vec{t}$ abbreviate $s_1,\ldots,s_n$ and $t_1,\ldots,t_n$, respectively,
  and we write $\vec{s} \mathrel{R} \vec{t}$
  if we have $s_1 \mathrel{R} t_1,\ldots,s_n \mathrel{R} t_n$.
  Employing the $\mu$-rule from~\eqref{eq:coind-ind-rules},
  it suffices to show that $F({\ired_{\nest}}) \subseteq {\ired_{\nest}}$.
  Assume $\pair{s}{t} \in F({\ired_{\nest}})$.
  Then $\pair{s}{t} \in (\rstep \cup \mathrel{\ireddownfin_{\nest}})^* \relcomp \down{F({\ired_{\nest}})}$.
  Then there exists $s'$ such that $s \mathrel{(\rstep \cup \mathrel{\ireddownfin_{\nest}})^*} s'$
  and $s' \mathrel{\down{F({\ired_{\nest}})}} t$.
  Now we distinguish cases according to~\eqref{eq:id:or:not}:
  \begin{align*}
    \infer=[\rsplit] {s \ired t} {s \mathrel{(\rstep \cup \mathrel{\ireddownfin_{\nest}})^*} t & 
      \infer=[\rid] {t \ireddown t } {} }
    &&
    \infer=[\rsplit] {s \ired t} {s \mathrel{(\rstep \cup \mathrel{\ireddownfin_{\nest}})^*} s' & 
      \infer=[\rlift] {s' \ireddown t } {\delta_1 & \cdots & \delta_n} }
  \end{align*}
  Here, for $i \in \{1,\ldots,n\}$, $\delta_i$ is the proof tree for $s_i \ired t_i$
  obtained from $s_i \mathrel{F(\ired_{\nest})} t_i$ by corecursively applying the same procedure.

  Next we show that ${\ired_{\nest}} \subseteq {\ired_{\fp}}$.
  By Corollary~\ref{cor:nest:omega1} %
  it suffices to show ${\ired_{\omega_1,\nest}} \subseteq {\ired_{\fp}}$.
  We prove by well-founded induction on $\alpha \le \omega_1$ that
  ${\ired_{\alpha,\nest}} \subseteq {\ired_{\fp}}$.
  Since $\ired_{\fp}$ is a fixed point of $F$,
  we obtain ${\ired_{\fp}} = F(\ired_{\fp})$, and since $F(\ired_{\fp})$ 
  is the greatest fixed point of $G(\ired_{\fp},\_)$,
  using the $\nu$-rule from~\eqref{eq:coind-ind-rules},
  it suffices to show the inclusion
  \begin{itemize}[label=$(*)$]
    \item [$(*)$] ${\ired_{\alpha,\nest}} \subseteq G(\ired_{\fp},\ired_{\alpha,\nest}) \defd (\rstep \cup \mathrel{\down{\ired_{\fp}}})^*\relcomp \down{\ired_{\alpha,\nest}}$ \;.
  \end{itemize}\medskip

  Thus assume that $s \ired_{\alpha,\nest} t$,
  and let $\delta$ be a proof tree of nesting depth $\le \alpha$ deriving $s \ired_{\alpha,\nest} t$.
  The only possibility to derive $s \ired_{\nest} t$ is an application of the \rsplit-rule
  with the premise $s \mathrel{(\rstep \cup \ireddownfin_{\nest})^*} \relcomp \ireddown_{\nest} t$.
  Since $s \ired_{\alpha,\nest} t$, we have
  $s \mathrel{(\rstep \cup \ireddownfin_{\alpha,\nest})^*} \relcomp \ireddown_{\alpha,\nest} t$.
  Let $\tau$ be one of the steps $\ireddownfin_{\alpha,\nest}$ displayed in the premise.
  Let $u$ be the source of $\tau$ and $v$ the target,
  so $\tau : u \ireddownfin_{\alpha,\nest} v$.
  The step $\tau$ is derived either via the \rid-rule or the \rlift-rule.
  The case of the \rid-rule is not interesting since we then can drop $\tau$ from the premise.
  Thus let the step $\tau$ be derived using the \rlift-rule.
  Then the terms $u,v$ are of form $u = f(u_1,\ldots,u_n)$ and $v = f(v_1,\ldots,v_n)$
  and for every $1 \le i \le n$ we have $u_i \ired_{\beta,\nest} v_i$ for some $\beta < \alpha$.
  Thus by induction hypothesis we obtain $u_i \ired_{\fp} v_i$ for every $1 \le i \le n$,
  and consequently $u \mathrel{\down{\ired_{\fp}\vphantom{i}}} v$.
  We then have $s \mathrel{(\rstep \cup \mathrel{\down{\ired_{\fp}\vphantom{i}})^*}} \relcomp \ireddown_{\alpha,\nest} t$,
  and hence $s \mathrel{G(\ired_{\fp},\ired_{\alpha,\nest})} t$.
  This concludes the proof.
\end{proof}

\section{Equivalence with the Standard Definition}\label{sec:equivalence}

In this section we prove the equivalence of the coinductively defined 
infinitary rewrite relations~$\ired$ from
Definitions~\ref{def:ired:restrict} (and~\ref{def:ired:fixedpoint})
with the standard definition
based on ordinal length rewrite sequences with metric and strong convergence at every limit ordinal
(Definition~\ref{def:itrs:standard}).
The crucial observation is the following theorem from~\cite{klop:vrij:2005}:
\begin{thm}[\normalfont{Theorem 2 of~\cite{klop:vrij:2005}}]\label{thm:finite}
  A transfinite reduction is divergent if and only if for some $n \in \nat$
  there are infinitely many steps at depth $n$.
\end{thm}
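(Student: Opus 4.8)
The plan is to establish the biconditional by proving the implication ``infinitely many steps at some depth $\Rightarrow$ divergent'' directly, and its converse in contrapositive form. Throughout, fix a transfinite rewrite sequence $\sigma = (t_\beta \rerat{\atrs}{\apos_\beta} t_{\beta+1})_{\beta<\alpha}$; by Definition~\ref{def:itrs:standard} it already satisfies conditions~\ref{item:distance} and~\ref{item:depth} at every limit ordinal $\lambda < \alpha$, so only the behaviour at $\lambda = \alpha$ is in question, and $\sigma$ is divergent exactly when $\alpha$ is a limit ordinal for which no term $t_\alpha$ satisfies~\ref{item:distance} and~\ref{item:depth}. I will freely use that, for a fixed $n$, there are infinitely many steps of depth exactly $n$ if and only if there are infinitely many steps of depth $\le n$ (the latter is a finite union of the former over $0 \le m \le n$), so the two natural readings of ``at depth $n$'' are interchangeable.

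\textbf{From shallow steps to divergence.} Suppose the set $S \defd \{\beta < \alpha \mid |\apos_\beta| = n\}$ is infinite for some $n$. Since $S$ has no greatest element, $\lambda \defd \sup S$ is a limit ordinal, $S$ is cofinal in $\lambda$, and $\lambda \le \alpha$. If we had $\lambda < \alpha$, then along the cofinal set $S$ the depths are constantly $n$ and hence do not tend to infinity, so condition~\ref{item:depth} would fail at $\lambda$ --- impossible for a transfinite rewrite sequence. Therefore $\lambda = \alpha$, so $\alpha$ is a limit ordinal, and the same observation shows that condition~\ref{item:depth} fails at $\alpha$ for any choice of $t_\alpha$; hence $\sigma$ is divergent.

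\textbf{From finiteness at every depth to strong convergence.} Conversely, assume that for every $n$ there are only finitely many steps of depth $\le n$. If $\alpha$ is a successor ordinal, $\sigma$ is strongly convergent by definition, so assume $\alpha$ is a limit. Fix $d \in \nat$. The set $\{\beta < \alpha \mid |\apos_\beta| \le d\}$ is finite, hence (as $\alpha$ is a limit) has a strict upper bound $\beta_d < \alpha$; every step indexed by $\beta \ge \beta_d$ then acts at a position of length $> d$. Letting $d$ range over $\nat$, this immediately yields condition~\ref{item:depth} at $\alpha$. For condition~\ref{item:distance} I would prove, by transfinite induction on $\beta \in [\beta_d, \alpha)$, that $t_\beta$ coincides with $t_{\beta_d}$ on all positions of length $\le d$: the successor step uses that a rewrite step at a position of length $> d$ alters the term only at positions of length $> d$; the step at an intermediate limit ordinal $\mu < \alpha$ uses condition~\ref{item:distance} at $\mu$ (so that $t_\mu$ is the metric limit of the preceding $t_\beta$) together with the fact that the property ``the restriction to positions of length $\le d$ equals a fixed finite tree'' is closed under metric limits. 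It follows that the tail $(t_\beta)_{\beta \ge \beta_d}$ is constant down to depth $d$ for each $d$, hence $(t_\beta)_{\beta<\alpha}$ is Cauchy; since $\pair{\iter{\Sigma}{\avars}}{\smetric}$ is a complete metric space we may take $t_\alpha \defd \lim_{\beta \to \alpha} t_\beta$, which witnesses condition~\ref{item:distance} at $\alpha$. Thus $\sigma$ is strongly convergent.

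\textbf{Main obstacle.} The delicate part is the transfinite induction in the second implication: one has to thread the stabilization of the top $d$ levels through the intermediate limit ordinals $\mu < \alpha$, whose limit terms $t_\mu$ are already pinned down by Definition~\ref{def:itrs:standard}, and one must check that the sets of shallow steps are genuinely bounded strictly below $\alpha$ --- which is exactly where the hypothesis that $\alpha$ is a limit ordinal is used. Everything else is routine manipulation of ordinals and of the definition of the distance function $\smetric$.
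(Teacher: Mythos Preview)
The paper does not prove this theorem at all: it is quoted verbatim as Theorem~2 of~\cite{klop:vrij:2005} and then immediately used in the proof of Theorem~\ref{thm:ired:equiv}. So there is nothing in the paper to compare your argument against; you have supplied a proof where the authors chose to cite one.

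Your argument is essentially sound, but there is one small unjustified step in the first implication. You write ``Since $S$ has no greatest element, $\lambda := \sup S$ is a limit ordinal'': an infinite set of ordinals may very well have a greatest element (e.g.\ $\omega \cup \{\omega{+}5\}$), so this needs a word of justification. The clean fix is not to pass to $\sup S$ but to take the first $\omega$ elements $s_0 < s_1 < \cdots$ of $S$ (these exist because $S$ is infinite and well-ordered) and set $\mu := \sup_n s_n$; then $\mu$ is genuinely a limit ordinal with the $s_n$ cofinal in it, and your argument goes through with $\mu$ in place of $\lambda$. The second implication, including the transfinite induction through intermediate limit ordinals, is correctly set up and identifies the right subtlety.
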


We are now ready to prove the equivalence of both notions:
\begin{thm}\label{thm:ired:equiv}
  We have ${\ired} = {\iredord}$.
\end{thm}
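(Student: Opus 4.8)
The plan is to prove the two inclusions separately, using Theorem~\ref{thm:equiv:der:fp} to pass freely between the derivation-rule presentation $\ired_\nest$ and the fixed-point presentation $\ired_\fp$, and using Theorem~\ref{thm:finite} (the Klop--de~Vrijer characterisation of divergence) as the bridge connecting proof trees with ordinal-indexed rewrite sequences. For the direction ${\iredord} \subseteq {\ired}$, I would proceed by well-founded induction on the length $\alpha$ of a strongly convergent reduction $\sigma : t_0 \redord^{\alpha} t_\alpha$. By Theorem~\ref{thm:finite}, $\sigma$ has only finitely many root steps, so it factors as $\sigma = \sigma_0 \relcomp (\rstep \relcomp \sigma_1) \relcomp \cdots \relcomp (\rstep \relcomp \sigma_k)$ where each $\sigma_j$ contains no root step and hence takes place entirely below the root; moreover each $\sigma_j$ that precedes a root step is a strict prefix of $\sigma$ and so has length strictly smaller than $\alpha$, while the final segment $\sigma_k$ is a suffix and may have length $\alpha$. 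Each below-the-root segment $\sigma_j$ splits coordinate-wise into strongly convergent reductions on the $n$ arguments of the common root symbol (this is where strong convergence is essential: the depth condition guarantees that after finitely many steps all activity is within a single fixed argument, so descendants can be traced and the reduction genuinely decomposes), and those argument reductions again have length $\le$ that of $\sigma_j$. Applying the induction hypothesis to the strictly shorter prefix segments and corecursively building the proof tree for the (possibly equally long) suffix segment, the \rsplit- and \rlift-rules of Definition~\ref{def:ired:restrict} assemble a proof tree deriving $t_0 \ired_\nest t_\alpha$ with no infinite nesting of $\ireddownfin$ — the nesting is controlled by the ordinal $\alpha$ exactly as in Lemma~\ref{lem:nest}.

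For the converse ${\ired} \subseteq {\iredord}$, I would work with $\ired_\nest$ and induct on the nesting depth of the proof tree (using Corollary~\ref{cor:nest:omega1}, it suffices to handle all countable nesting depths $\alpha < \omega_1$). Given a proof tree $\delta$ for $s \ired_\nest t$, its root is a \rsplit-rule with premise $s \mathrel{(\rstep \cup \ireddownfin_\nest)^*} \relcomp \ireddown_\nest t$. Each $\ireddownfin_\nest$ step in this finite chain is derived by a \rlift whose immediate subtrees have strictly smaller nesting depth, so by the induction hypothesis each yields a genuine strongly convergent reduction on the arguments; splicing these into the shared context produces a strongly convergent reduction below the root, of some ordinal length. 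The final $\ireddown_\nest$ step is handled the same way but corecursively (its subtrees need not have smaller nesting depth, but they do have smaller depth in the term, so iterating produces activity tending to infinity). Concatenating the finitely many root steps with these below-the-root reductions gives an ordinal-indexed sequence from $s$ to $t$; one then checks the limit conditions~\ref{item:distance} and~\ref{item:depth} of Definition~\ref{def:itrs:standard} hold — condition~\ref{item:depth} because every infinite descending path in $\delta$ must eventually pass through a \rlift (the $\ireddownfin$-nesting being finite and the chains in each \rsplit being finite), which forces the depth of rewrite activity upward, and condition~\ref{item:distance} follows from~\ref{item:depth}.

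The main obstacle I anticipate is the bookkeeping in the ${\iredord} \subseteq {\ired}$ direction: making precise the claim that a below-the-root segment of a strongly convergent reduction decomposes into strongly convergent reductions on the arguments, and that the resulting recursion on argument-reductions is well-founded in the prefix direction while genuinely corecursive in the suffix direction. This requires the descendant/residual theory over limit ordinals — precisely the reason strong convergence was adopted — together with a careful choice of ordinal invariant (the length of the reduction, refined by term depth) to justify mixing the induction on prefixes with the coinduction on suffixes. The role of Theorem~\ref{thm:finite} is pivotal here: it is what guarantees the factorisation through finitely many root steps, without which the \rsplit-rule's finite premise could not capture the reduction. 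A secondary subtlety is verifying, in the reverse direction, that the ordinal length one reads off from a proof tree is well-defined and countable, but this is exactly Lemma~\ref{lem:nest} applied one level at a time, so it should go through routinely.
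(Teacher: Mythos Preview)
Your proposal is correct and follows essentially the same approach as the paper: both directions rely on Theorem~\ref{thm:finite} for the finite-root-step factorisation, ordinal-controlled induction (on the reduction length~$\alpha$ for ${\iredord} \subseteq {\ired}$, on the nesting depth for ${\ired} \subseteq {\iredord}$), and a corecursive/iterative treatment of the final below-root suffix. The paper makes the second direction slightly more concrete by explicitly constructing a sequence $s = s_0, s_1, s_2, \ldots$ with $s_n \iredord s_{n+1}$ taking place at depth~$\ge n$ and $\lim_n s_n = t$, which is exactly the iteration you allude to when you say the subtrees of the final $\ireddown$ ``have smaller depth in the term, so iterating produces activity tending to infinity.''
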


\begin{proof}
  We write $\iredorddown$ to denote a reduction $\iredord$ without root steps,
  and we write $\redord^\alpha$ and $\redorddown^\alpha$ to indicate the ordinal length $\alpha$.

  We begin with the direction ${\iredord} \subseteq {\ired}$.
  We define a function $\smktree$ (and $\smktreedownmfin$) 
  by guarded corecursion~\cite{coqu:1994},
  mapping rewrite sequences $s \redord^\alpha t$
  (and $s \redorddown^\alpha t$)
  to infinite proof trees derived using the rules from Definition~\ref{def:ired:restrict}.
  This means that every recursive call produces a constructor, contributing to the construction of the infinite tree.
  Note that the arguments of $\smktree$ (and $\smktreedownmfin$)
  are not required to be structurally decreasing.

  We do case distinction on the ordinal $\alpha$.
  If $\alpha = 0$, then $t = s$ and we define %
  \begin{align*}
    &\mktree{s \redord^0 s} 
    \;\;=\;\; \infer=[\rsplit] {s \ired s}{\mktreedown{s \redorddown^0 s}} \\[1ex]
    &\mktreedownmfin{x \redorddown^0 x} 
    \;\;=\;\; \infer=[\rid] {x \ireddownmfin x}{} \\
    &\mktreedownmfin{f(t_1,\ldots,t_n) \redorddown^0 f(t_1,\ldots,t_n)} 
    \;\;=\;\; \infer=[\rlift] {f(t_1,\ldots,t_n) \ireddownmfin f(t_1,\ldots,t_n)}{\mktree{t_1 \redord^0 t_1} & \cdots & \mktree{t_n \redord^0 t_n}}
  \end{align*}

\noindent If $\alpha > 0$, then, by Theorem~\ref{thm:finite} the rewrite sequence $s \redord^\alpha t$
  contains only a finite number of root steps.
  As a consequence, it is of the form:
  \begin{align*}
    s = s_0 \sto s_1 \cdots \sto s_{2n} \sto s_{2n+1} = t
  \end{align*}
  where for every $i \in \{0,\ldots,2n\}$:
  \begin{enumerate}
    \item for even $i$, $s_{i} \sto s_{i+1}$ is an infinite reduction below the root $S_i : s_i \redorddown^{\beta_i} s_{i+1}$, and
    \item for odd $i$, $s_{i} \sto s_{i+1}$ is a root step $s_i \rstep s_{i+1}$,
  \end{enumerate}
  where $\beta_i < \alpha$ if $i < 2n$ and $\beta_i \le \alpha$ if $i = 2n$.
  For $i < 2n$ we have $\beta_i < \alpha$  since every strict prefix must be shorter than the sequence itself.
  We define
  \begin{align*}
    \mktree{s \redord^\alpha t}
    \;\;=\;\; 
    \infer=[\rsplit] {s \ired t} { \delta_0 & \delta_1 & \cdots & \delta_{2n} }
  \end{align*}
  where, for $0 \le i < n$,
  \begin{align*}
    \delta_i = 
    \begin{cases}
      s_i \rstep s_{i+1} & \text{for odd $i$,} \\
      \mktreedownfin{S_i : s_i \redorddown^{\beta} s_{i+1}} & \text{for even $i$ with $i < 2n$,} \\
      \mktreedown{S_i : s_i \redorddown^{\beta} s_{i+1}} & \text{for even $i$ with $i = 2n$.}
    \end{cases}
  \end{align*} 
  
  For reductions $S : s \redorddown^\alpha t$ with $\alpha > 0$
  we have $s = f(s_1,\ldots,s_n)$ and $t = f(t_1,\ldots,t_n)$ for some $f \in \Sigma$ of arity~$n$
  and terms $s_1,\ldots,s_n,t_1,\ldots,t_n \in \iter{\Sigma}{\avars}$.
  Moreover, for every $i$ with $1 \le i \le n$, 
  there are rewrite sequences $S_i : s_i \redord^{\le \alpha} t_i$
  obtained by selecting from $S$ the subsequence of steps on the $i$-th argument.
  These steps are not necessarily consecutive, but selecting them nonetheless gives rise to a well-defined reduction.
  We define:
  \begin{align*}
    \mktreedownmfin{s \redorddown^\alpha t}
    \;\;=\;\;
    \infer=[\rlift] {s \ireddownmfin t}{\mktree{S_1 : s_1 \redord^{\le \alpha} t_1} & \cdots & \mktree{S_n : s_n \redord^{\le \alpha} t_n}}
  \end{align*}
  
  The obtained proof tree $\mktree{s \redord^\alpha t}$ derives $s \ired t$.
  To see that the requirement that there is no ascending path through this tree 
  containing an infinite 
  number of symbols $\ireddownfin$ is fulfilled,
  we note the following.
  The symbol $\ireddownfin$ is produced by $\mktreedownfin{s \redorddown^\beta t}$
  which is invoked in $\mktree{s \redord^\alpha t}$ for a $\beta$ that is strictly smaller than $\alpha$.
  By well-foundedness of $<$ on ordinals, no such path exists.

  We now show ${\ired} \subseteq {\iredord}$.
  We prove by well-founded induction on $\alpha \le \omega_1$ that
  ${\ired_{\alpha}} \subseteq {\iredord}$.
  This suffices since ${\ired} = {\ired_{\omega_1}}$.
  Let $\alpha \le \omega_1$ and assume that $s \ired_{\alpha} t$.
  Let $\delta$ be a proof tree of nesting depth $< \alpha$ witnessing $s \ired_{\alpha} t$.
  The only possibility to derive $s \ired t$ is an application of the \rsplit-rule
  with the premise $s \mathrel{(\rstep \cup \ireddownfin)^*} \relcomp \ireddown t$.
  Since $s \ired_{\alpha} t$, we have
  $s \mathrel{(\rstep \cup \ireddownfin_{\alpha})^*} \relcomp \ireddown_{\alpha} t$.
  By induction hypothesis we have 
  $s \mathrel{(\rstep \cup \iredord)^*} \relcomp \ireddown_{\alpha} t$,
  and thus
  $s \iredord \relcomp \ireddown_{\alpha} t$.
  We have ${\ireddown_{\alpha}} = {\down{\ired_{\alpha}\vphantom{i}}}$, and consequently
  $s \iredord s_1 \mathrel{\down{\ired_{\alpha}\vphantom{i}}} t$ for some term $s_1$.  
  Repeating this argument on
  $s_1 \mathrel{\down{\ired_{\alpha}\vphantom{i}}} t$, we get 
  $s \iredord s_1 \mathrel{\down{\iredord\vphantom{i}}} s_2 \mathrel{\down{\down{\ired_{\alpha}\vphantom{i}}}} t$.  
  After $n$ iterations, we obtain
  \begin{align*}
    s \iredord s_1 
    \mathrel{\down{\iredord\vphantom{i}}} s_2 
    \mathrel{\down{\down{\iredord\vphantom{i}}}} s_3
    \mathrel{\down{\down{\down{\iredord\vphantom{i}}}}} s_4
    \cdots \mathrel{({\iredord})^{-(n-1)}} s_n
    \mathrel{({\ired_{\alpha}})^{-n}} t
  \end{align*}
  where $({\ired_{\alpha}})^{-n}$
  denotes the $n$th iteration of $x \mapsto \down{x}$ on $\ired_{\alpha}$.
  
  Clearly, the limit of $\{s_n\}$ is $t$.  Furthermore, each of the reductions $s_n \iredord s_{n+1}$ 
  are strongly convergent and take place at depth greater than or equal to $n$.
  Thus, the infinite concatenation of these reductions yields a strongly convergent reduction from $s$ to $t$
  (there is only a finite number of rewrite steps at every depth $n$).
\end{proof}

\section{Infinitary Equational Reasoning and Bi-Infinite Rewriting}\label{sec:ieq}

\subsection{Infinitary Equational Reasoning}%

\begin{samepage}
\begin{defi}\label{def:ieq:rules}
  Let $\aes$ be a TRS over $\Sigma$, and let $T = \iter{\Sigma}{\avars}$.
  We define \emph{infinitary equational reasoning} as the relation~%
  ${\ieq} \subseteq T \times T$
  by the %
  mutually coinductive rules:
  \begin{align*}
    \infer=
    {s \ieq t}
    {s \mathrel{(\leftarrow_{\varepsilon} \cup \rstep \cup \ieqdown)^*} t}
    &&  
    \infer=
    {f(s_1,s_2,\ldots,s_n) \ieqdown f(t_1,t_2,\ldots,t_n)}
    {s_1 \ieq t_1 & \cdots & s_n \ieq t_n}
  \end{align*}
  where ${\ieqdown} \subseteq T \times T$ stands for infinitary equational reasoning below the root.
\end{defi}
\end{samepage}

Note that, in comparison with the rules~\eqref{rules:intro:ieq} for $\ieq$ from the introduction, 
we now have used $\leftarrow_\varepsilon \cup \rstep$ instead of $=_\aes$.
It is not difficult to see that this gives rise to the same relation.
The reason is that we can `push' non-root rewriting steps $=_\aes$ 
into the arguments of~$\ieqdown$.

\begin{exa}\label{ex:ieq}
  Let $\aes$ be a TRS consisting of the following rules:
  \begin{align*}
    \fun{a} & \to \fun{f}(\fun{a})  &
    \fun{b} & \to \fun{f}(\fun{b}) &
    \fun{C}(\fun{b}) & \to \fun{C}(\fun{C}(\fun{a})) \,.
  \end{align*}
  Then we have $\fun{a} \ieq \fun{b}$ as derived in Figure~\ref{fig:ieq:fagb} (top),
  and $\fun{C}(\fun{a}) \ieq \fun{C}^\omega$ as in Figure~\ref{fig:ieq:fagb} (bottom).
  \begin{figure}[h!]
    \begin{framed}\vspace{2ex}
    \begin{gather*}
    \infer=
    { \fun{a} \ieq \fun{b} }
    {
      \fun{a} \to_{\varepsilon} \fun{f}(\fun{a})
      &
      \infer=
      {\fun{f}(\fun{a}) \ieqdown \fun{f}^\omega}
      {
        \infer=
        {\fun{a} \ieq \fun{f}^\omega}
        {
          \fun{a} \to_{\varepsilon} \fun{f}(\fun{a})
          &
          \infer=
          {\fun{f}(\fun{a}) \ieqdown \fun{f}^\omega}
          {{\makebox(0,0){
                \hspace{13mm}\begin{tikzpicture}[baseline=15ex,yscale=1.4]
                \draw [->,thick,dotted] (-4mm,0) -- (-4mm,1mm) to[out=90,in=100] (8mm,-1mm) to[out=-80,in=-20,looseness=1.4] (-6mm,-13mm);
                \end{tikzpicture}
              }}}
        }
      }
      &&&&
      \infer=
      {\fun{f}^\omega \ieqdown \fun{f}(\fun{b})} 
      {
        \infer=
        {\fun{f}^\omega \ieq \fun{b}}
        {
          \infer=
          {\fun{f}^\omega \ieqdown \fun{f}(\fun{b})}
          {{\makebox(0,0){
                \hspace{-13mm}\begin{tikzpicture}[baseline=15ex,yscale=1.4]
                \draw [->,thick,dotted] (4mm,0) -- (4mm,1mm) to[out=90,in=80] (-8mm,-1mm) to[out=-100,in=-160,looseness=1.6] (6mm,-13mm);
                \end{tikzpicture}
              }}
          }
          &
          \fun{f}(\fun{b}) \leftarrow_{\varepsilon} \fun{b}
        }
      }
      &
      \fun{f}(\fun{b}) \leftarrow_{\varepsilon} \fun{b}
    }
    \\[2ex] %
    \infer=
    {\fun{C}(\fun{a}) \ieq \fun{C}^\omega}
    {
      \infer=
      {\fun{C}(\fun{a}) \ieqdown \fun{C}(\fun{b})} 
      {\infer={\fun{a} \ieq \fun{b}}{\text{(as above)}}}
      &
      \fun{C}(\fun{b}) \rstep \fun{C}(\fun{C}(\fun{a}))
      &
      \infer=
      {\fun{C}(\fun{C}(\fun{a})) \ieqdown \fun{C}^\omega}
      {\infer={\fun{C}(\fun{a}) \ieq \fun{C}^\omega}
            {\makebox(0,0){
              \hspace{22mm}\begin{tikzpicture}[baseline=17ex,xscale=1.3,yscale=1.4]
              \draw [->,thick,dotted] (0,0) -- (0,1mm) to[out=90,in=100] (15mm,-1mm) to[out=-80,in=-25,looseness=1.4] (-0mm,-14mm);
              \end{tikzpicture}
            }}
      }
    }
    \end{gather*}
    \end{framed}
    \caption{An example of infinitary equational reasoning, 
      deriving $\fun{C}(\fun{a}) \ieq \fun{C}^\omega$ in the TRS~$\aes$ of Example~\ref{ex:ieq}.
      Recall Notation~\ref{not:transitivity}.}
    \label{fig:ieq:fagb}
  \end{figure}
\end{exa}

Definition~\ref{def:ieq:rules} of \,$\ieq$\, can also be defined using 
a greatest fixed point as follows: 
\begin{align*}
  {\ieq} \;\;\defd\;\; \gfp{R}{(\leftarrow_{\varepsilon} \cup \rstep \cup \mathrel{\down{R}})^*}  \,,
\end{align*}
where $\down{R}$ was defined in Definition~\ref{def:lifting}.
The equivalence of these definitions 
can be established in a similar way as in Theorem~\ref{thm:equiv:der:fp}.
%It is easy to verify that the function 
As remarked at the end of section~\ref{sec:coinduction}, the map
$R \mapsto (\leftarrow_{\varepsilon} \cup \rstep \cup \mathrel{\down{R}})^*$
is monotone, and consequently the greatest fixed point exists. 

We note that, in the presence of collapsing rules (i.e., rules $\ell \to r$ where $r \in \avars$),
everything becomes equivalent: $s \ieq t$ for all terms $s,t$. 
For example, having a rule $\tail(x) \to x$ we obtain that
$s \ieq \tail(s) \ieq \tail^2(s) \ieq \cdots \ieq \tail^\omega$ for every term $s$.
This can be overcome by forbidding certain infinite terms and certain infinite limits.

\subsection{Bi-Infinite Rewriting}%

Another notion that arises naturally in our setup
is that of bi-infinite rewriting,
allowing rewrite sequences to extend infinitely forwards and backwards.
We emphasize that each of the steps $\rstep$ 
in such sequences is a forward step.

\begin{defi}\label{def:ibi:rules}
  Let $\atrs$ be a term rewriting system over $\Sigma$,
  and let $T= \iter{\Sigma}{\avars}$.
  We define the \emph{bi-infinite rewrite relation ${\ibi} \subseteq T \times T$} 
  by the following coinductive rules
  \begin{align*}
    \infer=
    {s \ibi t}
    {s \mathrel{(\rstep \cup \ibidown)^*} t}
    &&  
    \infer=
    {f(s_1,s_2,\ldots,s_n) \ibidown f(t_1,t_2,\ldots,t_n)}
    {s_1 \ibi t_1 & \cdots & s_n \ibi t_n}
  \end{align*}
  where ${\ibidown} \subseteq T \times T$ stands for bi-infinite rewriting below the root.
\end{defi}

If we replace $\ieq$ and $\ired$ by $\ibi$, and $\ieqdown$ and $\ireddown$ by $\ibidown$,
then Examples~\ref{ex:Ca:a} and~\ref{ex:fab} %
are illustrations of this rewrite relation.

Again, like $\ieq$, the relation $\ibi$\, can also be defined using 
a greatest fixed point: %
\begin{align*}
  {\ibi} &\;\;\defd\;\; \gfp{R}{(\rstep \cup \mathrel{\down{R}})^*} \,.
\end{align*}
As remarked at the end of section~\ref{sec:coinduction},
$R \mapsto (\rstep \cup \mathrel{\down{R}})^*$
is monotone, and hence the greatest fixed point exists. 
Also, the equivalence of Definition~\ref{def:ibi:rules} with this $\sgfp$-definition 
can be established similarly. %
\newpage
\section{Relating the Notions}\label{sec:venn}

\begin{lem}
  Each of the relations $\ired$, $\ibi$ and $\ieq$ is reflexive and transitive.
  The relation $\ieq$ is also symmetric.
\end{lem}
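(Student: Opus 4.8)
The plan is to handle reflexivity for all three relations by one-line fixed-point or proof-tree observations, to prove transitivity of $\ieq$ and $\ibi$ by coinduction, to prove transitivity of $\ired$ by a small surgery on derivation trees (the only genuinely non-routine part), and to obtain symmetry of $\ieq$ directly from its greatest-fixed-point characterization. For reflexivity: writing the operator defining $\ieq$ (resp.\ $\ibi$) as $F(R) = (\leftarrow_{\varepsilon} \cup \rstep \cup \down{R})^*$ (resp.\ $F(R) = (\rstep \cup \down{R})^*$), one has $\id \subseteq F(\id)$ simply because a reflexive--transitive closure always contains the identity, so $\id \subseteq \gfp{R}{F(R)}$ by the $\nu$-rule of~\eqref{eq:coind-ind-rules}. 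For $\ired$ I would instead appeal to the rules of Definition~\ref{def:ired:restrict}: applying $\rsplit$ to the axiom $s \ireddown s$ obtained from $\rid$ gives a finite proof tree for $s \ired s$, which vacuously contains no infinite nesting of $\ireddownfin$.

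For transitivity of $\ieq$ (and verbatim the same argument for $\ibi$), I would set $R := {\ieq}\relcomp{\ieq}$ and show $R \subseteq F(R)$, so that $R \subseteq {\ieq}$ follows by the $\nu$-rule. Given $s \ieq t$ and $t \ieq u$, I would unfold the equation ${\ieq} = F({\ieq})$ twice and concatenate the two $(\leftarrow_{\varepsilon} \cup \rstep \cup \down{\ieq})^*$-paths, obtaining $s \mathrel{(\leftarrow_{\varepsilon} \cup \rstep \cup \down{\ieq})^*} u$. Since $\ieq$ has already been shown reflexive, ${\ieq} \subseteq {\ieq}\relcomp{\ieq} = R$, hence $\down{\ieq} \subseteq \down{R}$ by monotonicity of lifting, and therefore $s \mathrel{F(R)} u$, as needed.

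Transitivity of $\ired$ is the step I expect to be the real obstacle, because of the entanglement of the inductive nesting restriction with the coinductive rules, so I would argue directly on proof trees of Definition~\ref{def:ired:restrict}. Let $\delta_1$ derive $s \ired t$ and $\delta_2$ derive $t \ired u$; each ends (reading downwards) in an application of $\rsplit$, so $\delta_1$ supplies a finite chain of steps from $s$ to $t$ that are $\rstep$ or $\ireddownfin$ except for a trailing $\ireddown$-step into $t$, and $\delta_2$ supplies such a chain from $t$ to $u$. I would glue the two chains at $t$; the only step now occupying an interior position while previously being a trailing $\ireddown$-step is that last step of $\delta_1$, which I would either delete (if it came from $\rid$, so its endpoints coincide) or relabel as $\ireddownfin$ (if it came from $\rlift$, which is legitimate since $\rlift$ derives $\ireddownmfin$). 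The resulting tree derives $s \ired u$, and the point requiring the most care is that it still has no ascending path meeting infinitely many $\ireddownfin$: a path avoiding the relabelled node lies entirely inside $\delta_1$ or $\delta_2$ and so meets only finitely many, whereas a path through it picks up exactly one extra $\ireddownfin$ (the relabelled node, reached in a single step from the new root) on top of the finitely many already present in the subtree of $\delta_1$ above it. As a cross-check, the same conclusion follows from Theorem~\ref{thm:ired:equiv} together with the classical fact that transfinite strongly convergent reductions concatenate, strong convergence at the junction ordinal being inherited from the limit term of the first reduction and at larger limit ordinals from the second.

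Finally, for symmetry of $\ieq$ I would observe that the operator $F(R) = (\leftarrow_{\varepsilon} \cup \rstep \cup \down{R})^*$ commutes with taking converses: $\leftarrow_{\varepsilon} \cup \rstep$ is symmetric, $\down{R}^{-1} = \down{R^{-1}}$, and $(-)^*$ commutes with $(-)^{-1}$, so $F({\ieq})^{-1} = F({\ieq}^{-1})$. From ${\ieq} = F({\ieq})$ it follows that ${\ieq}^{-1}$ is also a fixed point of $F$, and since $\ieq$ is the greatest fixed point, ${\ieq}^{-1} \subseteq {\ieq}$, i.e.\ $\ieq$ is symmetric.
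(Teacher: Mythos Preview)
Your proof is correct, but it is considerably more laborious than the paper's, and the place you flag as ``the real obstacle'' is in fact not one.

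The paper's proof is a single line: each of the three relations is literally a reflexive--transitive closure. For $\ieq$ and $\ibi$ this is immediate from their fixed-point equations, which give ${\ieq} = (\leftarrow_\varepsilon \cup \rstep \cup \down{\ieq})^*$ and ${\ibi} = (\rstep \cup \down{\ibi})^*$. For $\ired$ the fixed-point equation gives ${\ired} = (\rstep \cup \down{\ired})^* \relcomp \down{\ired}$; but since $\id \subseteq \down{\ired}$ (by the definition of lifting) and $\down{\ired} \subseteq (\rstep \cup \down{\ired})^*$, this simplifies to ${\ired} = (\rstep \cup \down{\ired})^*$. Hence all three relations are of the form $(\cdot)^*$, and reflexivity and transitivity are automatic. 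Symmetry of $\ieq$ is similarly immediate from the symmetry of the generating relation; your fixed-point argument for symmetry is fine and makes this precise.

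Your coinductive argument for transitivity of $\ieq$ and $\ibi$ is correct but redundant: once you know the relation equals a $(\cdot)^*$ you are done. Your proof-tree surgery for $\ired$ is also correct (the nesting analysis is right), but it is solving a non-problem: at the level of relations there is no entanglement between the inductive and coinductive parts to worry about, because the trailing $\down{S}$ in $(\rstep \cup \down{R})^* \relcomp \down{S}$ is absorbed by the star. What your surgery argument buys is a constructive transformation on proof trees, which could be useful elsewhere (e.g.\ in a formalization that tracks proof objects), but for the bare statement of the lemma it is unnecessary.
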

\begin{proof}
  Follows immediately from
  the fact that the relations are defined using the reflexive-transitive closure in each of their first rules. 
\end{proof}

\newcommand{\ibii}{\mathrel{\reflectbox{$\ibi$}}}
\begin{thm}
  For every TRS $\atrs$ we have the following inclusions:
  \begin{center}
    \begin{tikzpicture}[node distance=20mm]
      \node (ired) {$\ired$};
      \node (ibi) [right of=ired] {$\ibi$};
      \node (iredconv) [right of=ired,yshift=-6mm] {$({\iredi} \cup {\ired})^*$};
      \node (ibiconv) [right of=ibi,node distance=25mm] {$({\ibii} \cup {\ibi})^*$};
      \node (ieq) [right of=ibiconv,node distance=25mm] {$\ieq$};
      
      \node at ($(ired)!.5!(ibi)$) {$\subseteq$};
      \node at ($(ibi)!.4!(ibiconv)$) {$\subseteq$};
      \node at ($(ired)!.6!(iredconv.west)$) [rotate=-30] {$\subseteq$};
      \node at ($(iredconv.east)!.4!(ibiconv.west)$) [rotate=20] {$\subseteq$};
      \node at ($(ibiconv)!.6!(ieq)$) {$\subseteq$};
    \end{tikzpicture}
  \end{center}
  Moreover, for each of these inclusions there exists a TRS for which the inclusion is strict. 
\end{thm}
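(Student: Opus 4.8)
The plan is to prove the chain of five inclusions by exhibiting, for each one, an explicit transformation of proof trees in the framework, and then to separate each inclusion with a small concrete TRS. The inclusions arranged in the diagram are, read as a zig-zag: $\ired \subseteq \ibi$; $\ired \subseteq ({\iredi} \cup {\ired})^*$; $\ibi \subseteq ({\ibii} \cup {\ibi})^*$; $({\iredi} \cup {\ired})^* \subseteq ({\ibii} \cup {\ibi})^*$; and $({\ibii} \cup {\ibi})^* \subseteq {\ieq}$. The two ``$\subseteq$ into the star'' inclusions ($\ired \subseteq ({\iredi}\cup{\ired})^*$ and $\ibi \subseteq ({\ibii}\cup{\ibi})^*$) are immediate: any relation $Q$ is contained in $(Q^{-1}\cup Q)^*$ since $x \mathrel{Q} y$ is a length-one path. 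The inclusion $({\iredi}\cup{\ired})^* \subseteq ({\ibii}\cup{\ibi})^*$ follows monotonically from $\ired \subseteq \ibi$ applied in both directions. So the real content is (a) $\ired \subseteq \ibi$ and (b) $({\ibii}\cup{\ibi})^* \subseteq {\ieq}$.

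For (a), I would argue that every proof tree built from the rules \eqref{rules:restrict} / Definition~\ref{def:ired:restrict} is, after forgetting the distinction between the marked and unmarked liftings and forgetting the well-foundedness side condition, a proof tree for $\ibi$ using the rules of Definition~\ref{def:ibi:rules}: the $\rsplit$-rule premise $s \mathrel{(\ireddownfin \cup \rstep)^*}\relcomp \ireddown t$ is a special case of $s \mathrel{(\rstep \cup \ibidown)^*} t$ (collapse $\ireddownfin$ and $\ireddown$ to $\ibidown$ and absorb the trailing $\ireddown$-step into the star), the $\rlift$-rules become the single $\ibidown$-lifting rule, and the $\rid$-axiom is subsumed because $\id \subseteq \ibidown$ (indeed $s \ibi s$ via the empty path, hence $s \ibidown s$ via the lifting rule, or directly one checks $\id$ is a post-fixed point). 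Concretely, using the fixed-point formulations, it suffices to show $\ired$ is a post-fixed point of $R \mapsto (\rstep \cup \down R)^*$ and invoke the $\nu$-rule \eqref{eq:coind-ind-rules}: unfolding the least fixed point in \eqref{eq:ired:defn} once gives $\ired = \gfp{S}{(\rstep\cup\down{\ired})^*\relcomp\down S}$, and since $\id\subseteq\down S$ this greatest fixed point is contained in $(\rstep\cup\down{\ired})^*$, which is exactly the body of $\ibi$'s operator at $\ired$. For (b), I would first observe $\ibi \subseteq \ieq$ by the same post-fixed-point argument — the operator for $\ieq$ is $R \mapsto (\leftarrow_\varepsilon \cup \rstep \cup \down R)^*$, which is pointwise larger than $\ibi$'s operator — and then $\iredi \subseteq \ibii \subseteq \ieq$ because $\ieq$ is symmetric; finally, since $\ieq$ is reflexive and transitive, it is closed under the $(\cdot)^*$ operation, so $({\ibii}\cup{\ibi})^* \subseteq \ieq^* = \ieq$.

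For the strictness claims I would reuse the examples already in the paper. Strictness of $\ired \subsetneq \ibi$: the TRS $\atrs = \{\fun{C}(\fun{a}) \to \fun{a}\}$ has $\fun{C}^\omega \ibi \fun{a}$ (the backward-infinite sequence discussed in the introduction) but $\fun{C}^\omega \not\ired \fun{a}$, since a $\ired$-tree would need to start from a root step out of $\fun{C}^\omega$, and there is none applicable that makes progress toward $\fun a$. Strictness of $\ired \subsetneq ({\iredi}\cup{\ired})^*$ and of the two following inclusions: take any TRS with a non-reversible step, e.g. $\{\fun a \to \fun b\}$; then $\fun b \mathrel{({\iredi}\cup{\ired})^*} \fun a$ but $\fun b \not\ired \fun a$, which also separates $({\iredi}\cup{\ired})^*$ from $\ired$ and, with a symmetric choice, feeds the next two strictness witnesses; for $({\ibii}\cup{\ibi})^* \subsetneq \ieq$ one again notes $\ieq$ uses $\leftarrow_\varepsilon$ freely — e.g. with $\{\fun a \to \fun b, \fun a \to \fun c\}$ we get $\fun b \ieq \fun c$ but no alternating forward/backward $\ibi$-path connects $\fun b$ and $\fun c$ without using $\leftarrow_\varepsilon$ at the root, which $\ibi$ forbids. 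The main obstacle I anticipate is the proof-tree bookkeeping in (a): one must be careful that collapsing $\ireddownfin$ into $\ibidown$ genuinely yields a valid (possibly non-well-founded) $\ibi$-derivation and that dropping the well-foundedness side condition is exactly what enlarges $\ired$ to $\ibi$ — the cleanest route really is to phrase everything via the fixed-point characterizations and the $\nu$-rule rather than manipulating trees by hand, and to double-check the strictness witnesses against the precise coinductive rules (in particular that $\ibi$'s root steps are genuinely oriented, so no $\leftarrow_\varepsilon$ sneaks in).
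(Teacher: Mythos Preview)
Your treatment of the inclusions themselves is fine and matches the paper's approach: the paper also derives $\ired \subseteq \ibi \subseteq \ieq$ from the proof-tree/fixed-point hierarchy, gets $({\iredi}\cup{\ired})^* \subseteq ({\ibii}\cup{\ibi})^*$ from $\ired \subseteq \ibi$, and closes the last inclusion using symmetry and transitivity of $\ieq$. (Your fixed-point argument for $\ired \subseteq \ibi$ works, though the sentence ``since $\id\subseteq\down S$\ldots'' is not the right justification; the point is simply that $\ired = (\rstep\cup\down{\ired})^*\relcomp\down{\ired}$ and $\down{\ired}\subseteq(\rstep\cup\down{\ired})^*$, so $\ired$ is a post-fixed point of the $\ibi$-operator.)

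The genuine gap is in your strictness witnesses for the last two inclusions. Your example $\{\fun a\to\fun b,\ \fun a\to\fun c\}$ does \emph{not} separate $({\ibii}\cup{\ibi})^*$ from $\ieq$: we have $\fun b \ibii \fun a \ibi \fun c$, so $\fun b \mathrel{({\ibii}\cup{\ibi})^*} \fun c$. The relation $({\ibii}\cup{\ibi})^*$ allows finitely many reversals of direction, and one reversal suffices here; you have confused it with $\ibi$ alone. To get a strict inclusion you need a pair that requires \emph{infinitely} many direction changes, which is what the paper's Example~\ref{ex:ieq} provides: with rules $\fun a\to\fun f(\fun a)$, $\fun b\to\fun f(\fun b)$, $\fun C(\fun b)\to\fun C(\fun C(\fun a))$ one has $\fun C(\fun a)\ieq\fun C^\omega$, but any derivation must alternate $\ibi$ and $\ibii$ infinitely often, which $({\ibii}\cup{\ibi})^*$ cannot do. Similarly, you do not actually give a witness for $({\iredi}\cup{\ired})^* \subsetneq ({\ibii}\cup{\ibi})^*$; the paper reuses $\{\fun C(\fun a)\to\fun a\}$ and argues that since this system admits no forward limits, $\ired={\to^*}$, whence $({\iredi}\cup{\ired})^*$ is just finitary conversion and cannot relate $\fun C^\omega$ to $\fun a$, whereas $\fun C^\omega\ibi\fun a$.
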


\begin{proof}
  The inclusions ${\ired} \subsetneq {\ibi} \subsetneq {\ieq}$ have
  already been established in the introduction,
see equation \eqref{eq:inclusion1}.
  The inclusion ${\ired} \subsetneq {({\iredi} \cup {\ired})^*}$ is well-known (and obvious).
  Also ${\ibi} \subsetneq {({\ibii} \cup {\ibi})^*}$ is immediate since $\ibi$ is not symmetric.
  
  The inclusion ${({\iredi} \cup {\ired})^*} \subseteq {({\ibii} \cup {\ibi})^*}$
  is immediate since ${\ired} \subseteq {\ibi}$.
  Example~\ref{ex:Ca:a} witnesses strictness of this inclusion.
  The reason is that, for this example, ${\ired} = {\to^*}$ as the system does not admit any forward limits.
  Hence ${({\iredi} \cup {\ired})^*}$ is just finite conversion on potentially infinite terms.
  Thus $\fun{C}^\omega \ibi \fun{a}$, but not $\fun{C}^\omega \mathrel{({\iredi} \cup {\ired})^*} \fun{a}$.

  The inclusion ${({\ibii} \cup {\ibi})^*} \subseteq {\ieq}$
  follows from the fact that $\ieq$ includes $\ibi$ and is symmetric and transitive.
  Example~\ref{ex:ieq} witnesses strictness:
  $\fun{C}(\fun{a}) = \fun{C}^\omega$ can only be derived by
  a rewrite sequence of the form:
  \begin{align*}
    \fun{C}(\fun{a}) \ibi \fun{C}(\fun{f}^\omega) \ibiinv \fun{C}(\fun{b}) \to \fun{C}(\fun{C}(\fun{a}))
    \ibi \fun{C}(\fun{C}(\fun{f}^\omega)) \ibiinv \fun{C}(\fun{C}(\fun{b})) \to \fun{C}(\fun{C}(\fun{C}(\fun{a}))) \ibi \cdots
  \end{align*}
  and hence we need to change rewriting directions infinitely often  %
  whereas ${({\ibii} \cup {\ibi})^*}$ allows to change the direction only a finite number of times.
\end{proof}

\newcommand{\ieqclose}[1]{\stackrel{\infty}{T}(#1)}
\begin{defi}
  For relations $S \subseteq \iter{\Sigma}{\avars} \times \iter{\Sigma}{\avars}$ we define 
  \begin{align*}
    \ieqclose{S} \;\;\defd\;\; \gfp{R}{(S^{-1} \cup S \cup {\mathrel{\down{R}}})^*}\;.
  \end{align*}  
\end{defi}

\begin{lem}\label{lem:ieq:closed}
  We have ${\ieqclose{S}} = {\ieqclose{\ieqclose{S}}}$ for every $S \subseteq \iter{\Sigma}{\avars} \times \iter{\Sigma}{\avars}$.
\end{lem}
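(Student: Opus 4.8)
The plan is to prove the two inclusions ${\ieqclose{S}} \subseteq {\ieqclose{\ieqclose{S}}}$ and ${\ieqclose{\ieqclose{S}}} \subseteq {\ieqclose{S}}$ separately, using the coinduction rule \eqref{eq:coind-ind-rules} for the greatest fixed point in each case. For the first inclusion, observe that $S \subseteq \ieqclose{S}$: indeed $\ieqclose{S} = (\,(\ieqclose{S})^{-1} \cup \ieqclose{S} \cup \down{\ieqclose{S}}\,)^*$ already contains $\ieqclose{S}$ itself, and $\ieqclose{S}$ in turn contains $S$ since $\ieqclose{S}$ is a fixed point of $R \mapsto (S^{-1} \cup S \cup \down{R})^*$ and hence contains $S$ via the reflexive-transitive closure. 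Then, since the operator $R \mapsto (T^{-1} \cup T \cup \down{R})^*$ is monotone in $T$ (and $S \subseteq \ieqclose{S}$), we get $\ieqclose{S} \subseteq \ieqclose{\ieqclose{S}}$ by monotonicity of the gfp in the parameter; one can spell this out directly with the $\nu$-rule by checking that $\ieqclose{S}$ is a post-fixed point of $R \mapsto ((\ieqclose{S})^{-1} \cup \ieqclose{S} \cup \down{R})^*$, which is immediate since it is already a fixed point of the smaller operator built from $S$.

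For the harder inclusion ${\ieqclose{\ieqclose{S}}} \subseteq {\ieqclose{S}}$, I would use the $\nu$-rule: it suffices to show that $\ieqclose{\ieqclose{S}}$ is a post-fixed point of the operator $F(R) = (S^{-1} \cup S \cup \down{R})^*$, i.e. that
\begin{align*}
  {\ieqclose{\ieqclose{S}}} \;\subseteq\; (S^{-1} \cup S \cup \mathrel{\down{\ieqclose{\ieqclose{S}}}})^* \,.
\end{align*}
Unfolding the outer fixed point once, $\ieqclose{\ieqclose{S}} = ((\ieqclose{S})^{-1} \cup \ieqclose{S} \cup \down{\ieqclose{\ieqclose{S}}})^*$. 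So a pair in $\ieqclose{\ieqclose{S}}$ is a finite chain of steps, each of which is either a $\down{\ieqclose{\ieqclose{S}}}$ step (which is fine, it already appears on the right), or an $\ieqclose{S}$ step or its inverse. The key point is therefore to show that every $\ieqclose{S}$ step can be re-expanded into an $(S^{-1} \cup S \cup \down{\ieqclose{\ieqclose{S}}})^*$ path. This is where coinduction is genuinely needed, because $\ieqclose{S}$ is itself a gfp: I would argue that ${\ieqclose{S}} \subseteq {\ieqclose{\ieqclose{S}}}$ (already proved above) gives $\down{\ieqclose{S}} \subseteq \down{\ieqclose{\ieqclose{S}}}$, so unfolding $\ieqclose{S} = (S^{-1} \cup S \cup \down{\ieqclose{S}})^*$ once exhibits each $\ieqclose{S}$-step as a finite path over $S^{-1} \cup S \cup \down{\ieqclose{\ieqclose{S}}}$. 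Composing these expansions along the original chain, and using that $(\,\cdot\,)^*$ is idempotent and closed under the obvious symmetry swaps for the inverse steps, yields the desired inclusion into $(S^{-1} \cup S \cup \down{\ieqclose{\ieqclose{S}}})^*$, and the $\nu$-rule closes the argument.

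The main obstacle I anticipate is purely bookkeeping: handling the inverse steps $(\ieqclose{S})^{-1}$ cleanly. When an $\ieqclose{S}$-step is reversed, its expansion as a path over $S^{-1} \cup S \cup \down{\ieqclose{\ieqclose{S}}}$ must also be reversed, which swaps $S$ with $S^{-1}$ and reverses each lifted step; so one needs the small observations that $(A \cup B \cup \down{R})^*$ is closed under reversal when one simultaneously swaps $A \leftrightarrow B$, and that $(\down{R})^{-1} = \down{R^{-1}}$ with $\ieqclose{\ieqclose{S}}$ symmetric-closed in the relevant sense. Once these symmetry lemmas are in place the composition of expansions is routine, so the essential mathematical content is the coinductive re-expansion of $\ieqclose{S}$-steps described above, and everything else is monotonicity and closure-operator algebra.
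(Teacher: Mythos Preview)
Your proof is correct and follows the same route as the paper: obtain $\ieqclose{S}\subseteq\ieqclose{\ieqclose{S}}$ from $S\subseteq\ieqclose{S}$, and for the converse exhibit $\ieqclose{\ieqclose{S}}$ as a (post-)fixed point of $R\mapsto(S^{-1}\cup S\cup\down{R})^*$ by unfolding both gfp layers. The paper eliminates all the inverse-step bookkeeping you anticipate by first observing that $\ieqclose{S}$ itself is symmetric, so $(\ieqclose{S})^{-1}\cup\ieqclose{S}=\ieqclose{S}$; after that the calculation reduces to $(A^*\cup B)^*=(A\cup B)^*$ together with $\down{\ieqclose{S}}\subseteq\down{\ieqclose{\ieqclose{S}}}$, and no reversal lemmas are needed.
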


\begin{proof}
  For every relation $S$ we have
  $S \subseteq (S^{-1} \cup S \cup {\mathrel{\down{R}}})^*$
  and hence $S \subseteq {\ieqclose{S}}$ 
  by \eqref{eq:coind-ind-rules}.
  Hence it follows that ${\ieqclose{S}} \subseteq {\ieqclose{\ieqclose{S}}}$.
  For ${\ieqclose{\ieqclose{S}}} \subseteq {\ieqclose{S}}$ we note that
  \begin{align*}
    {\ieqclose{\ieqclose{S}}} 
    &= {(\;{\ieqclose{S}^{-1}} \cup {\ieqclose{S}} \cup {\down{\ieqclose{\ieqclose{S}}}}\;)^*} && \text{by definition}\\
    &= {(\;{\ieqclose{S}} \cup {\down{\ieqclose{\ieqclose{S}}}}\;)^*} && \text{by symmetry of $\ieqclose{S}$} \\
    &= {(\;(\;{S^{-1}} \cup S \cup {\down{\ieqclose{S}}}\;)^* \cup {\down{\ieqclose{\ieqclose{S}}}}\;)^*} && \text{by definition} \\
    &= {(\;{S^{-1}} \cup S \cup {\down{\ieqclose{S}}} \cup {\down{\ieqclose{\ieqclose{S}}}}\;)^*}  \\
    &= {(\;{S^{-1}} \cup S \cup {\down{\ieqclose{\ieqclose{S}}}}\;)^*} && \text{since ${\ieqclose{S}} \subseteq {\ieqclose{\ieqclose{S}}}$} 
  \end{align*}
  Thus ${\ieqclose{\ieqclose{S}}}$ is a fixed point of $R \mapsto (S^{-1} \cup S \cup {\mathrel{\down{R}}})^*$,
  and hence ${\ieqclose{\ieqclose{S}}} \subseteq {\ieqclose{S}}$.
\end{proof}

It follows immediately that $\ieq$ is closed under $\ieqclose{\cdot}$.
\begin{cor} \label{cor:Tinf}
  We have ${\ieq} \;=\; {\ieqclose{\ieq}}$ for every TRS $\atrs$.
\end{cor}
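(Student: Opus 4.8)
The plan is to recognise $\ieq$ itself as an instance of the operator $\ieqclose{\cdot}$ and then quote the idempotency established in Lemma~\ref{lem:ieq:closed}.

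First I would record the identity ${\ieq} = {\ieqclose{\rstep}}$. This is immediate from the greatest fixed-point characterisation of $\ieq$ displayed just after Definition~\ref{def:ieq:rules}, namely ${\ieq} = \gfp{R}{(\leftarrow_{\varepsilon} \cup \rstep \cup \mathrel{\down{R}})^*}$, together with the observation that the relation $\leftarrow_{\varepsilon}$ of ``backward root steps'' is precisely the converse $\rstep^{-1}$ of the root-step relation $\rstep = {\to_{\varepsilon}}$ from Section~\ref{sec:prelims}. Hence $\gfp{R}{(\leftarrow_{\varepsilon} \cup \rstep \cup \mathrel{\down{R}})^*} = \gfp{R}{(\rstep^{-1} \cup \rstep \cup \mathrel{\down{R}})^*}$, and the right-hand side is by definition exactly $\ieqclose{\rstep}$.

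Next I would apply Lemma~\ref{lem:ieq:closed} with $S \defd \rstep$, which yields ${\ieqclose{\rstep}} = {\ieqclose{\ieqclose{\rstep}}}$. Substituting the identity ${\ieq} = {\ieqclose{\rstep}}$ into both sides then gives ${\ieq} = {\ieqclose{\ieq}}$, which is the claim.

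I do not expect any real obstacle here: once the identity ${\ieq} = {\ieqclose{\rstep}}$ is in place, the corollary is a one-line consequence of Lemma~\ref{lem:ieq:closed}, and the identity itself is just a matter of unfolding definitions. The only point deserving an explicit sentence is the equality $\leftarrow_{\varepsilon} = \rstep^{-1}$; note in particular that this reverses only the root steps and is \emph{not} the same as reversing whole (possibly deep) rewrite steps --- which is exactly why $\ieq$, as defined with $\leftarrow_{\varepsilon} \cup \rstep$ at the root, lands on the nose in the format of $\ieqclose{\cdot}$.
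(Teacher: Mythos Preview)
Your proposal is correct and follows exactly the paper's approach: the paper's proof is the one-line chain ${\ieq} = {\ieqclose{\to_{\varepsilon}}} = {\ieqclose{\ieqclose{\to_{\varepsilon}}}} = {\ieqclose{\ieq}}$, which is precisely your argument (recall that $\rstep$ \emph{is} $\to_{\varepsilon}$). Your additional remark making explicit that $\leftarrow_{\varepsilon} = \rstep^{-1}$ is a helpful unpacking of the first equality that the paper leaves implicit.
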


\begin{proof}
  We have 
  ${\ieq} \;=\; {\ieqclose{\to_{\varepsilon}}} 
          \;=\; {\ieqclose{\ieqclose{\to_{\varepsilon}}}} 
          \;=\; {\ieqclose{\ieq}}$.
\end{proof}

The work~\cite{kahr:2013} introduces various notions of infinitary rewriting.
We comment on the notions that are closest to the relations $\ibi$ and $\ieq$ introduced in our paper.
First, we note that it is not difficult to see that ${\ibi} \subsetneq {\mred_t}$ where
$\mred_t$ is the topological graph closure of $\to$.
The paper~\cite{kahr:2013} also introduces a notion of infinitary equational reasoning
with a strongly convergent flavour, namely:
\begin{align*}
  S_E(R) = (S \funcomp E)^\star(R)
\end{align*}
where 
\begin{enumerate}
  \item $E(R)$ is the equivalence closure of $R$, and
  \item $S(R)$ is the strongly convergent rewrite relation obtained from (single steps) $R$,
  \item and $f^{\star}(R)$ is defined as $\mu x.\,R\cup f(x)$.
\end{enumerate}

\begin{lem}
  We have ${S_E(\to)} \;\subseteq\; {\ieq}$ for every TRS $\atrs$. 
  %Moreover, there exists TRSs $\atrs$ for which the inclusion is strict.
\end{lem}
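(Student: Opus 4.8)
The plan is to show $S_E(\to) \subseteq {\ieq}$ by unfolding the definition $S_E(R) = (S \funcomp E)^\star(R)$ and matching each ingredient against the fixed-point definition ${\ieq} = \gfp{R}{(\leftarrow_{\varepsilon} \cup \rstep \cup \mathrel{\down{R}})^*}$. The key observation is that $\ieq$ is already closed under all three operations that go into $S_E$: taking the equivalence closure $E$, taking the strongly convergent rewrite relation $S$, and taking the $\mu$-style closure $f^\star$. If I establish these closure properties, then since $S_E(\to)$ is the least relation containing $\to$ and closed under $S \funcomp E$ (and $\to \subseteq {\ieq}$), it follows that $S_E(\to) \subseteq {\ieq}$.

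First I would record that $\to \subseteq {\ieq}$: a single rewrite step $C[\ell\sigma] \to C[r\sigma]$ is derived by pushing the step down through the context $C$ using the $\ieqdown$ rule (lifting) until the root step $\ell\sigma \rstep r\sigma$ is reached, which is legitimate since only finitely many lifting steps are needed and the identity is available at the leaves. Second, I would show $\ieq$ is closed under $E$, i.e.\ $E({\ieq}) \subseteq {\ieq}$: this is immediate from the already-established facts that $\ieq$ is reflexive, transitive and symmetric (the preceding lemma), so it equals its own equivalence closure. Third, and this is where Corollary~\ref{cor:Tinf} does the real work, I would show $\ieq$ is closed under $S$: the strongly convergent rewrite relation built from single steps $R$ with $R \subseteq {\ieq}$ is contained in $\ired$-style strongly convergent reasoning over the theory $\ieq$, hence in $\ieqclose{\ieq}$, which by Corollary~\ref{cor:Tinf} equals $\ieq$. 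More precisely, $S(R)$ for $R \subseteq {\ieq}$ is a sub-relation of the infinitary rewrite relation generated by treating pairs in $\ieq$ as rewrite steps, and any such strongly convergent reduction is captured by a proof tree of the shape used for $\ieqclose{\cdot}$; so $S({\ieq}) \subseteq {\ieqclose{\ieq}} = {\ieq}$.

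Combining these, $S \funcomp E$ maps subsets of $\ieq$ to subsets of $\ieq$, and the $f^\star$-closure $\mu x.\, R \cup f(x)$ preserves this because $\ieq$ is closed under arbitrary unions of its subsets and under the monotone operator $f = S \funcomp E$; formally, $S_E(\to) = \mu x.\, (\to) \cup (S\funcomp E)(x)$ is the least fixed point, and $\ieq$ is a pre-fixed point of $x \mapsto (\to) \cup (S\funcomp E)(x)$ since ${\to} \subseteq {\ieq}$ and $(S\funcomp E)({\ieq}) \subseteq {\ieq}$, so by the $\mu$-rule \eqref{eq:coind-ind-rules} we get $S_E(\to) \subseteq {\ieq}$.

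The main obstacle I anticipate is the precise justification that $S({\ieq}) \subseteq {\ieqclose{\ieq}}$, i.e.\ that a strongly convergent rewrite sequence whose single steps are $\ieq$-pairs can be reassembled into a valid $\ieqclose{\cdot}$-proof tree. This requires knowing that strongly convergent reductions decompose as $(\text{below-root activity} \relcomp \text{root step})^*\relcomp(\text{below-root activity})$ with the depth-of-activity condition (Theorem~\ref{thm:finite} and the analysis underlying Definition~\ref{def:ired:restrict}), and then checking that each below-root block, being again strongly convergent on strictly smaller subterms relative to the coinductive unfolding, is absorbed by the greatest fixed point. This is essentially the same bookkeeping as in Theorem~\ref{thm:ired:equiv}, adapted to the symmetric setting, so I expect it to go through but it is the step that needs the most care.
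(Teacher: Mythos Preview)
Your proposal is correct and follows essentially the same approach as the paper: establish ${\to} \subseteq {\ieq}$, $E({\ieq}) = {\ieq}$, and $S({\ieq}) \subseteq {\ieqclose{\ieq}} = {\ieq}$ via Corollary~\ref{cor:Tinf}, then conclude by the $\mu$-rule since $\ieq$ is a pre-fixed point of $x \mapsto {\to} \cup (S \funcomp E)(x)$. Your elaboration of why $S({\ieq}) \subseteq {\ieqclose{\ieq}}$ needs the depth-decomposition argument is more explicit than the paper (which simply records the containment as ``immediate''), but the underlying structure is the same.
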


\begin{proof}
 The following containments are immediate:
  \begin{enumerate}
    \item ${\to} \,\subseteq\, {\ieq}$,
    \item ${E(\ieq)} \,=\, {\ieq}$, and
    \item ${S(\ieq)} \,\subseteq\, {\ieqclose{\ieq}} \,=\, {\ieq}$
      (Corollary \ref{cor:Tinf}).
  \end{enumerate} 
From the definition of $S_E(\cdot)$ as a least fixed point, the claim follows.
\end{proof}

It could be reasonable to conjecture that $S_E(\to)$ and ${\ieq}$
coincide.  We now show that this is not the case.

\begin{exa}
\newcommand{\mcS}{\mathcal{S}}
\newcommand{\setof}[1]{\{#1\}}
\newcommand{\semof}[1]{\llbracket#1\rrbracket}
\newcommand{\infof}[1]{\mathrm{NF}^\infty(#1)}
Consider the iTRS $\atrs$ consisting of the rules
\begin{align*}
  c(b(x)) &\to a(a(x))\\
  c(a(x)) &\to b(b(x))
\end{align*}

Notice that $a^\omega \ieq b^\omega$ in $\atrs$.  One possible
derivation $\delta$ of this fact is given below
where $\overline{\delta}$ is the same as $\delta$, but with all pairs mirrored
and premises of the split rule are listed in reverse order.
We also use that $b(b^\omega) = b^\omega$ and $a(a^\omega) = a^\omega$.
%in Figure \ref{fig:deriv-aa-bb}.

%\begin{figure}[h!]
%\small  \begin{framed}
%  \vspace{1.5ex}
{\small  \begin{align*}
    \infer=
    {a^\omega \ieq b^\omega}
    { a(a(a^\omega)) {\leftarrow_\varepsilon} c(b(a^\omega)) \quad
      \infer=
      { c(b(a^\omega)) {\ieqdown} c(b(b^\omega)) }
      {\delta}
      %      { a^\omega \ieq b^\omega}
      &
      \infer=
      { c(b^\omega) {\ieqdown} c(a^\omega) }
      { \overline{\delta} }
      &
%      c(a(a^\omega)) \rstep b(b(a^\omega)) \quad
%      \infer=
%      { b(b(a^\omega)) \ieqdown b(b(b^\omega))}
%      {a^\omega \ieq b^\omega}
      {\infer=
      { c(a(a^\omega)) {\ieqdown} c(a(b^\omega))}
      {\delta}
      %      {a^\omega \ieq b^\omega}
      }
      \quad c(a(b^\omega)) {\rstep} b(b(b^\omega))
    }
  \end{align*}
  }
%  \vspace{-2ex}
%  \end{framed}
%  \caption{Derivation $\delta$ of $a^\omega \ieq b^\omega$. Here $\overline{\delta}$ is the same as $\delta$ but with all pairs mirrored and premises of the split rule are listed in reverse order. We also use that $b(b^\omega) = b^\omega$ and $a(a^\omega) = a^\omega$.}
%  \label{fig:deriv-aa-bb}
%\end{figure}
One does not have $\pair{a^\omega}{b^\omega} \in S_E(\atrs)$, however.
Let us sketch a proof of this.
First, notice that, for any relation $R$, $S_E(R)$ can alternatively
be described as
\begin{align}
\label{eq:S_E}
S_E(R) := \mu x. R \cup S(E(x)) = \mu x. R \cup E(x) \cup S(x)
= \mu x. R \cup E(S(x))
\end{align}
This is because a prefixed point of the
composition $S \circ E$ is a prefixed point of both
monotone operators $E$ and $S$, and vice versa.
We are particularly interested in the operator appearing on the right-hand side of
\eqref{eq:S_E}.  After a single iteration, it yields the usual concept
of infinitary conversion in the iTRS $\atrs$.

Observe that $\atrs$ is in fact a string rewrite system.  Being
orthogonal, and having no collapsing rules, we know that $\atrs$
satisfies both iCR and iSN.  Therefore, infinitary conversion in
$\atrs$ is characterized by canonical semantics $\mcS$,
consisting of infinitary normal forms.  It is easy to see
that these are precisely
\begin{align*}
\mcS &= \setof{w \in \setof{a,b}^m \mid m \le \omega} \sqcup \setof{w c^n \mid w \in
  \setof{a,b}^*, n \le \omega}\\
&= \setof{wc^n \mid w \in \setof{a,b}^m, m+n \le \omega}
\end{align*}

It is a curious fact that $E(S(\cdot))$ does not yet stabilize at
$(=_\mcS)$, the equality of infinitary normal forms.  But it does
stabilize after one more iteration --- without relating
$a^\omega$ and $b^\omega$.

To see this, consider a sequence of $=_\mcS$-steps
\begin{align}
s_0 \ired \cdot \iredi s_1 \ired \cdot \iredi s_2 \ired \cdot \cdot \cdot\qquad
&\text{with }\lim_{n \to \infty} s_n  =  s_\infty
\label{eq:iconv}
\end{align}
By infinitary rewriting theory, for each $n$, we can find standard,
$\omega$-compressed reductions
\begin{align}
\label{eq:nf-red}
\rho_n : s_n \ired s := \infof{s_0} \in \mcS
\end{align}
Let us say that a finite prefix $w \leq s$ is \emph{stable} if we can find
a number $N$, a prefix $v \leq s_\infty$, and a reduction $\nu : v
\to^* w$ such that, for all $n \ge N$:
\begin{itemize}
\item $s_n = v s_n'$
\item $\rho_n$ factors as $\rho_n = \nu \circ \rho_n'$, where $\rho_n'
  : s_n' \to s'$ and $s=ws'$.
\end{itemize}

If every prefix of $s$ is stable, then it is easy to see that
$s_\infty \ired s$; then the infinite conversion \eqref{eq:iconv}
yields no new pairs in $S_E(\atrs)$.

Otherwise, there is a maximal stable prefix $w$ --- which may or may
not be empty.
Fixing this $w=w_{\max{}}$, with respective $N$, $v$, and $\nu$, we find that
\begin{itemize}
\item 
$s = w s'$, $s_\infty = v s_\infty'$, $\nu : v \to w$;
\item 
$s_N' \ired \cdot \iredi s_{N+1}' \ired \cdot \cdot \cdot$,
with $\lim_{n \to \infty} s_n' = s_\infty'$;
\item $s'_n \ired s'$ for $n \ge N$;
\end{itemize}

We claim that $s'_\infty=c^\omega$.
For suppose
$s'_\infty \in \setof{c^kxu \mid k \ge 0, x \in \setof{a,b}, u \in \setof{a,b,c}^{\le \omega}}$.
After $k$ steps of outermost reduction, the outermost letter becomes
$y \in \setof{a,b}$, and there is no longer a redex present at the root.
By continuity of the sequence $\setof{s'_n}$, this even happens at
each $n \ge M$, for some $M \ge N$.
But now $y$ becomes a stable prefix of $s'$, and $wy$ a
stable prefix of $s$ --- contradicting maximality of $w$.
So $s'_\infty = c^\omega$.
Now, unless $s' = c^\omega$ as well,
trivializing the whole thing,
the reductions $\rho'_n$ must be non-trivial, yielding subterms of
form $aa(x)$ or $bb(x)$.
Then $s'$ has a prefix resulting from a reduction of a term of form
$c^kau$ or $c^kbu$ to normal form --- for arbitrarily large $k$.  A
cursory examination of the rules reveals that the only two
possibilities for such prefixes are $a(ab)^k$ and $b(ba)^k$.  Since
the $k$ indeed is unbounded as $s'_n \to s'_\infty$, we conclude that
$s' \in \setof{a(ab)^\omega,b(ba)^\omega}$.

Thus, the only pairs added to $(=_\mcS)$ by the operator
$S(\cdot)$ are those of the form $\pair{w\alpha}{vc^\omega}$, where $v \to^* w$
and $\alpha \in \setof{a(ab)^\omega,b(ba)^\omega}$.
The equivalence generated by these relations corresponds to infinitary
conversion in the augmented iTRS:
\begin{align*}
  \atrs^+ := \atrs \cup \begin{cases}
a(ab)^\omega \to c^\omega\\
b(ba)^\omega \to c^\omega
\end{cases}
\end{align*}
\newcommand{\eqplus}{{=_{\mcS^+}}}
Let us denote this conversion by $\eqplus$.
It remains to show that $S_E(\eqplus)$ coincides with $\eqplus$.
We proceed as before, starting with a chain of $\atrs^+$-conversions
\begin{align}
s_0 \ired \cdot \iredi s_1 \ired \cdot \iredi s_2 \ired \cdot \cdot \cdot\qquad
&\text{with }\lim_{n \to \infty} s_n  =  s_\infty
\label{eq:iconv'}
\end{align}
We remark that $\atrs^+$ is still confluent, owing to lack of overlap.

Even though compression fails due to the presence of rules with
infinite left sides, this failure happens to be completely innocuous:
if any of these rules are ever used in a reduction sequence, they will
replace an infinite part of the term by a normal form which cannot
interact with anything --- and the finite prefix which remains is
strongly normalizing (since $\atrs^+$ is finitarily SN).
In particular, for any $\atrs^+$-reduction $\rho:u \ired u'$, the following are
equivalent:
\begin{itemize}
\item $\rho$ cannot be compressed to length $\omega$;
\item $\rho$ factors as $u \ired w \alpha \to w c^\omega$, where
$\alpha \in \setof{a(ab)^\omega,b(ba)^\omega}$ and $u \ired w\alpha$ is
is an infinite reduction.
\end{itemize}
%Similarly, we can see that $=_\mcS$ remains unchanged under
%transfinite strong limits.  As a consequence, we infer $S_E(\atrs)$
%coincides with $=_\mcS$.
%
We thus again obtain standard, $(\omega+1)$-compressed reductions
\begin{align}
\label{eq:nf-red'}
\rho_n : s_n \ired s := \infof{s_0} \in \mcS
\end{align}
If it so happens that, no reduction
$\rho_n$ fires any of the new rules, then it is evident that
$(s,s_\infty)$ is already included in $\eqplus$.
Otherwise, if one of the new rules is ever fired, then $s = w
c^\omega$.
If $t$ is any term, and $\rho$ is a reduction from $t$ to a normal
form of the shape $w c^\omega$, then $\rho$ factors as $\rho^f \circ
\rho^i \circ \rho^!$, where
\begin{itemize}
\item $t = t^0 t^i$;
\item $\rho^f : t_0 \to^* w$;
\item $\rho^i : t_i \ired \alpha$ are reductions in $\atrs$, where $\alpha \in
\setof{a(ab)^\omega,b(ba)^\omega}$;
\item $\rho^! : \alpha \to c^\omega$.
\end{itemize}

Applying this observation for each $\rho_n$ from \eqref{eq:nf-red'},
we conclude that the initial part $\rho^f_n$ must eventually stabilize
(due to stabilization of prefixes as
$s_n \stackrel{n \to \infty}{\longrightarrow} s_\infty$).
For the same reason, we have that $\rho_n^!$ must eventually settle on
one of the two new rules, with the target of $\rho^i$ converging to its
left side.
The remaining reductions $\rho^i$, being pure $\atrs$-reductions, are
covered by our earlier analysis, and so we conclude that $s_\infty =
s_\infty^0 s_\infty^i$, with $s_\infty^0 \eqplus w$, and $s_\infty^i
\eqplus c^\omega$.

Finally, let us remark why it suffices to consider limits of length
$\omega$.  This is settled by induction on the sequence length $\beta$.

As decisively settled in TeReSe, a strongly convergent sequence is
necessarily of at-most-countable length $\beta$.

If $\beta$ is a successor ordinal, then we conclude by finite
induction from the greatest limit ordinal less than $\beta$.

Otherwise, $\beta$ will be a limit ordinal, and we shall be able to
produce a sequence $s_0, s_{\beta(1)}, s_{\beta(2)},\dots$ as before,
with $s_{\beta(i)} \ired \cdot \iredi s_0$. 

The same analysis applies, and we deduct that $s_\beta \ired_{\atrs^+}
\mathsf{NF}(s_0)$. \qedhere
\end{exa}

\section{Correspondence of Proof Trees and Rewrite Sequences}\label{sec:permute}

\newcommand{\therule}[2]{\mathit{rul}(#1,#2)}
\newcommand{\thepos}[2]{\mathit{pos}(#1,#2)}
\newcommand{\thesub}[2]{\mathit{sub}(#1,#2)}

In this section, we investigate the correspondence between ordinal-indexed rewrite sequences
and coinductive proof trees.
We define a \emph{correspondence relation} that makes precise
when a rewrite sequence is represented by a certain proof tree.
In general, this correspondence is a many-to-many relation:
a proof tree represents a class of rewrite sequences, and 
a rewrite sequence can be represented by different proof trees.

We then define \emph{canonical proof trees} for $\ired$ 
in such a way that every ordinal-indexed rewrite sequence has a unique representative among the canonical proof trees.
More precisely, there is a many-to-one correspondence between rewrite sequences and canonical proof trees.
To characterise the class of rewrite sequences represented by the same canonical proof tree, 
we introduce a notion of equivalence on infinitary rewrite sequences,
called \emph{parallel permutation equivalence}.
Thereby two rewrite sequences are considered equivalent 
if they differ only in the order of steps in parallel subtrees.

\begin{nota}
  For a rewrite sequence $S : s_0 \to^\alpha_\atrs s_\alpha$ 
  consisting of steps $(s_{\beta} \to s_{\beta+1})_{\beta < \alpha}$
  arising from the application of the rule $\ell_\beta \to r_\beta$ 
  with substitution $\sigma_\beta$ at position $p_\beta$, respectively,
  we introduce the following notation
  \begin{align*}
    \therule{S}{\beta} &= \ell_\beta \to r_\beta\\
    \thepos{S}{\beta} &= p_\beta\\
    \thesub{S}{\beta} &= \sigma_\beta\\
  \end{align*}
  for every $\beta < \alpha$.
\end{nota}

\subsection{The Correspondence Relation}
Assume that we have a term $f(s_1,\ldots,s_n)$ and rewrite sequences on the direct subterms
$S_1 : s_1 \to^{\alpha_1} t_1$, \ldots, $S_n : s_n \to^{\alpha_n} t_n$.
As these rewrite sequences occur in parallel subterms, any interleaving of them 
gives rise to a rewrite sequence $f(s_1,\ldots,s_n) \to^\beta f(t_1,\ldots,t_n)$.
The following definition introduces the notion of \emph{interleaving}
on the basis of a monotonic bijective embedding of the disjoint union $\alpha_1 \uplus \ldots \uplus \alpha_n$ into $\beta$.

\begin{defi}
  Let $f \in \Sigma$ of arity $n$.
  Let $S_i : s_i \to^{\alpha_i} t_i$ be rewrite sequences of length $\alpha_i$ for every $i \in \{\,1,\ldots,n\,\}$.
  %Define $I = \bigcup_{1 \le i \le n} \{i\} \times \alpha_i$.
  A rewrite sequence $T : f(s_1,\ldots,s_n) \to^\beta f(t_1,\ldots,t_n)$ of length $\beta$
  is called an \emph{interleaving of $S_1,\ldots,S_n$ with root $f$} if there exists
  a bijection 
  \begin{align*}
    \xi : (\{1\}\times \alpha_1 \cup \ldots \cup \{n\}\times \alpha_n) \to \beta
  \end{align*}
  such that for every $i \in \{\,1,\ldots,n\,\}$ and every $\gamma < \alpha_i$ we have:
  \begin{enumerate}
    \item $\thepos{T}{\xi(i,\gamma)} = i\cdot \thepos{S_i}{\gamma}$ (\emph{corresponding position in the $i$-th argument}),
    \item $\therule{T}{\xi(i,\gamma)} = \therule{S_i}{\gamma}$ (\emph{same rule}),
    \item $\thesub{T}{\xi(i,\gamma)} = \thesub{S_i}{\gamma}$ (\emph{same substitution}), and
    \item for every $\gamma' < \gamma$ it holds that $\xi(i,\gamma') < \xi(i,\gamma)$ (\emph{monotonic embedding}).
  \end{enumerate}
\end{defi}

The following definition introduces the correspondence between coinductive proof trees
and ordinal-indexed rewrite sequences.

\begin{defi}\label{def:correspondence}
  Let $\atrs$ be a term rewriting system.
  We define the \emph{correspondence relation} between proof trees 
  (with respect to Definition~\ref{def:ired:restrict})
  and ordinal-indexed rewrite sequences 
  as the largest relation such that the following conditions hold:
  \begin{enumerate}
    \item 
      A proof tree of the form 
      \begin{align*}
        \infer=[\rsplit]{s \ired t}{\delta_1 & \delta_2 & \cdots & \delta_n}
      \end{align*}
      corresponds to a rewrite sequence $S : s \to^\alpha t$ 
      if $S$ is the concatenation of rewrite sequences $S_1,\ldots,S_n$
      such that $\delta_i$ corresponds to $S_i$ for every $i \in \{\,1,\ldots,n\,\}$.
    \item 
      A proof tree of the form 
      $s \rstep t$
      only corresponds to the rewrite sequence $s \rstep t$.
    \item 
      A proof tree of the form 
      \begin{align*}
        \infer=[\rlift]{f(s_1,s_2,\ldots,s_n) \ireddownmfin f(t_1,t_2,\ldots,t_n)}{\delta_1 & \delta_2 & \cdots & \delta_n}
      \end{align*}
      corresponds to a rewrite sequence $S : s \to^\alpha t$ 
      if $S$ is an interleaving of rewrite sequences $S_1,\ldots,S_n$ with root $f$
      such that the proof tree $\delta_i$ corresponds to the rewrite sequence $S_i$ for every $i \in \{\,1,\ldots,n\,\}$.
    \item 
      A proof tree of the form 
      \begin{align*}
        \infer=[\rid]
        {s \ireddownmfin s}
        {}
      \end{align*}
      only corresponds to the empty rewrite sequence $s \to^0 s$.
  \end{enumerate}
\end{defi}

\begin{rem}\label{rem:lift:choice}
  Note that a proof tree corresponds to more than one rewrite sequence if and only if 
  it contains an application of the lift-rule with (at least) two premises that do not correspond to empty rewrite sequences.
  The lift-rule introduces choice in the `construction' of the rewrite sequence 
  by allowing for an arbitrary interleaving of the rewrite sequences on the arguments.
\end{rem}

The following example illustrates that a proof term can
correspond to an infinite number of ordinal-indexed rewrite sequences.
\begin{exa}
  We consider the proof trees in Figures~\ref{fig:aComega} and~\ref{fig:fab}:
  \begin{enumerate}
    \item 
      The proof tree for $\fun{a} \ired \fun{C}^\omega$
      corresponds to the only rewrite sequence $\fun{a} \to^\omega \fun{C}^\omega$.
    \item 
      The proof tree for $\fun{b} \ired \fun{C}^\omega$
      corresponds to the only rewrite sequence $\fun{b} \to^\omega \fun{C}^\omega$.
    \item 
      The proof tree for $\fun{f}(\fun{a},\fun{b}) \ireddownfin \fun{f}(\fun{C}^\omega,\fun{C}^\omega)$
      corresponds to all possible interleavings of $\fun{a} \to^\omega \fun{C}^\omega$ and $\fun{b} \to^\omega \fun{C}^\omega$
      applied to the respective subterms $\fun{f}(\fun{a},\fun{b})$.
      Note that there are continuum many rewrite sequences
      that all have length $\omega$ or $\omega \cdot 2$.
  \end{enumerate}
\end{exa}

\noindent The next example shows that some rewrite sequences can be represented by 
multiple proof trees.
\begin{exa}
  There are multiple proof trees for the rewrite sequence $\fun{a} \to^\omega \fun{C}^\omega$,
  for example the proof trees shown in Figures~\ref{fig:aComega} and~\ref{fig:aComega:alternative}. 
\end{exa}

\begin{figure}[h!]
  \begin{framed}\vspace{4ex}
  \begin{align*}
    \infer=
    {\fun{a} \ired \fun{C}^\omega}
    {
      \fun{a} \rstep \fun{C}(\fun{a})
      &
      \infer=
      {\fun{C}(\fun{a}) \ireddown \fun{C}(\fun{C}(\fun{a}))}
      {
        \infer=
        {\fun{a} \ired \fun{C}(\fun{a})}
        {\fun{a} \rstep \fun{C}(\fun{a})}
      }
      &
      \infer=
      {\fun{C}(\fun{C}(\fun{a})) \ireddown \fun{C}^\omega}
      {
        \infer=
        {\fun{C}(\fun{a}) \ired \fun{C}^\omega}
        {
          \infer=
          {\fun{C}(\fun{a}) \ireddown \fun{C}^\omega}
          {
            \infer={\fun{a} \ired \fun{C}^\omega}
            {
              \makebox(0,0){
              \hspace{20mm}\begin{tikzpicture}[baseline=25ex,xscale=1.6,yscale=2.5]
              \draw [->,thick,dotted] (0,0) -- (0,1mm) to[out=90,in=100] (11mm,-1mm) to[out=-80,in=-20,looseness=1.4] (-0mm,-13mm);
              \end{tikzpicture}
              }
            }
          }
        }
      }
    }
  \end{align*}
  \end{framed}
  \caption{A second proof tree for $\fun{a} \ired \fun{C}^\omega$.}
  \label{fig:aComega:alternative}
\end{figure}

\begin{exa}
  Let $\atrs$ consist of the rule $\fun{f}(x) \to \fun{g}(x)$.
  Figures~\ref{fig:fomega:gomega:1} and~\ref{fig:fomega:gomega:2} show
  proof trees corresponding to rewrite sequences $\fun{f}^\omega \to^\omega \fun{g}^\omega$.
  The proof tree in Figure~\ref{fig:fomega:gomega:1} corresponds to
  the rewrite sequence
  \begin{align*}
    \fun{f}^\omega \to \fun{g}(\fun{f}^\omega) \to \fun{g}(\fun{g}(\fun{f}^\omega)) \to \cdots \to^\omega \fun{g}^\omega\;,
  \end{align*} 
  and the proof tree in Figure~\ref{fig:fomega:gomega:2} corresponds to
  \begin{align*}
    \fun{f}^\omega 
    \to \fun{g}(\fun{f}^\omega) 
    \to \fun{g}(\fun{f}(\fun{g}(\fun{f}^\omega))) 
    \to \fun{g}^3(\fun{f}^\omega) 
    \to \fun{g}^4(\fun{f}^\omega) 
    \to \fun{g}^5(\fun{f}^\omega) 
    \to \cdots \to^\omega \fun{g}^\omega\;.
  \end{align*} 
  Note that, by Remark~\ref{rem:lift:choice}, these rewrite sequences are unique.
  Both proof trees correspond to precisely one rewrite sequence since they do not contain 
  applications of the lift-rule (rules with conclusion $\ireddown$) with multiple premises.
\end{exa}

\begin{figure}[h!]
  \begin{framed}
  \begin{align*}
    \infer=
    {\fun{f}^\omega \ired \fun{g}^\omega}
    {
      \fun{f}^\omega \rstep \fun{g}(\fun{f}^\omega)
      &
      \infer=
      {\fun{g}(\fun{f}^\omega) \ireddown \fun{g}^\omega}
      {\infer={\fun{f}^\omega \ired \fun{g}^\omega}
          {\makebox(0,0){
            \hspace{17mm}\begin{tikzpicture}[baseline=13.5ex,xscale=1.3,yscale=1.3]
            \draw [->,thick,dotted] (0,0) -- (0,1mm) to[out=90,in=100] (11mm,-1mm) to[out=-80,in=-20,looseness=1.4] (-0mm,-13mm);
            \end{tikzpicture}
          }}
      }
    }
  \end{align*}
  \end{framed}
  \caption{A proof tree for $\fun{f}^\omega \to \fun{g}(\fun{f}^\omega) \to \fun{g}(\fun{g}(\fun{f}^\omega)) \to \cdots \to^\omega \fun{g}^\omega$.}
  \label{fig:fomega:gomega:1}
\end{figure}

\begin{figure}[h!]
  \begin{framed}
  \begin{align*}
    \infer=
    {\fun{f}^\omega \ired \fun{g}^\omega}
    {
      \fun{f}^\omega \rstep \fun{g}(\fun{f}^\omega)
      &
      \infer=
      {\fun{g}(\fun{f}^\omega) \ireddown \fun{g}^\omega}
      {
        \infer=
        {\fun{f}^\omega \ired \fun{g}^\omega}
        {
          \infer=
          {\fun{f}^\omega \ireddown \fun{f}(\fun{g}(\fun{f}^\omega))}
          {
            \infer=
            {\fun{f}^\omega \ired \fun{g}(\fun{f}^\omega)}
            {
              \fun{f}^\omega \rstep \fun{g}(\fun{f}^\omega)
            }
          }
          &
          \fun{f}(\fun{g}(\fun{f}^\omega)) \rstep \fun{g}(\fun{g}(\fun{f}^\omega))
          &
          \infer=
          {\fun{g}(\fun{g}(\fun{f}^\omega)) \ireddown \fun{g}^\omega}
          {
            \infer=
            {\fun{g}(\fun{f}^\omega) \ired \fun{g}^\omega}
            {
              \infer=
              {\fun{g}(\fun{f}^\omega) \ireddown \fun{g}^\omega}
              {
                \infer=
                {\fun{f}^\omega \ired \fun{g}^\omega}
                {
                  \text{Figure~\ref{fig:fomega:gomega:1}}
                }
              }
            }
          }
        }
      }
    }
  \end{align*}
  \end{framed}
  \caption{A proof tree for 
    $\fun{f}^\omega 
    \to \fun{g}(\fun{f}^\omega) 
    \to \fun{g}(\fun{f}(\fun{g}(\fun{f}^\omega))) 
    \to \fun{g}^3(\fun{f}^\omega) 
    \to \fun{g}^4(\fun{f}^\omega) 
    \to \cdots \to^\omega \fun{g}^\omega$.}
  \label{fig:fomega:gomega:2}
\end{figure}

\subsection{Canonical Proof Trees and Parallel Permutation Equivalence}
In order to have a unique proof tree for every ordinal-indexed rewrite sequence, 
we introduce `canonical' proof trees for $\ired$.
We show that the correspondence of canonical proof trees and rewrite sequences is 
a one-to-many relationship,
and we characterise the class of rewrite sequences represented by 
a canonical proof tree in terms of `parallel permutation equivalence'.
\begin{defi}
  A proof tree for $\ired$ is called \emph{canonical} if
  \begin{enumerate}
    \item 
      every application of the \rsplit{} rule
      is an instance of the \emph{canonical} \rsplit{} rule
      \medskip
      \begin{align*}
        \begin{aligned}
        \infer=[\textit{canonical }\rsplit]
        {s \ired t}
        {s \mathrel{(\ireddownfin \relcomp \rstep)^*} \relcomp \ireddown t}
        \end{aligned} &\text{\quad, and}
      \end{align*}
    \item 
      every application of the $\rid$ rule is an instance of
      \begin{align*}
        \infer=[\textit{canonical }\rid]
        {x \ireddownmfin x}
        {}
      \end{align*}
  \end{enumerate}
\end{defi}

In contrast with the \rsplit{} rule from Definition~\ref{def:ired:restrict},
the \emph{canonical} \rsplit{} rule replaces
\begin{align*}
  (\ireddownfin \cup \rstep)^* \text{\quad by \quad} 
  (\ireddownfin \relcomp \rstep)^*\;.
\end{align*}
Thereby the canonical form enforces that $\ireddownfin$ and $\rstep$ alternate.

The canonical \rid{} rule replaces
\begin{align*}
  s \ireddownmfin s \text{\quad by \quad} 
  x \ireddownmfin x\;.
\end{align*}
Thereby it enforces unique proof trees for empty rewrite sequences (using rules $\rlift$ and canonical $\rid$).

\begin{defi}
  Let $p,q \in \nat^*$ be positions.
  We define
  \begin{enumerate}
    \item $p \le q$ if $pr = q$ for some $r \in \nat^*$,
    \item $p \parallel q$ if $p \not\le q$ and $q \not\le p$.
  \end{enumerate}
  If $p \parallel q$, then we say that $p$ and $q$ are \emph{parallel} (to each other).
\end{defi}

Recall that we consider ordinals $\alpha$ to be the set of all smaller ordinals:
$\alpha = \{\beta \mid \beta < \alpha\}$.
This allows us to speak about functions $f : \alpha \to \beta$.

\begin{defi}\label{def:permute}
  Let $\atrs$ be a term rewriting system.
  Let $S : s \to^\infty_\atrs t_1$ and $T : s \to^\infty_\atrs t_2$ 
  be strongly convergent rewrite sequences of length $\alpha$ and $\beta$, respectively.

  The rewrite sequence $S$ is called \emph{parallel permutation equivalent} to $T$
  if there exists a bijection $f : \alpha \to \beta$ such that
  \begin{enumerate}
    \item 
      $\therule{S}{\gamma} = \therule{T}{f(\gamma)}$ and 
      $\thepos{S}{\gamma} = \thepos{T}{f(\gamma)}$ 
      for every $\gamma < \alpha$, and
    \item 
      $\thepos{S}{\gamma_1} \parallel \thepos{S}{\gamma_2}$
      whenever $\gamma_1 < \gamma_2 < \alpha$ and $f(\gamma_1) > f(\gamma_2)$.
  \end{enumerate}
\end{defi}

\noindent
Observe that, the bijective mapping $f : \alpha \to \beta$ defines 
a permutation of the steps in the sequence~$S$.
The condition (i) guarantees that the step indexed by $\gamma$ in $S$ 
corresponds to the step indexed by $f(\gamma)$ in $T$ as follows:
both steps must arise from the same rule applied at the same position.
The condition (ii) ensures that steps that have been permuted
(changed their relative order in the sequence),
arise from contractions at parallel positions.

In the following definition we select a subsequence of the steps of $T$
that corresponds to (the permutation of) a prefix of $S$.
For this purpose, we consider a step to be the application 
of a certain rule at a certain position.
We do not take into account the source and the target of the steps
as these may change due to preceding steps being dropped (not selected).

\begin{defi}\label{def:permute:prefix}
  Let $S : s_0 \to^\alpha_\atrs s_\alpha$ and $T : t_0 \to^\beta_\atrs t_\beta$
  be parallel permutation equivalent with respect to the bijection $f : \alpha \to \beta$.
  Let $\kappa \le \alpha$, and define $S'$ as the prefix of $S$ of length $\kappa$.
  We define the \emph{permutation of $S'$ with respect to $f$}
  as the rewrite sequence obtained from $T$ 
  by selecting the subsequence of steps at indexes $\gamma < \beta$ for which $f^{-1}(\gamma) < \kappa$.
\end{defi}
As dropping (not selecting) a step changes all subsequent terms in the sequence, 
we need to show that the selected steps still form a rewrite sequence 
(the rules are applicable at the designated positions).

\begin{proof}[Proof of well-definedness of Definition~\ref{def:permute:prefix}]
  To prove that the selected subsequence of steps from $T$ forms a rewrite sequence, 
  we show that every non-selected (dropped) step is parallel to all subsequent selected steps.
  As a consequence, the dropped steps do not influence the applicability of the selected steps.
  Let us consider a step that is dropped,
  that is $\gamma < \beta$ with $f^{-1}(\gamma) \ge \kappa$,
  and a subsequent step that is selected,
  $\gamma'$ with $\gamma < \gamma' < \beta$ and $f^{-1}(\gamma') < \kappa$.
  From parallel permutation equivalence of $S$ and $T$
  it follows immediately that
  $\thepos{T}{\gamma} \parallel \thepos{T}{\gamma'}$
  since $f^{-1}(\gamma') < \kappa \le f^{-1}(\gamma)$ and $\gamma < \gamma'$.
\end{proof}

\begin{lem}\label{lem:permute:prefix}
  Let $S : s_0 \to^\alpha_\atrs s_\alpha$ and $T : t_0 \to^\beta_\atrs t_\beta$
  be parallel permutation equivalent with respect to the bijection $f : \alpha \to \beta$.
  Every prefix $S'$ of $S$ is parallel permutation equivalent 
  to the permutation of $S'$ with respect to $f$.
\end{lem}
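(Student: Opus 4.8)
The plan is to produce an explicit witnessing bijection for Lemma~\ref{lem:permute:prefix}. Write $T'$ for the permutation of $S'$ with respect to $f$ (Definition~\ref{def:permute:prefix}). First I would fix $\kappa \le \alpha$ so that $S'$ is the prefix of $S$ of length $\kappa$, and observe that the steps retained from $T$ to form $T'$ are exactly those whose index lies in the image set $\{\, f(\delta) \mid \delta < \kappa \,\} \subseteq \beta$ — call it $I$, since $f^{-1}(\gamma) < \kappa$ is equivalent to $\gamma \in I$. Being a set of ordinals, $I$ is well-ordered, hence order-isomorphic to a unique ordinal $\beta'$; let $e : \beta' \to I$ be the unique (strictly monotone) order isomorphism. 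By the way $T'$ is built, $\beta'$ is the length of $T'$ and its $\eta$-th step is the step of $T$ at index $e(\eta)$; crucially, selecting a subsequence of steps leaves the rule and position of each retained step unchanged (this is precisely what the well-definedness argument for Definition~\ref{def:permute:prefix} establishes), so $\therule{T'}{\eta} = \therule{T}{e(\eta)}$ and $\thepos{T'}{\eta} = \thepos{T}{e(\eta)}$.

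Next I would define $f' : \kappa \to \beta'$ by $f'(\gamma) = e^{-1}(f(\gamma))$. This is well-defined because $f(\gamma) \in I$ for $\gamma < \kappa$, and it is a bijection because the restriction of $f$ to $\kappa$ is a bijection onto $I$ and $e^{-1}$ is a bijection from $I$ onto $\beta'$. To verify condition (i) of Definition~\ref{def:permute}, note that for $\gamma < \kappa$ the $\gamma$-th step of $S'$ is literally the $\gamma$-th step of $S$, so using parallel permutation equivalence of $S$ and $T$ together with $e(f'(\gamma)) = f(\gamma)$ we get $\therule{S'}{\gamma} = \therule{S}{\gamma} = \therule{T}{f(\gamma)} = \therule{T'}{f'(\gamma)}$, and identically for $\thepos{\cdot}{\cdot}$. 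For condition (ii), suppose $\gamma_1 < \gamma_2 < \kappa$ with $f'(\gamma_1) > f'(\gamma_2)$; since $e^{-1}$ is strictly monotone this is equivalent to $f(\gamma_1) > f(\gamma_2)$, and then condition (ii) for $S$ and $T$ (applicable since $\kappa \le \alpha$) yields $\thepos{S}{\gamma_1} \parallel \thepos{S}{\gamma_2}$, which equals $\thepos{S'}{\gamma_1} \parallel \thepos{S'}{\gamma_2}$.

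The only point that genuinely needs care is the ordinal bookkeeping: recognizing that the index set of $T'$ is the image $I = \{ f(\delta) \mid \delta < \kappa\}$, that its order type $\beta'$ with the monotone enumeration $e$ is exactly what turns $f' = e^{-1} \circ (f \text{ restricted to } \kappa)$ into a bijection onto the indices of $T'$, and that $e$ and $e^{-1}$ being strictly monotone lets both clauses of Definition~\ref{def:permute} for $S'$ and $T'$ be read off from the corresponding clauses for $S$ and $T$. Beyond this I expect no real obstacle; everything else is a direct unfolding of the definitions.
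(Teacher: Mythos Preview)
Your proof is correct and takes the same approach as the paper, just spelled out in full: the paper's one-line argument merely observes that the order of selected steps is preserved both from $S$ to $S'$ and from $T$ to $T'$, which is precisely what your order isomorphism $e$ and the bijection $f' = e^{-1}\circ (f|_\kappa)$ formalise. The paper leaves the reindexing implicit; you have made it explicit.
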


\begin{proof}
  Follows immediately from parallel permutation equivalence of the rewrite sequences $S$ and $T$ 
  together with Definition~\ref{def:permute:prefix}
  since the order of the selected steps is preserved
  (both from $S$ to $S'$ as well as from $T$ to the subsequence of selected steps of $T$).
\end{proof}

The following lemma states that parallel permutation equivalent sequences
converge to the same target term. 
\begin{lem}\label{lem:same:target}
  If the rewrite sequences $S : s_0 \to^\alpha_\atrs s_\alpha$ and $T : t_0 \to^\beta_\atrs t_\beta$
  are parallel permutation equivalent, then we have $s_\alpha = t_\beta$.  
\end{lem}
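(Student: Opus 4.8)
The plan is to prove $s_\alpha = t_\beta$ by showing that the two (finite or infinite) terms carry the same symbol at every position $p \in \nat^*$, a missing position being treated as a distinguished symbol; since equality of terms is bisimilarity, this suffices. Throughout, fix a bijection $f : \alpha \to \beta$ witnessing the parallel permutation equivalence; by Definition~\ref{def:permute} the two sequences have a common source, so $s_0 = t_0$. The whole argument is position-local: fix $p$ once and for all.

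First I would establish a combinatorial fact about the steps acting at or above $p$. Put $A_p = \{\,\gamma < \alpha \mid \thepos{S}{\gamma} \le p\,\}$ and $B_p = \{\,\delta < \beta \mid \thepos{T}{\delta} \le p\,\}$. Both are finite: by Theorem~\ref{thm:finite} a strongly convergent reduction has only finitely many steps at each depth, hence only finitely many at depths $\le |p|$. Since $f$ preserves positions ($\thepos{S}{\gamma} = \thepos{T}{f(\gamma)}$, condition~(i) of Definition~\ref{def:permute}), $f$ restricts to a bijection $A_p \to B_p$; and since any two positions $\le p$ are prefixes of $p$, hence comparable and never parallel, condition~(ii) forces $f$ to be order-preserving on $A_p$. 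Being an order-preserving bijection between finite linear orders, $f|_{A_p}$ is the unique isomorphism $A_p \to B_p$; in particular $A_p = \emptyset \iff B_p = \emptyset$, and otherwise $f(\max A_p) = \max B_p$.

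Next I would show that the symbol of $s_\alpha$ at $p$ is already the symbol of $s_{\gamma^*+1}$ at $p$, where $\gamma^* = \max A_p$ (and equals the symbol of $s_0$ at $p$ when $A_p = \emptyset$). Indeed, every step $S[\gamma]$ with $\gamma > \gamma^*$ has $\thepos{S}{\gamma} \not\le p$, hence acts at a position parallel to $p$ or strictly below $p$, and therefore leaves the symbol at $p$ unchanged; a transfinite induction along $S$ --- successors being immediate and limit ordinals handled by the metric-convergence clause of strong convergence at depth $|p|$ --- shows the symbol at $p$ is constant on $(\gamma^*,\alpha]$, resp.\ on $[0,\alpha]$ when $A_p = \emptyset$. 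The same holds for $T$, with $\delta^* = \max B_p = f(\gamma^*)$. So, apart from the trivial case $A_p = B_p = \emptyset$ (settled by $s_0 = t_0$), it remains to compare $s_{\gamma^*+1}$ and $t_{\delta^*+1}$ at $p$. Say $S[\gamma^*]$ applies rule $\ell \to r$ at position $q \le p$ with substitution $\sigma^S$, and write $p = q p'$; by condition~(i), $T[\delta^*]$ applies the same rule at the same $q$, with some substitution $\sigma^T$. Then $\subtrm{s_{\gamma^*+1}}{q} = r\sigma^S$, so the symbol of $s_{\gamma^*+1}$ at $p$ is the symbol of $r\sigma^S$ at $p'$, which depends only on $r$ and $p'$ unless $p'$ passes through a variable $x$ of $r$, in which case it is the symbol of $\sigma^S(x)$ at the residual offset; likewise for $T$ with $\sigma^T$. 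Hence it suffices to prove $\subtrm{s_{\gamma^*}}{q} = \subtrm{t_{\delta^*}}{q}$, for then $\sigma^S = \sigma^T$ by determinism of matching, and we are done.

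To prove $\subtrm{s_{\gamma^*}}{q} = \subtrm{t_{\delta^*}}{q}$: the term $s_{\gamma^*}$ arises from $s_0$ by the $S$-steps with index $< \gamma^*$ in $S$-order, while $t_{\delta^*}$ arises from $s_0$ by the $S$-steps $\gamma$ with $f(\gamma) < f(\gamma^*)$ taken in order of $f(\gamma)$, carrying the same rules and positions (condition~(i)). Condition~(ii), applied to the index $\gamma^*$, gives that every $\gamma$ in the symmetric difference of the index sets $\{\gamma < \gamma^*\}$ and $\{\gamma : f(\gamma) < f(\gamma^*)\}$ has $\thepos{S}{\gamma} \parallel q$; and condition~(ii) applied to pairs shows that the permutation relating the two orderings of the common indices reverses only pairs of steps whose positions are parallel to each other. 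Steps at parallel positions act in disjoint subtrees, so they commute, and when parallel to $q$ they do not touch the subterm at $q$ at all; a transfinite commutation argument therefore yields $\subtrm{s_{\gamma^*}}{q} = \subtrm{t_{\delta^*}}{q}$, which closes the proof. I expect this last, transfinite, commutation step to be the main obstacle --- one cannot simply perform finitely many adjacent transpositions, so the invariance of $\subtrm{\cdot}{q}$ under these insertions, deletions and reorderings must be argued directly for arbitrary ordinal length; the stabilisation argument of the third paragraph, and the careful treatment of ``the symbol at $p$'' (including the case $p \notin \pos{s_\alpha}$, and its distinction from the subterm strictly below $p$), also need some attention but are routine.
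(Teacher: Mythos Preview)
Your route is genuinely different from the paper's.  The paper argues by transfinite induction on~$\alpha$, using the auxiliary construction ``permutation of a prefix with respect to~$f$'' (Definition~\ref{def:permute:prefix}) together with Lemma~\ref{lem:permute:prefix}: in the successor case one drops the last step of~$S$ and the single matching step of~$T$, lands on a pair that is again parallel permutation equivalent, and applies the induction hypothesis; in the limit case strong convergence furnishes a proper prefix containing all steps of depth~$\le n$, and one concludes $s_\alpha =_n t_\beta$ for every~$n$.  Your approach is instead position-local: fix~$p$, isolate the finitely many steps at positions~$\le p$, use condition~(ii) to see that $f$ is order-preserving on those indices, and reduce to the last such step.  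Up to and including the reduction to $\subtrm{s_{\gamma^*}}{q} = \subtrm{t_{\delta^*}}{q}$ your argument is correct and gives a pleasant combinatorial picture of why the two targets should agree.

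The gap is exactly where you locate it.  The prefixes $S|_{[0,\gamma^*)}$ and $T|_{[0,\delta^*)}$ are \emph{not} parallel permutation equivalent in general: $f$ does not restrict to a bijection between $[0,\gamma^*)$ and $[0,\delta^*)$, precisely because of the symmetric-difference indices you single out.  So you cannot invoke the lemma itself as an induction hypothesis, and the ``transfinite commutation argument'' you defer to is not a routine closing step but essentially a restatement of the problem: you must show that deleting steps parallel to~$q$ and reordering mutually parallel steps over an ordinal-length sequence leaves the subterm at~$q$ invariant, with no finite chain of adjacent transpositions available.  The paper's machinery is designed to sidestep exactly this difficulty: by removing a \emph{single} step from~$T$ (rather than a symmetric difference of indeterminate, possibly infinite, size) one stays inside the class of parallel permutation equivalent pairs, and ordinary transfinite induction on~$\alpha$ applies.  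If you want to salvage your line, the cleanest fix is to import Lemma~\ref{lem:permute:prefix}: the $f$-permutation of $S|_{[0,\gamma^*)}$ is parallel permutation equivalent to $S|_{[0,\gamma^*)}$ and differs from $T|_{[0,\delta^*)}$ only in steps at positions parallel to~$q$, hence has the same subterm at~$q$; but at that point you have essentially reconstructed the paper's induction inside your own argument.
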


\begin{proof}
  We prove the claim by induction on $\alpha$.
  Let $S : s_0 \to^\alpha_\atrs s_\alpha$ and $T : t_0 \to^\beta_\atrs t_\beta$
  be rewrite sequences that are parallel permutation equivalent.
  Let $f : \alpha \to \beta$ be such that the conditions of Definition~\ref{def:permute} are fulfilled.
  We distinguish cases for $\alpha$:
  \begin{enumerate}
    \item 
      If $\alpha = 0$, it follows that $\beta = 0$.
      Then $s_0 = t_0$ by definition of parallel permutation equivalence
      (the starting terms of the reductions must be equal).
    \item 
      If $\alpha$ is a successor ordinal $\alpha = \alpha' + 1$,
      we proceed as follows.
      
      Let $S'$ be the prefix $s \to^{\alpha'} s_{\alpha'}$ of $S$ of length $\alpha'$.
      In other words, $S'$ is the rewrite sequence 
      obtained from $S$ by dropping the last step $s_{\alpha'} \to s_\alpha$.
      
      Let $T'$ be the permutation of $S'$ with respect to $f$.
      So, $T'$ is the rewrite sequence $s \to^{\beta'} u_{\beta'}$ with $\beta' \le \beta$ obtained 
      from $T$ by dropping the step $t_{f(\alpha')} \to t_{f(\alpha')+1}$.
      Let
      \begin{align*}
      (\ell \to r) &= \therule{T}{f(\alpha')} &
      p &= \thepos{T}{f(\alpha')} &
      \sigma &= \thesub{T}{f(\alpha')}
      \end{align*}
      Recall that the step $t_{f(\alpha')} \to t_{f(\alpha')+1}$
      can be dropped from $T$ as its position
      is parallel to the positions of all subsequent steps in $T$:
      $\thepos{T}{f(\alpha')} \parallel \thepos{T}{\gamma}$
      for every $\gamma$ with $f(\alpha') < \gamma < \beta$.
      From this fact it also follows that 
      \begin{align}
        &&
        u_{\beta'}|_p = t_{f(\alpha')}|_p = \ell\sigma &&
        t_{\beta}|_p = t_{f(\alpha')+1}|_p &&
        t_{\beta} = u_{\beta'}[r\sigma]_p 
        \label{t:drop}
      \end{align}
      Observe that $S'$ is shorter than $S$ as we have removed the last step.
      In contrast, the length of $T'$ may be less or equal to the length of $T$.
      For example, dropping the $5$th step from an $\omega$-long sequence does not decrease its length.
      
      From parallel permutation equivalence of $S$ and $T$,
      it follows by Lemma~\ref{lem:permute:prefix}
      that $S'$ and $T'$ are parallel permutation equivalent.
      By induction hypothesis, we may conclude that $s_{\alpha'} = u_{\beta'}$. 
      From \eqref{t:drop} it follows that
      \begin{align*}
        s_{\alpha'}|_p &= \ell\sigma &
        s_{\alpha} &= s_{\alpha'}[r\sigma]_p 
      \end{align*}
      since 
      $\therule{S}{\alpha'} = \therule{T}{f(\alpha')} = \ell \to r$ and
      $\thepos{S}{\alpha'} = \thepos{T}{f(\alpha')} = p$.
      Then 
      \begin{align*}
        s_{\alpha} = s_{\alpha'}[r\sigma]_p = u_{\beta'}[r\sigma]_p = t_{\beta}\;.
      \end{align*}
    \item 
      Assume $\alpha$ is a limit ordinal.
      We use $s =_n t$ to denote that the terms $s$ and $t$ coincide up to depth $n$. 
      For $s_\alpha = t_\alpha$ it suffices to show that $s_\alpha =_n t_\alpha$ for every $n \in \nat$.
      
      Let $n \in \nat$ be arbitrary.
      By strong convergence of $S$ there exists 
      a strict prefix $S'$ of $S$ that contains all steps of $S$ at depth $\le n$. 
      So, if $\kappa$ is the length of $S'$,
      then all steps in $S$ at index an $\gamma \ge \kappa$ have depth $> n$.
      Let $T'$ be the permutation of $S'$ with respect to $f$.
      By Lemma~\ref{lem:permute:prefix}, $S'$ and $T'$ 
      are permutation equivalent, and by induction hypothesis, they have the same final term, say final term $u$.
      As all steps in $S$ after the last step of $S'$ are at depth $> n$ we obtain $u =_n s_\alpha$.
      Likewise, we get $u =_n t_\beta$ since $T'$ contains all steps of $T$ that are at depth $\le n$.
      Hence $s_\alpha =_n t_\beta$.
  \end{enumerate}
  This concludes the proof.
\end{proof}

\begin{prop}
  Parallel permutation equivalence is an equivalence relation.
\end{prop}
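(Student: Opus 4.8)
The plan is to verify reflexivity, symmetry and transitivity directly from Definition~\ref{def:permute}, using that the relation only constrains, via a bijection on step-indices, which rule is applied at which position and the relative order of those applications.

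\emph{Reflexivity.} Given a strongly convergent $S : s \to^\alpha_\atrs t$, I would take $f = \mathrm{id}_\alpha$. Condition~(i) of Definition~\ref{def:permute} is then trivial. For condition~(ii), $f(\gamma_1) > f(\gamma_2)$ means $\gamma_1 > \gamma_2$, which contradicts $\gamma_1 < \gamma_2$, so the condition holds vacuously.

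\emph{Symmetry.} Suppose $S : s \to^\alpha t_1$ is parallel permutation equivalent to $T : s \to^\beta t_2$ via $f : \alpha \to \beta$; I would show $T$ is parallel permutation equivalent to $S$ via $f^{-1}$. Condition~(i) for $f^{-1}$ is obtained by substituting $\gamma = f^{-1}(\delta)$ into condition~(i) for $f$. For condition~(ii), let $\delta_1 < \delta_2 < \beta$ with $f^{-1}(\delta_1) > f^{-1}(\delta_2)$, and put $\gamma_i = f^{-1}(\delta_i)$. Then $\gamma_2 < \gamma_1$ and $f(\gamma_2) = \delta_2 > \delta_1 = f(\gamma_1)$, so condition~(ii) for $f$, applied to the pair $\gamma_2 < \gamma_1$, gives $\thepos{S}{\gamma_2} \parallel \thepos{S}{\gamma_1}$; rewriting via condition~(i), $\thepos{T}{\delta_2} \parallel \thepos{T}{\delta_1}$, and since $\parallel$ is symmetric this is what is needed.

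\emph{Transitivity.} Suppose $S : s \to^\alpha s_\alpha$ is parallel permutation equivalent to $T : s \to^\beta t_\beta$ via $f : \alpha \to \beta$, and $T$ to $U : s \to^\gamma u_\gamma$ via $g : \beta \to \gamma$. I would take $g \circ f : \alpha \to \gamma$ as the witnessing bijection. Condition~(i) follows by composing the two instances of~(i). For condition~(ii), assume $\delta_1 < \delta_2 < \alpha$ with $g(f(\delta_1)) > g(f(\delta_2))$. Since $f$ is injective and $\delta_1 \ne \delta_2$, either $f(\delta_1) > f(\delta_2)$ or $f(\delta_1) < f(\delta_2)$. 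In the first case, condition~(ii) for $f$ directly yields $\thepos{S}{\delta_1} \parallel \thepos{S}{\delta_2}$. In the second case we have $f(\delta_1) < f(\delta_2)$ but $g(f(\delta_1)) > g(f(\delta_2))$, so condition~(ii) for $g$, applied to the pair $f(\delta_1) < f(\delta_2)$, gives $\thepos{T}{f(\delta_1)} \parallel \thepos{T}{f(\delta_2)}$, which condition~(i) for $f$ rewrites as $\thepos{S}{\delta_1} \parallel \thepos{S}{\delta_2}$.

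The only part carrying real content is the case split in the transitivity argument: a step is moved by $g \circ f$ either because it is already moved by $f$, or because $f$ leaves it in place while $g$ moves it; in the latter case one must transport the parallelism statement back along condition~(i) so that it speaks about positions in $S$ rather than in $T$, and this is precisely where it matters that condition~(i) forces the \emph{positions} (not just the rules) to agree along the bijections. Everything else is routine bookkeeping with bijections between ordinals and with the symmetry of $\parallel$.
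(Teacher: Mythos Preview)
Your proof is correct and follows essentially the same approach as the paper's: identity for reflexivity, $f^{-1}$ for symmetry, and $g\circ f$ for transitivity, with the same case split on whether $f$ or $g$ swaps the pair. If anything, you are slightly more explicit than the paper in invoking condition~(i) to transport the parallelism statement from $\thepos{T}{\cdot}$ back to $\thepos{S}{\cdot}$ in both the symmetry and the second transitivity case.
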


\proof
  We prove reflexivity, symmetry and transitivity:
  \begin{enumerate}[label=\({\alph*}]
    \item \emph{Reflexivity.}
      Parallel permutation equivalence of a rewrite sequence $S : s \to^\alpha t$ to itself
      is witnessed by choosing $f$ as the identity function on $\alpha$ in Definition~\ref{def:permute};
      then both conditions in the definition are trivially fulfilled.
    \item \emph{Symmetry.}
      Let $S : s \to^\alpha t$ be parallel permutation equivalent to $T : s \to^\beta t$
      with witnessing bijection $f : \alpha \to \beta$.
      Then parallel permutation equivalence of $T$ to $S$ is witnessed by $f^{-1}$;
      we check both conditions of Definition~\ref{def:permute}:
      \begin{enumerate}
        \item 
          $\therule{T}{\gamma} = \therule{T}{f(f^{-1}(\gamma))} = \therule{S}{f^{-1}(\gamma)}$ and \\
          $\thepos{T}{\gamma} = \thepos{T}{f(f^{-1}(\gamma))} = \thepos{S}{f^{-1}(\gamma)}$ 
          for every $\gamma < \beta$\\\hphantom{} \hfill (since $f^{-1}(\gamma) < \alpha$)
        \item 
          We have
          \begin{align*}
             \hspace{2cm}&\text{$\gamma_1 < \gamma_2 < \beta$ with $f^{-1}(\gamma_1) > f^{-1}(\gamma_2)$} \\
             &\quad\iff
              \text{$f^{-1}(\gamma_2) < f^{-1}(\gamma_1) < \alpha$ with $f(f^{-1}(\gamma_2)) = \gamma_2 > \gamma_1 = f(f^{-1}(\gamma_1))$}
          \end{align*}
          Hence $\thepos{T}{\gamma_1} \parallel \thepos{T}{\gamma_2}$
          whenever $\gamma_1 < \gamma_2 < \beta$ and $f^{-1}(\gamma_1) > f^{-1}(\gamma_2)$.
      \end{enumerate}
    \item \emph{Transitivity.}
      Let $S : s \to^\alpha t$ be parallel permutation equivalent to $T : s \to^\beta t$
      and $T : s \to^\beta t$ parallel permutation equivalent to $U : s \to^\delta t$
      witnessed by bijections $f : \alpha \to \beta$ and $g : \beta \to \delta$, respectively.      
      Then parallel permutation equivalence of $S$ to $U$ is witnessed by $g \funcomp f$;
      we check both conditions of Definition~\ref{def:permute}:
      \begin{enumerate}
        \item 
          $\therule{S}{\gamma} = \therule{T}{f(\gamma)} = \therule{U}{g(f(\gamma))}$ and\\ 
          $\thepos{S}{\gamma} = \thepos{T}{f(\gamma)} = \thepos{U}{g(f(\gamma))}$ 
          for every $\gamma < \alpha$, and
        \item 
          Assume that $\gamma_1 < \gamma_2 < \alpha$ with $g(f(\gamma_1)) > g(f(\gamma_2))$.
          Then either 
          $f(\gamma_1) > f(\gamma_2)$ or
          $f(\gamma_1) < f(\gamma_2) \;\wedge\; g(f(\gamma_1)) > g(f(\gamma_2))$.
          In the former case, 
          $\thepos{S}{\gamma_1} \parallel \thepos{S}{\gamma_2}$
          follows from parallel permutation equivalence of $S$ to $T$.
          In the latter case,
          it follows from parallel permutation equivalence of $T$ to $U$.
      \qedhere
      \end{enumerate}
  \end{enumerate}\smallskip

\noindent The following lemma implies that the witnessing function $f$ in the definition of
parallel permutation equivalence is unique (for fixed rewrite
sequences $S$ and $T$).
\newpage

\begin{lem}\label{lem:permute:injective}
  Let $S : s \to^\alpha t$ and $T : s \to^\beta t$ be rewrite sequences.
  Let $\alpha' \le \alpha$ and 
  let $f : \alpha' \to \beta$ be an an injective function with the properties
  \begin{enumerate}
    \item \label{lem:permute:injective:i}
      $\therule{S}{\gamma} = \therule{T}{f(\gamma)}$ and 
      $\thepos{S}{\gamma} = \thepos{T}{f(\gamma)}$ 
      for every $\gamma < \alpha'$, and
    \item \label{lem:permute:injective:ii}
      $\thepos{S}{\gamma_1} \parallel \thepos{S}{\gamma_2}$
      whenever $\gamma_1 < \gamma_2 < \alpha'$ and $f(\gamma_1) > f(\gamma_2)$.
    \item \label{lem:permute:injective:iii}
      $\thepos{T}{\gamma_1} \parallel \thepos{T}{\gamma_2}$
      whenever $\gamma_1 < \gamma_2 < \beta$, $\gamma_1$ is not in the image of $f$, but $\gamma_2$ is,
  \end{enumerate}
  Then $f$ is unique with these properties (\,for fixed $S$, $T$, $\alpha'$).
  
  Moreover, among the ordinals~$\le \alpha$,
  there exists a largest ordinal $\alpha'$ such that a function~$f$ with these properties exists.
\end{lem}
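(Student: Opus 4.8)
The plan is to prove the two assertions separately: first the uniqueness of~$f$ for a fixed domain~$\alpha'$, and then the existence of a largest admissible~$\alpha'$.

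\emph{Uniqueness.} Suppose $f,g : \alpha' \to \beta$ both satisfy \ref{lem:permute:injective:i}--\ref{lem:permute:injective:iii}, and let $\gamma < \alpha'$ be the least ordinal with $f(\gamma) \neq g(\gamma)$; without loss of generality $f(\gamma) < g(\gamma)$. By~\ref{lem:permute:injective:i} the $T$-steps at indices $f(\gamma)$ and $g(\gamma)$ use the same rule $\therule{S}{\gamma}$ at the same position $p := \thepos{S}{\gamma}$. Since $f$ and $g$ agree below $\gamma$ and both are injective, $f(\gamma)$ is either outside the image of~$g$, or equals $g(\delta)$ for some $\delta > \gamma$. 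In the first case, applying~\ref{lem:permute:injective:iii} for~$g$ to $\gamma_1 := f(\gamma) < \gamma_2 := g(\gamma)$ yields $\thepos{T}{f(\gamma)} \parallel \thepos{T}{g(\gamma)}$, i.e.\ $p \parallel p$, which is impossible. In the second case $\gamma < \delta$ while $g(\gamma) > g(\delta)$, so~\ref{lem:permute:injective:ii} for~$g$ gives $\thepos{S}{\gamma} \parallel \thepos{S}{\delta}$; but \ref{lem:permute:injective:i} forces $\thepos{S}{\delta} = \thepos{T}{g(\delta)} = \thepos{T}{f(\gamma)} = p = \thepos{S}{\gamma}$, again a contradiction. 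Hence $f = g$. Intuitively, conditions \ref{lem:permute:injective:ii} and \ref{lem:permute:injective:iii} are precisely what forbids reordering two steps occurring at one and the same position, which pins down~$f$.

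\emph{Existence of a largest $\alpha'$.} Let $A \subseteq \alpha + 1$ be the set of ordinals $\alpha' \le \alpha$ for which a witness~$f$ exists; note $0 \in A$ via the empty map. First I would show that $A$ is downward closed: if $f : \alpha' \to \beta$ is a witness and $\alpha'' < \alpha'$, then $f|_{\alpha''}$ witnesses $\alpha'' \in A$. Conditions \ref{lem:permute:injective:i} and \ref{lem:permute:injective:ii} are inherited verbatim; for \ref{lem:permute:injective:iii} the only new case is $\gamma_1 < \gamma_2 < \beta$ with $\gamma_2 \in \mathrm{im}(f|_{\alpha''})$ and $\gamma_1 \in \mathrm{im}(f)\setminus\mathrm{im}(f|_{\alpha''})$, and there \ref{lem:permute:injective:ii} for~$f$ applied to the two preimages $\delta_2 < \delta_1$ of $\gamma_2,\gamma_1$ (which satisfy $f(\delta_2) > f(\delta_1)$) delivers $\thepos{S}{\delta_2} \parallel \thepos{S}{\delta_1}$, which by~\ref{lem:permute:injective:i} is exactly $\thepos{T}{\gamma_1}\parallel\thepos{T}{\gamma_2}$. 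Next, by the uniqueness already proved, the witnesses $f_{\alpha'}$ for the elements of~$A$ cohere under restriction, so for a limit $\lambda \le \alpha$ with $\alpha' \in A$ for all $\alpha' < \lambda$ the union $f := \bigcup_{\alpha' < \lambda} f_{\alpha'} : \lambda \to \beta$ is a well-defined injection. Conditions \ref{lem:permute:injective:i} and \ref{lem:permute:injective:ii} hold for~$f$ because any relevant pair of indices already lies in the domain of some $f_{\alpha'}$; for \ref{lem:permute:injective:iii} note $\mathrm{im}(f) = \bigcup_{\alpha'<\lambda}\mathrm{im}(f_{\alpha'})$, so $\gamma_1 \notin \mathrm{im}(f)$ implies $\gamma_1 \notin \mathrm{im}(f_{\alpha'})$ for every~$\alpha'$, and picking $\alpha'$ with $\gamma_2 \in \mathrm{im}(f_{\alpha'})$ the property descends from $f_{\alpha'}$. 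Hence $\lambda \in A$. Thus $A$, regarded as an ordinal, is nonzero and closed under limits, so it is a successor ordinal $\mu + 1$, and $\mu \le \alpha$ is the desired largest~$\alpha'$.

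I expect the main obstacle to be the bookkeeping for condition~\ref{lem:permute:injective:iii} in the two closure steps (restriction and union): unlike \ref{lem:permute:injective:i} and \ref{lem:permute:injective:ii}, it constrains indices of~$T$ \emph{outside} the image of~$f$, and whenever that image shrinks, new such indices appear, forcing one to re-derive the needed parallelism from condition~\ref{lem:permute:injective:ii} of the larger witness via condition~\ref{lem:permute:injective:i}. Once this interplay between \ref{lem:permute:injective:i}, \ref{lem:permute:injective:ii} and \ref{lem:permute:injective:iii} is set up carefully, the remainder is a routine transfinite induction on the domain.
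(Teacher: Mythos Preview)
Your proposal is correct and follows essentially the same two-part strategy as the paper (uniqueness via a transfinite/least-counterexample argument, existence of a largest $\alpha'$ via a supremum construction), but there are two differences worth noting.

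First, for uniqueness, the paper does a case split on $\alpha'$ (zero, successor, limit). In the successor case $\alpha' = \alpha''+1$ the paper can immediately conclude that $f(\alpha'')$ is not in the image of $g$ at all, so only your first subcase arises; your least-counterexample formulation needs the additional second subcase (where $f(\gamma) = g(\delta)$ for some $\delta > \gamma$), which you handle correctly via property~(ii). Both are fine; the paper's version is slightly shorter, yours is slightly more self-contained.

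Second, for the existence of a largest $\alpha'$, the paper invokes that strongly convergent rewrite sequences have countable length and hence that the supremum $\xi$ has cofinality $\omega$, then glues the witnesses along an $\omega$-chain $\xi_1 < \xi_2 < \cdots$. Your argument avoids this detour entirely: you take the union over \emph{all} $\alpha' < \lambda$ directly, and the uniqueness part already guarantees coherence. This is cleaner and does not depend on countability. Moreover, the paper merely asserts downward closure of the set of admissible $\alpha'$ (``Note that $\xi'' < \xi'$ and $P(\xi')$ imply $P(\xi'')$''), whereas you correctly identify that condition~(iii) does \emph{not} restrict trivially and supply the missing argument (new indices outside the shrunken image are handled via~(ii) of the larger witness through~(i)). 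This is exactly the subtlety the paper glosses over.
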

Before we prove the lemma, let us give some intuition for the conditions~\ref{lem:permute:injective:i}--\ref{lem:permute:injective:iii}.
Item~\ref{lem:permute:injective:i} ensures that $f$ correctly embeds $S$ into $T$ in the sense it respects redex position and the applied rule.
Condition~\ref{lem:permute:injective:ii} guarantees that $f$ only swaps steps that are at parallel positions.
Finally, condition~\ref{lem:permute:injective:iii} requires that steps in $T$ that are not in the image of $f$
must be parallel to all subsequent steps in $T$ that are in the image of $f$.
\begin{proof}[Proof of Lemma~\ref{lem:permute:injective}]
  We prove uniqueness of $f$ by induction on $\alpha'$

  The base case $\alpha' = 0$ is trivial.

  For $\alpha'$ a successor ordinal, $\alpha' = \alpha'' + 1$, we argue as follows.
  Let $f, g : \alpha' \to \beta$ be injective functions fulfilling the 
  properties~\ref{lem:permute:injective:i}--\ref{lem:permute:injective:iii}.
  Then by induction hypothesis, the functions $f|_{\alpha''}$ and $g|_{\alpha''}$ coincide.
  Assume, for a contradiction, $f(\alpha'') \ne g(\alpha'')$.
  Without loss of generality we may assume $f(\alpha'') < g(\alpha'')$. 
  Then $f(\alpha'') < g(\alpha'') < \beta$ 
  and $f(\alpha'')$ is not in the image of $g$, but $g(\alpha'')$ is.
  However,
  $\thepos{T}{f(\alpha'')} = \thepos{S}{\alpha''} = \thepos{T}{g(\alpha'')}$,
  contradicting property \ref{lem:permute:injective:iii} for the function $g$.
  Hence $f$ and $g$ coincide.
  
  Let $\alpha'$ be a limit ordinal and 
  let $f,g : \alpha' \to \beta$ be injective functions 
  fulfilling the properties~\ref{lem:permute:injective:i}--\ref{lem:permute:injective:iii}. 
  Assume, for a contradiction, that $f(\alpha'') \ne g(\alpha'')$ for some $\alpha'' < \alpha'$.
  However, we have that $\alpha'' + 1 < \alpha'$ since $\alpha'$ is a limit ordinal.
  By induction hypothesis we obtain
  $f|_{\alpha''+1}$ coincides with $g|_{\alpha''+1}$.
  Thus $f(\alpha'') = g(\alpha'')$, contradicting our assumption. 
  Hence, $f$ and $g$ coincide.
  This concludes the proof of uniqueness of $f$.
  
  We write $P(\alpha')$
  if there exists an injective function $f : \alpha' \to \beta$ with 
  properties~\ref{lem:permute:injective:i}--\ref{lem:permute:injective:iii}.
  It remains to be shown that there exists a largest ordinal $\alpha' \le \alpha$
  for which $P(\alpha')$ holds.
  Let $\xi$ be the supremum (union) of all ordinals $\xi' \le \alpha$ for which $P(\xi')$;
  note that this set is non-empty since always $P(0)$ holds.
  If $P(\xi)$, then $\xi$ is the largest of these ordinals.
  Thus assume $\neg P(\xi)$.
  Then $\xi$ is a limit ordinal.
  Note that $\xi'' < \xi'$ and $P(\xi')$ imply $P(\xi'')$.
  As a consequence we have $P(\xi')$ for every $\xi' < \xi$.
  As the length of every rewrite sequence is countable~\cite{tere:2003},
  it follows that $\alpha$ and thus $\xi$ are countable.
  Every countable limit ordinal has cofinality $\omega$.
  Thus there exist ordinals $\xi_1 < \xi_2 < \xi_3 < \ldots$, each of which $< \xi$, 
  such that $\xi = \bigcup \,\{\, \xi_i \mid i \in \nat \,\}$.
  Then $P(\xi_i)$ for every $i \in \nat$.
  For every $i \in \nat$, 
  there exists an injective function $f_i : \xi_i \to \beta$ fulfilling 
  properties~\ref{lem:permute:injective:i}--\ref{lem:permute:injective:iii} for $\alpha' = \xi_i$.
  From the uniqueness (shown above), it follows that $f_i$ coincides with $f_j|_{\xi_i}$ for every $i < j$.
  As a consequence, we can define $f : \xi \to \beta$ by 
  \begin{align*}
    \text{$f(\xi') = f_i(\xi')$ whenever $i \in \nat$ and $\xi' < \xi_i$}\,.
  \end{align*}
  We claim that $f$ is injective and has the properties~\ref{lem:permute:injective:i}--\ref{lem:permute:injective:iii}
  and hence $P(\xi)$ holds; contradicting the above assumption.
  Injectivity and property~\ref{lem:permute:injective:i} are immediate.
  Property~\ref{lem:permute:injective:ii} follows from the fact that
  for $\gamma_1 < \gamma_2 < \xi$
  there exists $i \in \nat$ such that $\gamma_1, \gamma_2 < \xi_i$
  (since $\xi$ is the supremum of the $\xi_i$'s).
  Then $f_i$ fulfilling property~\ref{lem:permute:injective:ii} for $\gamma_1, \gamma_2$
  implies $f$ fulfilling property~\ref{lem:permute:injective:ii} for $\gamma_1, \gamma_2$.
% 
%   We obtain property~\ref{lem:permute:injective:iii} for $f$ since 
%   for $\gamma_1 < \gamma_2 < \beta$
%   there exists $i \in \nat$ such that
%   for every $j \ge i$ we have that $\gamma_1$ (or $\gamma_2$) is in the image of $f_j$ 
%   if and only if  $\gamma_1$ (or $\gamma_2$) is in the image of $f$.
% 
  Analogously, property~\ref{lem:permute:injective:iii} follows from the following observation:
  whenever $\gamma_1 < \gamma_2 < \beta$, 
  if $\gamma_2$ is in the image of $f$ and $\gamma_1$ is not, then there exists some $i$ such that $\gamma_2$ is in the image of $f_i$ (because $f$ is the union of all $f_j$),
  while $\gamma_1$ is not (since the image of $f$ contains that of all $f_j$).
\end{proof}

\begin{lem}\label{lem:equivalent:form}
  Let $S : s \to^\alpha t$ be a rewrite sequence.
  Then $S$ is of the form
  \begin{align*}
    S = S_1 \relcomp S_2 \relcomp S_3 \relcomp \cdots \relcomp S_{2n+1}
  \end{align*}
  for some $n \in \nat$ such that for every $i \in \{\,1,\ldots,2n\,\}$ we have:
  \begin{enumerate}
    \item for odd $i$, $S_i : s_i \to^\alpha s_{i+1}$ is a reduction below the root,
    \item for even $i$, $S_i : s_i \rstep s_{i+1}$ is a root step. 
  \end{enumerate}
  If $S$ is parallel permutation equivalent to $T : s \to^\beta t$,
  then $T$ is of the form:
  \begin{align*}
    T = T_1 \relcomp T_2 \relcomp T_3 \relcomp \cdots \relcomp T_{2n+1}
  \end{align*}
  such that $S_i$ is parallel permutation equivalent to $T_i$ for every $i \in \{\,1,\ldots,2n+1\,\}$.
\end{lem}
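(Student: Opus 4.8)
The statement has two parts, and the plan is to handle them in turn. For the decomposition of $S$ I would invoke Theorem~\ref{thm:finite}, which tells us that a strongly convergent rewrite sequence has only finitely many steps at depth $0$, i.e.\ only finitely many root steps. So $S$ has exactly $n$ root steps for some $n \in \nat$, occurring at ordinal indices $\mu_1 < \cdots < \mu_n < \alpha$; cutting $S$ at these indices gives $S = S_1 \relcomp S_2 \relcomp \cdots \relcomp S_{2n+1}$, where $S_{2j}$ is the root step at index $\mu_j$ and $S_{2j-1}$ is the (possibly empty) segment of $S$ strictly between $\mu_{j-1}$ and $\mu_j$, with the conventions that $S_1$ is everything before $\mu_1$ and $S_{2n+1}$ everything after $\mu_n$. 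By construction every $S_{2j-1}$ avoids root steps, hence is a reduction below the root.

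For the second part, let $f : \alpha \to \beta$ be the bijection witnessing parallel permutation equivalence of $S$ and $T$. The key observation I would isolate first is that the root position $\posemp$ is $\parallel$-incomparable to every position, since $\posemp \le q$ for all $q$; hence condition~(ii) of Definition~\ref{def:permute} forbids $f$ to reverse the relative order of a root step with any other step: if $\gamma$ indexes a root step of $S$ and $\gamma < \gamma'$ then $f(\gamma) < f(\gamma')$, and if $\gamma' < \gamma$ then $f(\gamma') < f(\gamma)$, since otherwise~(ii) would demand $\posemp \parallel \thepos{S}{\gamma'}$. Next, by condition~(i) we have $\thepos{S}{\gamma} = \thepos{T}{f(\gamma)}$, so $f$ sends root-step indices of $S$ to root-step indices of $T$, and the same identity in the form $\thepos{S}{f^{-1}(\delta)} = \thepos{T}{\delta}$ shows $f$ restricts to a bijection between the root-step indices of the two sequences. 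Since that bijection is order-preserving, $T$ also has exactly $n$ root steps, say at indices $\nu_1 < \cdots < \nu_n$ with $f(\mu_j) = \nu_j$, so $T$ decomposes in the same way as $T = T_1 \relcomp \cdots \relcomp T_{2n+1}$, and $S_{2j}$, $T_{2j}$ apply the same rule at the root because $\therule{S}{\mu_j} = \therule{T}{\nu_j}$.

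It then remains to show $S_i$ is parallel permutation equivalent to $T_i$ for each $i$. The order-preservation of $f$ relative to root steps forces $f$ to map the index interval $\{\gamma : \mu_{j-1} < \gamma < \mu_j\}$ of $S_{2j-1}$ into the index interval $\{\delta : \nu_{j-1} < \delta < \nu_j\}$ of $T_{2j-1}$ (neither $f(\gamma) \le f(\mu_{j-1})$ nor $f(\gamma) \ge f(\mu_j)$ is possible); since these intervals partition the non-root indices on each side and $f$ is a bijection, each restriction $f_{2j-1}$ of $f$ is in fact a bijection of one interval onto the other. Composing with the canonical order isomorphisms from the lengths of $S_{2j-1}$ and $T_{2j-1}$ onto these intervals turns $f_{2j-1}$ into a bijection between the lengths; condition~(i) for $f_{2j-1}$ is inherited verbatim (the decomposition changes neither the rule nor the position of any step), and condition~(ii) for $f_{2j-1}$ follows from~(ii) for $f$ because the re-indexing isomorphisms preserve order on both sides. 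Finally I would glue sources and targets by induction along $i = 1, \ldots, 2n+1$: $S_1$ and $T_1$ both start at $s$, and once $S_i$ is parallel permutation equivalent to $T_i$, Lemma~\ref{lem:same:target} gives equal targets, hence $S_{i+1}$ and $T_{i+1}$ start at a common term; the even-indexed blocks are single root steps with the same source and rule, hence coincide and are trivially parallel permutation equivalent.

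The main obstacle I anticipate is precisely the re-indexing step in the last paragraph: because $f$ is a bijection of ordinals that need not be order-preserving, one must argue carefully that it nonetheless restricts to a bijection between corresponding blocks and that both defining conditions of parallel permutation equivalence survive the transfer. Everything there hinges on the single combinatorial fact that the root position is parallel to nothing, which pins down the coarse block structure even while the order of steps inside a block may be scrambled.
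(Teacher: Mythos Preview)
Your proposal is correct and follows essentially the same approach as the paper: the core observation is that the root position $\posemp$ is parallel to nothing, so condition~(ii) of Definition~\ref{def:permute} forbids any step from swapping with a root step, which forces $f$ to respect the block structure. You are more explicit than the paper on several points the paper glosses over, notably invoking Theorem~\ref{thm:finite} for the first part, spelling out the re-indexing via order isomorphisms, and appealing to Lemma~\ref{lem:same:target} to align the sources of corresponding blocks.
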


\begin{proof}
  From the definition of parallel permutation equivalence,
  it follows that 
  \begin{itemize}[label=($\star$)]
    \item [($\star$)] no steps can swap the order with a root step.
  \end{itemize}
  As a consequence, $T$ contains the same root steps as $S$, in the same order. 
  Hence $T$ is of the form $T = T_1 \relcomp T_2 \relcomp T_3 \relcomp \cdots \relcomp T_{2n+1}$
  such that for every $i \in \{\,1,\ldots,2n\,\}$ we have:
  \begin{enumerate}
    \item for odd $i$, $T_i : t_i \to^\alpha t_{i+1}$ is a reduction below the root,
    \item for even $i$, $T_i : t_i \rstep t_{i+1}$ is a root step, the same as $S_i$. 
  \end{enumerate}
  From ($\star$) it moreover follows that
  $S_i$ is parallel permutation equivalent to $T_i$ for every $i \in \{\,1,\ldots,2n+1\,\}$.
%   swapping a step in $S_i$ with one in $S_j$ where $i \neq j$ would imply at least one swap with a root step.
  The reason is that there cannot be a step $s \in S_i$ with $f(s) \in T_j$ where $i,j$ are odd and $j \neq i$.
  For otherwise, this would imply a swap of $s$ with respect to the root step $S_{i+1}$ (if $i < j$) or the root step $S_{i-1}$ (if $i > j$).
\end{proof}

The next definition extracts the order in which certain rule applications occur in a rewrite sequence.
A \emph{rule application} is formally a pair $\pair{\rho}{p} \in \atrs \times \nat^*$ consisting of a rewrite rule and a position.
A \emph{step} in a rewrite sequence of length $\alpha$ is a triple $\pair{\beta}{\rho,p}$ where $\beta < \alpha$ is an index and $\pair{\rho}{p}$ is a rule application.
Given a rewrite sequence $S$, consider the sequence of rule applications that take place at each step in $S$.
We are interested in the subsequence of all those
$\pair{\rho}{p}$
that fall in a given set $P \subseteq \atrs \times \nat^*$.
\begin{defi}
  Let $S : s \to^\alpha t$ be a rewrite sequence.
  The \emph{rule application sequence of $S$} is the sequence
  $\rulapp(S)\colon \alpha \to \atrs \times \nat^*$ given by
  $\rulapp(S)(\beta) = \pair{\therule{S}{\beta}}{\thepos{S}{\beta}}$ for all $\beta < \alpha$.
  Given a set $P \subseteq \atrs \times \nat^*$ of \emph{rule applications},
  we define the \emph{$P$-projection of $S$} as the
  subsequence $\proj{P}{S}$ of $\rulapp(S)$ obtained by picking all rule applications in $P$.

  In other words, $\proj{P}{S}$ is a function $\proj{P}{S}: \beta \to P$ for some ordinal $\beta \le \alpha$ 
  together with an embedding $f : \beta \to \alpha$ such that:
  \begin{enumerate}
    \item \textit{$f$ is an embedding:}
      $\proj{P}{S}(\gamma) = \pair{\,\therule{S}{f(\gamma)}}{\,\thepos{S}{f(\gamma)}\,}$ for every $\gamma < \beta$, 
    \item \textit{$f$ is increasing:} 
      $f(\gamma_1) < f(\gamma_2)$ whenever $\gamma_1 < \gamma_2 < \beta$, and
    \item \textit{$f$ selects all rule applications in $P$:}
      for all steps $\pair{\gamma,\therule{S}{\gamma}}{\,\thepos{S}{\gamma}\,}$ in $S$, if $\pair{\,\therule{S}{\gamma}}{\,\thepos{S}{\gamma}\,} \in P$ then $\gamma$ is in the image of $f$.
  \end{enumerate}
  We say that rewrite sequences $S_1$ and $S_2$ have the \emph{same order of rule applications in $P \subseteq \atrs \times \nat^*$} if $\proj{P}{S_1} = \proj{P}{S_2}$ (here we mean point-wise equality).
\end{defi}
It should be clear that for given $S$ and $P$, $\proj{P}{S}$ is well-defined.

\begin{exa}
Consider a rewrite system with rules
$\begin{array}[t]{clcl}
  (\rho_1) & \fun{a}(x) & \to & \fun{a}(\fun{a}(x))\\
  (\rho_2) & \fun{b}(x) & \to & \fun{b}(\fun{b}(x))\\
  (\rho_3) & \fun{f}(x,y) & \to & \fun{f}(\fun{a}(\fun{c}),\fun{b}(\fun{c}))
\end{array}$\\
Then we can rewrite
\[\begin{array}[t]{lcll}
S: \fun{f}(\fun{a}(\fun{c}),\fun{b}(\fun{c})) & 
 \to & \fun{f}(\fun{a}(\fun{a}(\fun{c})),\fun{b}(\fun{c}))       & \text{using } \pair{\rho_1}{0}\\
& \to & \fun{f}(\fun{a}(\fun{a}(\fun{a}(\fun{c}))),\fun{b}(\fun{c}))    & \text{using } \pair{\rho_1}{0}\\
& \to & \fun{f}(\fun{a}(\fun{a}(\fun{a}(\fun{c}))),\fun{b}(\fun{b}(\fun{c}))) & \text{using } \pair{\rho_2}{1}\\
& \to & \fun{f}(\fun{a}(\fun{c}),\fun{b}(\fun{c}))          & \text{using } \pair{\rho_3}{\epsilon}\\
& \to & \fun{f}(\fun{a}(\fun{c}),\fun{b}(\fun{b}(\fun{c})))       & \text{using } \pair{\rho_2}{1}\\
& \to & \fun{f}(\fun{a}(\fun{a}(\fun{c})),\fun{b}(\fun{b}(\fun{c})))    & \text{using } \pair{\rho_1}{0}\\
\end{array}
\]
Taking $P = \{\pair{\rho_1}{0}, \pair{\rho_2}{1}\}$, we get that 
$\proj{P}{S} =  \pair{\rho_1}{0}, \pair{\rho_1}{0}, \pair{\rho_2}{1}, \pair{\rho_2}{1}, \pair{\rho_1}{0}$,
and, in particular, $\beta=5$.
\end{exa}

The following lemma states, for a given proof tree, 
the order of non-parallel rule applications is the same in every rewrite sequence corresponding to the proof tree.
\begin{lem}\label{lem:order:non:parallel}
  Let $\delta$ be a proof tree (Definition~\ref{def:ired:restrict})
  and let $P \subseteq \atrs \times \nat^*$ be a set of rule applications
  such that for every $\pair{\rho_1}{p_1}, \pair{\rho_2}{p_2} \in P$ 
  we have $p_1 \nparallel p_2$. 
  If rewrite sequences $S$ and $T$ both correspond to $\delta$, then $\proj{P}{S} = \proj{P}{T}$.
\end{lem}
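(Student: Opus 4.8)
The plan is to proceed by a structural case analysis on the last rule applied in $\delta$, the recursion into subtrees being justified coinductively (the correspondence relation of Definition~\ref{def:correspondence} is a greatest fixed point, and the proof trees of Definition~\ref{def:ired:restrict} need not be well-founded). I would first record an auxiliary observation: any two rewrite sequences corresponding to the same proof tree carry exactly the same rule applications, i.e.\ there is a bijection between their index sets matching steps with the same applied rule at the same position. This follows by the same case analysis, since the $\rsplit$-clause only concatenates sequences and the $\rlift$-clause only re-interleaves them, neither altering the rule or the position of a single step.

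The easy cases come first. If $\delta$ is a root-step axiom $s \rstep t$, then both $S$ and $T$ are the single step $s \rstep t$ by Definition~\ref{def:correspondence}, so $\proj{P}{S} = \proj{P}{T}$. If $\delta$ ends with $\rid$, then $S$ and $T$ are both empty and the projections again agree. If $\delta$ ends with $\rsplit$, with fixed premises $\delta_1,\ldots,\delta_n$, then $S = S_1 \relcomp \cdots \relcomp S_n$ and $T = T_1 \relcomp \cdots \relcomp T_n$ with $S_i$ and $T_i$ both corresponding to $\delta_i$. Projection distributes over finite concatenation, so $\proj{P}{S} = \proj{P}{S_1} \cdots \proj{P}{S_n}$ and likewise for $T$; each $\delta_i$ is either a root-step axiom (so $S_i = T_i$) or a proof of a below-root step to which the coinduction hypothesis applies with the same, still pairwise non-parallel, set $P$. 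Hence $\proj{P}{S_i} = \proj{P}{T_i}$ for all $i$ and the concatenations coincide.

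The crux is the $\rlift$-case: $\delta$ has fixed premises $\delta_1,\ldots,\delta_n$ deriving $s_i \ired t_i$, and $S$ (resp.\ $T$) is an interleaving of $S_1,\ldots,S_n$ (resp.\ $T_1,\ldots,T_n$) with root $f$, where $S_i$ and $T_i$ both correspond to $\delta_i$. By the interleaving conditions every step of $S$ coming from the $i$-th argument sits at a position $i \cdot p$, so two $P$-relevant steps of $S$ coming from distinct arguments would have parallel positions, contradicting pairwise non-parallelism of $P$. Thus at most one argument ever contributes a step with rule application in $P$; and by the auxiliary observation, whether $\delta_i$ contributes --- i.e.\ whether the sequences corresponding to $\delta_i$ contain a step with rule application in $P_i := \{\,\pair{\rho}{p} \mid \pair{\rho}{i \cdot p} \in P\,\}$ --- depends on $\delta_i$ and $P_i$ alone. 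Hence a single index $k$ (or none) contributes, the same for $S$ and $T$. If none, $\proj{P}{S}$ and $\proj{P}{T}$ are both empty. Otherwise, since the restriction of the interleaving bijection to the $k$-th argument is monotone, $\proj{P}{S}$ is $\proj{P_k}{S_k}$ with every position re-prefixed by $k$, and $\proj{P}{T}$ is $\proj{P_k}{T_k}$ re-prefixed by $k$; since prefixing and de-prefixing by a fixed letter preserves $\le$ and $\parallel$, $P_k$ is again pairwise non-parallel, so the coinduction hypothesis on $\delta_k$ gives $\proj{P_k}{S_k} = \proj{P_k}{T_k}$, whence $\proj{P}{S} = \proj{P}{T}$.

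I expect the main obstacle to be the $\rlift$-bookkeeping: pinning down that \emph{at most one} argument is $P$-relevant and that $P$-relevance of an argument is a property of the subtree, not of the chosen corresponding rewrite sequence --- precisely where the ``same rule applications'' observation is needed --- and making the recursion a genuine coinduction, since a $P$-relevant path through $\delta$ may be infinite (for instance with $P = \{\,\pair{\ell \to r}{1^n} \mid n \ge 0\,\}$ and $\delta$ as in Figure~\ref{fig:aComega}). A lesser point is to check that, in an interleaving where only the $k$-th component carries $P$-relevant steps, those steps appear in $\proj{P}{S}$ in the same order as in $S_k$; this is immediate from clause~(iv) of the definition of interleaving restricted to argument $k$.
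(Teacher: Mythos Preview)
Your case analysis is sound, and in particular your handling of the $\rlift$ case---identifying the unique contributing argument via non-parallelism and re-prefixing---captures exactly the right bookkeeping. The gap is the recursion principle. Equality of ordinal-indexed sequences is not a coinductive predicate in any useful sense, and your structural recursion through the (possibly non-well-founded) proof tree makes no measurable progress towards such an equality. Your own example already shows this: with $P = \{\pair{\ell \to r}{1^n} \mid n \ge 0\}$ and the looping tree of Figure~\ref{fig:aComega}, the $\rlift$ step sends $P$ to $P_1 = P$ and the premise of the $\rlift$ is $\delta$ itself, so you recurse into literally the same goal. There is no bisimulation-style closure here: you would need to exhibit $\proj{P}{S} = \proj{P}{T}$ as a greatest fixed point and show that ``finite concatenation of pieces, each either equal or related after a prefix-strip'' is a post-fixed point of the corresponding functor---but that relation does not force equality.

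The paper's fix is to first reduce to $|P| \le 2$ via
\[
  \proj{P}{S} = \proj{P}{T} \;\;\Longleftrightarrow\;\; \forall Q \subseteq P.\; |Q| \le 2 \implies \proj{Q}{S} = \proj{Q}{T},
\]
and then, writing $P = \{\pair{\rho_1}{p_1}, \pair{\rho_2}{p_2}\}$ with $p_1 \le p_2$, run a well-founded induction on the length of $p_2$. At each $\rlift$ the longer position strictly shortens (so the induction terminates), and knowing $p_2 = i p_2'$ tells you directly which argument to descend into---so your auxiliary ``same rule applications'' observation is no longer needed. The $\rsplit$ case merely decomposes into subtrees with the \emph{same} $P$ and immediately defers to the $\rlift$ case at each non-trivial premise. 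The remaining case analysis is essentially yours; what you are missing is replacing the ill-founded structural recursion on $\delta$ by this finite numeric induction.
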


\remove{
\begin{proof}
  It suffices to consider the case that $P$ consists of at most $2$ elements since:
  \begin{align*}
    \proj{P}{S_1} = \proj{P}{S_2} \;\;\Longleftrightarrow\;\; \forall Q \subseteq P.\, ({{|}Q{|}} \le 2 \;\implies\; \proj{Q}{S_1} = \proj{Q}{S_2}) \;.
  \end{align*}
  So let $P = \{\,\pair{\rho_1}{p_1},\, \pair{\rho_2}{p_2}\,\}$ with $p_1 \nparallel p_2$ (this includes the case of a single element).
  Without loss of generality, we may assume that $p_1 \le p_2$.
  
  The proof proceeds by induction on the length of $p_2$.
  Let $S_1,S_2$ be rewrite sequences corresponding to $\delta : s \ired t$.
  We show that $S_1$ and $S_2$ have the same order of rule applications in $P$.
  We distinguish cases according to the root of $\delta$:
  \begin{enumerate}
    \item In the case of $\rsplit$
      \begin{align*}
        \infer=[\rsplit]
        {s \ired t}
        {\delta_1 \quad \delta_2  \quad\cdots\quad \delta_n}
      \end{align*}
      we have that $S_1$ and $S_2$ are both concatenations of 
      rewrite sequences corresponding to the trees $\delta_1,\delta_2,\ldots,\delta_n$.
      Thus for $S_1$ and $S_2$ to contain the the same order of rule applications in $P$, 
      it suffices that, for every $i \in \{\,1,2,\ldots,n\,\}$,
      the rewrite sequences corresponding to $\delta_i$ have the same order of rule applications in $P$.
      Each of the proof trees $\delta_i$ is either a root step $\rstep$ or a rewrite sequence $\ireddownmfin$ below the root.
      In the case of~$\rstep$ we are immediately done.
      In case of~$\ireddownmfin$, the root of the tree $\delta_i$ must be $\rlift$;
      we consider this case next.
    \item In the case of $\rlift$
      \begin{align*}
        \infer=[\rlift]
        {f(s_1,\ldots,s_n) \ireddownmfin f(t_1,\ldots,t_n)}
        {\delta_1 : s_1 \ired t_1 \quad \delta_2 : s_2 \ired t_2 \quad\cdots\quad \delta_n : s_n \ired t_n}
      \end{align*}
      we have that $S_1$ and $S_2$ are interleaving of 
      rewrite sequences corresponding to the trees $\delta_1,\delta_2,\ldots,\delta_n$.
      Since $p_1 \le p_2$ we have that either:
      \begin{enumerate}[label=\({\alph*}]
        \item If $p_1 = p_2 = \varepsilon$, then there is nothing to be shown
          since the rewrite sequences derived by the $\rlift$-rule cannot contain root steps.
        \item If $p_1 = \varepsilon$ and $p_2 = i p_2'$ for some $i \in \nat$,
          then all rule applications $P$ in the rewrite sequences $S_1$ and $S_2$ arise from~$\delta_i$,
          in the same order (since interleaving preserves the order of the rule applications in the subsequence).
          By induction hypothesis
          all rewrite sequences corresponding to $\delta_i$ 
          contain the same order of rule applications in $P' = \{\, \pair{\rho_2}{p_2'} \,\}$.
          Consequently $S_1$ and $S_2$ contain the same order of rule applications in~$P$.
        \item If $p_1 = i p_1'$, $p_2 = i p_2'$ and $p_1 \le p_2$ for some $i \in \nat$,
          then we argue as in the previous case.
      \end{enumerate} 
  \end{enumerate} 
  This concludes the proof.
\end{proof}
}
%NEW VERSION OF PROOF LEMMA~\ref{lem:order:non:parallel}:
\begin{proof}
  It suffices to consider the case that $P$ consists of at most $2$ elements since:
  \begin{align*}
    \proj{P}{S} = \proj{P}{T} \;\;\Longleftrightarrow\;\; \forall Q \subseteq P.\, ({{|}Q{|}} \le 2 \;\implies\; \proj{Q}{S} = \proj{Q}{T}) \;.
  \end{align*}
We prove that
for all proof trees $\delta$ and all $P \subseteq \atrs \times \nat^*$ with ${|}P{|}\leq 2$,
if rewrite sequences $S$ and $T$ correspond to $\delta$, then
$\proj{P}{S} = \proj{P}{T}$. 
So let $\delta, S, T,P$ be as stated, and assume that
$P = \{\pair{\rho_1}{p_1},\pair{\rho_2}{p_2}\}$; we allow $\pair{\rho_1}{p_1}=\pair{\rho_2}{p_2}$.
Without loss of generality, we assume that $p_1 \le p_2$.
The proof is by well-founded induction on the length of $p_2$.
First we note that if $\delta$ is a single root step or a single application of $\rid$, the result is immediate. 

\emph{Base case:} Here we have $p_2=p_1=\varepsilon$ which means that $P$ contains only root step applications.
We distinguish cases according to the root of $\delta$.
If the root of $\delta$ is obtained from a $\rsplit$-application,
then the root steps in $S$ and $T$ must occur in the same order,
hence $\proj{P}{S} = \proj{P}{T}$. 
In case the root of $\delta$ is obtained from a $\rlift$-application,
then $S$ and $T$ cannot contain any root steps, hence we are also done.

\emph{Induction step:} 
Assume that $p_2 = ip_2'$ and define
\begin{align*}
  P' = \begin{cases}
    \{\pair{\rho_2}{p_2'}\}, &\text{if } p_1 = \varepsilon \\
    \{\pair{\rho_1}{p_1'}, \pair{\rho_2}{p_2'}\},  &\text{if } p_1 = ip_1'
  \end{cases}
\end{align*}
%Note that $\{\pair{\rho_2}{p_2'}\} = \{\pair{\rho_2}{p_2'},\pair{\rho_2}{p_2'}\}$. %% NOT NEEDED -HH.
By induction, we may assume that the property holds for the set of rule applications $P'$.

For $(\rho,p) \in (\atrs \times \nat^*)$, we define $i\cdot (\rho,p) = (\rho,ip)$.
For functions $h : \beta \to \atrs \times \nat^*$, 
we define $(i \cdot h) : \beta \to \atrs \times \nat^*$ by 
by $(i \cdot h)(\gamma) = i \cdot h(\gamma)$.
%We do a case distinction on $p_1$.

% Case Lift:
%If the root of $\delta$ is a $\rlift$-application, then
We distinguish cases for the shape of $\delta$.
If $\delta$ is of the form:
\begin{align*}
        \infer=[\rlift]
        {f(s_1,\ldots,s_n) \ireddownmfin f(t_1,\ldots,t_n)}
        {\delta_1 : s_1 \ired t_1 \quad \delta_2 : s_2 \ired t_2 \quad\cdots\quad \delta_n : s_n \ired t_n}
\end{align*}
then $S$, respectively $T$, is an interleaving of 
rewrite sequences $S_1,\ldots,S_n$, respectively $T_1, \ldots, T_n$,
corresponding to $\delta_1,\ldots,\delta_n$.
In particular, $S_i \colon s_i \ired t_i$ and $T_i \colon s_i \ired t_i$ correspond to $\delta_i$.
Since $S$ cannot contain root steps, we have that
$\pair{\rho_1}{\varepsilon}$ is not in the image of $\proj{P}{S}$,
so if $p_1=\varepsilon$ then
$\proj{P}{S} = i\cdot \proj{P'}{S_i}$.
If $p_1=i p_i'$, then 
all $\rho_1$- and $\rho_2$-applications in $P$ are on the $i$'th subterm,
and since interleaving preserves order of steps in this subterm,
we again have that 
$\proj{P}{S} = i\cdot \proj{P'}{S_i}$.
Similarly, $\proj{P}{T} = i\cdot \proj{P'}{T_i}$.
Now, $\proj{P}{S}=\proj{P}{T}$ follows from the induction hypothesis.

%% Case Split
%If the root of $\delta$ is obtained from a $\rsplit$-application,
If $\delta$ is of the form
\begin{align*}
        \infer=[\rsplit]
        {s \ired t}
        {\delta_1 \quad \delta_2  \quad\cdots\quad \delta_n}
\end{align*}
then $S$, respectively $T$, is a concatenation of $S_1, \ldots, S_n$, respectively $T_1, \ldots, T_n$,
corresponding to the trees $\delta_1,\ldots,\delta_n$.
Hence it suffices to show that $\proj{P}{S_i} = \proj{P}{T_i}$ for all $i=1,\ldots,n$.
For all $\delta_i$'s that are a single node or an $\rid$-application, this holds.
The remaining $\delta_i$'s must have a root that is obtained from a $\rlift$-application,
and the result now follows from the $\rlift$-case.
\end{proof}

Next, we characterise the correspondence of canonical proof trees and rewrite sequences.

\begin{thm}\label{thm:canonical}
  We have the following facts about canonical proof trees:
  \begin{enumerate}
    \item \label{thm:canonical:a}
      Every rewrite sequence corresponds to  
      precisely one canonical proof tree.
    \item \label{thm:canonical:b}
      Rewrite sequences $S : s_0 \to^\alpha_\atrs s_\alpha$ and $T : t_0 \to^\beta_\atrs t_\beta$ 
      are represented by the same canonical proof tree 
      if and only if they are parallel permutation equivalent.  
  \end{enumerate}
\end{thm}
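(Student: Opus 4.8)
The plan is to prove part~\ref{thm:canonical:a} first and then derive part~\ref{thm:canonical:b} from it. Throughout, the workhorse is Theorem~\ref{thm:finite}: a strongly convergent sequence has only finitely many root steps, so Lemma~\ref{lem:equivalent:form} supplies, at each level, the factorization $S = S_1 \relcomp \rho_1 \relcomp S_2 \relcomp \rho_2 \relcomp \cdots \relcomp \rho_n \relcomp S_{n+1}$ into below-root chunks $S_i$ separated by root steps $\rho_i$.

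\emph{Part~\ref{thm:canonical:a}.} For existence, I would define a map $S \mapsto \mathsf{can}(S)$ by guarded corecursion, much like the map in the proof of Theorem~\ref{thm:ired:equiv} but forcing canonical form: on a top-level sequence $S : s \to^\alpha t$, take the above factorization and output the canonical $\rsplit$-node with premises $\mathsf{can}(S_1),\rho_1,\ldots,\rho_n,\mathsf{can}(S_{n+1})$; on a below-root reduction $f(\vec s)\to^\gamma f(\vec t)$ output $\rlift$ applied to $\mathsf{can}$ of the $n$ argument projections; on an empty reduction with a variable as source output the canonical $\rid$ axiom. One checks that $\mathsf{can}(S)$ corresponds to $S$ (concatenation / interleaving), that it is canonical, and that it has no infinite $\ireddownfin$-nesting: every $\ireddownfin$ introduced is a chunk $S_i$ with $i\le n$, hence a \emph{strict} prefix of the reduction being decomposed and so of strictly smaller ordinal length, which gives a strictly decreasing labelling of $\ireddownfin$-nodes along every branch. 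For uniqueness, argue by coinduction: if $\delta,\delta'$ are canonical and both correspond to $S$, the common conclusion fixes the root rule; in the $\rsplit$ case the finitely many root steps of $S$ (their order, positions, rules) force the factorization, so the premise subtrees agree by coinduction hypothesis; in the $\rlift$ case the argument reductions are forced to be the projections of $S$; in the $\rid$ case both are the same axiom.

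\emph{Part~\ref{thm:canonical:b}, ``$\Leftarrow$''.} Write $\mathsf{can}(S)$ for the unique canonical tree of $S$. I would show by coinduction that $S$ parallel permutation equivalent to $T$ implies $\mathsf{can}(S)=\mathsf{can}(T)$. Lemma~\ref{lem:equivalent:form} gives matching factorizations $S=S_1\relcomp\cdots\relcomp S_{2m+1}$, $T=T_1\relcomp\cdots\relcomp T_{2m+1}$ with $S_i$ equivalent to $T_i$ and $S_{2i}=T_{2i}$; Lemmas~\ref{lem:permute:prefix} and~\ref{lem:same:target} show the intermediate terms coincide, hence the root steps (substitutions included) and the sources of the below-root chunks agree, so the canonical $\rsplit$-nodes agree. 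For an odd chunk $S_{2i-1}$ equivalent to $T_{2i-1}$ on a common root symbol $f$, a short lemma --- restrict the witnessing bijection to the steps whose position starts with $j$ --- shows the $j$-th projections remain parallel permutation equivalent, and the coinduction hypothesis settles the $\rlift$-subtrees; empty chunks with a variable as source give the same $\rid$ axiom.

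\emph{Part~\ref{thm:canonical:b}, ``$\Rightarrow$'' (the main obstacle).} Suppose $S$ and $T$ both correspond to $\delta:=\mathsf{can}(S)=\mathsf{can}(T)$. The key point is that the correspondence relation (Definition~\ref{def:correspondence}) induces, for any sequence corresponding to $\delta$, a bijection between its steps and the $\rstep$-leaves of $\delta$ preserving rule, position and substitution --- the position of a leaf being the concatenation of the argument-indices chosen at the $\rlift$-nodes above it; this map reaches a leaf after finitely many steps, since the path to the leaf of a step at depth $d$ passes through exactly $d$ applications of $\rlift$. Composing the two bijections (for $S$ and for $T$) gives a bijection $f:\alpha\to\beta$, and condition~(i) of Definition~\ref{def:permute} is immediate. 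For condition~(ii), suppose $\gamma_1<\gamma_2$ but $f(\gamma_1)>f(\gamma_2)$, with corresponding leaves $L_1,L_2$, and let $N$ be the node of $\delta$ at which the paths to $L_1$ and $L_2$ branch. If $N$ is a $\rsplit$-node then $L_1,L_2$ lie in distinct premises, which every corresponding sequence concatenates in order, so the two steps keep their relative order in both $S$ and $T$, contradicting the swap; and $N$ cannot be a leaf. Hence $N$ is a $\rlift$-node with $L_1,L_2$ in distinct argument subtrees, so $\thepos{S}{\gamma_1}$ and $\thepos{S}{\gamma_2}$ first differ in the argument-index at $N$ and are therefore parallel. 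The delicate points I expect are making precise, by coinduction on $\delta$, that the step-to-leaf map is a well-defined bijection that respects the order within each subtree, and the case analysis on the branching node; everything else reduces to Lemma~\ref{lem:equivalent:form} and Theorem~\ref{thm:finite}.
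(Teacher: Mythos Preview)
Your treatment of part~\ref{thm:canonical:a} and of the direction ``$\Leftarrow$'' in part~\ref{thm:canonical:b} matches the paper's proof closely: the paper also obtains existence from the construction in Theorem~\ref{thm:ired:equiv}, and proves by coinduction (using Lemma~\ref{lem:equivalent:form}) that parallel permutation equivalent sequences yield the same canonical tree, deducing uniqueness for a single sequence as the special case $T=S$.

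For the direction ``$\Rightarrow$'' of part~\ref{thm:canonical:b} your route is genuinely different. The paper argues by contrapositive: assuming $S$ and $T$ are \emph{not} parallel permutation equivalent, it invokes Lemma~\ref{lem:permute:injective} to obtain the maximal partial witness $f:\alpha'\to\beta$, then extends $f$ by one step to force a failure of condition~(iii), which exhibits two rule applications at non-parallel positions occurring in different orders in $S$ and $T$; Lemma~\ref{lem:order:non:parallel} then shows $S$ and $T$ cannot correspond to the same tree. Your argument is direct and structural: you build the bijection $f$ by composing the two step-to-leaf correspondences and verify condition~(ii) by a case split on the node where the two leaf-paths diverge. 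This avoids the somewhat technical maximal-witness Lemma~\ref{lem:permute:injective} entirely and replaces the projection Lemma~\ref{lem:order:non:parallel} by its essential content (order is preserved across a \rsplit, only \rlift{} can swap, and a swap at \rlift{} forces parallel positions). The price is that you must set up the step-to-leaf bijection carefully; note that the argument that this map is a well-defined bijection is by \emph{induction} on the finite path to the leaf (alternating \rsplit{} and \rlift, with exactly $d$ \rlift-nodes for a step at depth~$d$), not by coinduction on $\delta$ as you write. You also need the observation that the projection of an interleaving onto one argument is uniquely determined (so two \rlift-subtrees corresponding to the same below-root chunk have the same argument sequences), which you use in your uniqueness argument for~\ref{thm:canonical:a}. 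Both approaches are correct; yours is more geometric and arguably cleaner, the paper's is more combinatorial and reuses Lemma~\ref{lem:order:non:parallel} as a standalone tool.
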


\begin{proof}
  First, note that every rewrite sequence corresponds to a canonical proof tree.
  This follows by an inspection of the proof of Theorem~\ref{thm:ired:equiv}:
  for every rewrite sequence $S : s \to^\alpha t$,
  the proof tree $\mktree{S}$ is canonical and corresponds to $S$.

  Second, we show that canonical proof trees coincide if they correspond to 
  rewrite sequences that are parallel permutation equivalent.
  As a direct consequence, we obtain that every rewrite sequence corresponds to precisely one canonical proof tree,
  establishing \ref{thm:canonical:a}.

  \emph{Notation:} 
  For a strongly convergent rewrite sequence $R : s \redord^\alpha t$,
  we write $\mathfrak{T}_R$ to denote \emph{an arbitrary} canonical proof tree $\mathfrak{T}_R : s \ired t$ corresponding to $R$.
  Note that, a priori, the tree $\mathfrak{T}_R$ can be different from $\mathfrak{T}(R)$.
  If $R$ is a rewrite sequence below the root, 
  then we moreover write $\mathfrak{T}'^{(<)}_{R}$ for a canonical proof tree $\mathfrak{T}'^{(<)}_{R} : s \ireddownmfin t$ corresponding to $R$.

  Let $S : s \redord^\alpha t$ and $T : s \redord^\beta t$ be rewrite sequences 
  that are parallel permutation equivalent.
  We show that $\mathfrak{T}_S = \mathfrak{T}_T$ by coinduction, that is:
  if $S$ and $T$ are parallel permutation equivalent,
  then $\mathfrak{T}_S$ and  $\mathfrak{T}_T$ have the same root 
  and all subtrees arise again from parallel permutation equivalent rewrite sequences.
  As $S$ and $T$ are parallel permutation equivalent, 
  from Lemma~\ref{lem:equivalent:form} it follows that
  $S$ and $T$ are of the forms:
  \begin{gather}
    \begin{aligned}
    S &= S_1 \relcomp S_2 \relcomp S_3 \relcomp \cdots \relcomp S_{2n+1}\\
    T &= T_1 \relcomp T_2 \relcomp T_3 \relcomp \cdots \relcomp T_{2n+1}
    \end{aligned}
    \label{eq:form:ST}
  \end{gather}
  for some $n \in \nat$ such that for every $i \in \{\,1,\ldots,2n+1\,\}$ we have:
  \begin{enumerate}[label=\({\alph*}]
    \item \label{it:form:ST:a} for odd $i$, $S_i$ and $T_i$ are reductions below the root,
    \item \label{it:form:ST:b} for even $i$, $S_i$ and $T_i$ are a root steps, and
    \item \label{it:form:ST:c} $S_i$ is parallel permutation equivalent to $T_i$ for every $i \in \{\,1,\ldots,2n+1\,\}$.
  \end{enumerate}
  In particular, by Lemma~\ref{lem:same:target}, $S_i$ and $T_i$ have the same source and target term for every $i \in \{\,1,\ldots,2n\,\}$,
  say source $u_i$ and target $u_{i+1}$.

  We consider the root of the proof trees $\mathfrak{T}_S : s \ired t$ and $\mathfrak{T}_T : s \ired t$.
  The only way to derive $\ired$ is by an application of the split-rule.
  Due to \eqref{eq:form:ST}, \ref{it:form:ST:a} and \ref{it:form:ST:b}, 
  and the form of the canonical split-rule, it follows that $\mathfrak{T}_{S}$ and $\mathfrak{T}_{T}$ must be of the form:
  \begin{align*}
    \raisebox{2ex}{$\mathfrak{T}_{S}$} &\;\raisebox{2ex}{$=$}\; \infer=[\rsplit]{u_1 \ired u_{2n+2}}{
      \mathfrak{T}'^{<}_{S_1}
      & u_2 \rstep u_3 
      & \mathfrak{T}'^{<}_{S_3}
      & u_4 \rstep u_5
      & \cdots 
      & u_{2n} \rstep u_{2n+1}
      & \mathfrak{T}'_{S_{2n+1}}
    } \\
    \raisebox{2ex}{$\mathfrak{T}_{T}$} &\;\raisebox{2ex}{$=$}\; \infer=[\rsplit]{u_1 \ired u_{2n+2}}{
      \mathfrak{T}'^{<}_{T_1}
      & u_2 \rstep u_3 
      & \mathfrak{T}'^{<}_{T_3}
      & u_4 \rstep u_5
      & \cdots 
      & u_{2n} \rstep u_{2n+1}
      & \mathfrak{T}'_{T_{2n+1}}
    } 
  \end{align*}
  As a consequence $\mathfrak{T}_{S}$ and $\mathfrak{T}_{T}$ have the same root
  and, due to (c), the subtrees arise from rewrite sequences below the root that are again parallel permutation equivalent.
  
  Let $S : s \redord^\alpha t$ and $T : s \redord^\beta t$ be rewrite sequences below the root
  that are parallel permutation equivalent.
  We consider the root of the proof trees $\mathfrak{T}'^{(<)}_S$ and $\mathfrak{T}'^{(<)}_T$
  and observe that $\ireddownmfin$ can only be derived using the id-rule or the lift-rule.
  In case one of the trees is derived using the canonical id-rule, 
  it follows that $s = t = x$ for some variable $x \in \avars$,
  and there is only one possible proof tree deriving $x \ireddownmfin x$:
  \begin{align*}
    \raisebox{2ex}{$\mathfrak{T}'^{(<)}_S = \mathfrak{T}'^{(<)}_T =$} \;\;
    \infer=[\rid]{x \ireddownmfin x}{}
  \end{align*}
  Thus, assume that both $\mathfrak{T}'^{(<)}_S$ and $\mathfrak{T}'^{(<)}_T$ are derived using the lift-rule.
  Then $s = f(s_1,\ldots,s_n)$ and $t = f(t_1,\ldots,t_n)$ for some $f \in \Sigma$ of arity $n$ 
  and terms $s_1,\ldots,s_n,t_1,\ldots,t_n$.
  The proof trees must be of the following forms:
  \begin{align*}
    \raisebox{2ex}{$\mathfrak{T}'^{(<)}_{S}$} &\;\raisebox{2ex}{$=$}\; \infer=[\rlift]{f(s_1,\ldots,s_n) \ireddownmfin f(t_1,\ldots,t_n)}{
      \mathfrak{T}_{S_1} & \ldots & \mathfrak{T}_{S_n}
    } \\
    \raisebox{2ex}{$\mathfrak{T}'^{(<)}_{T}$} &\;\raisebox{2ex}{$=$}\; \infer=[\rlift]{f(s_1,\ldots,s_n) \ireddownmfin f(t_1,\ldots,t_n)}{
      \mathfrak{T}_{T_1} & \ldots & \mathfrak{T}_{T_n}
    } 
  \end{align*}
  Where, for $i \in \{\,1,\ldots,n\,\}$,
  $S_i$ is the subsequence of $S$ on the $i$-th argument of $f$,
  and
  $T_i$ is the subsequence of $T$ on the $i$-th argument of $f$.
  Since $S$ and $T$ are parallel permutation equivalent, 
  it follows that $S_i$ and  $T_i$ are, for every $i \in \{\,1,\ldots,n\,\}$.
  As a consequence, $\mathfrak{T}'^{(<)}_{S}$ and $\mathfrak{T}'^{(<)}_{T}$ have the same root
  and the subtrees arise from rewrite sequences that are parallel permutation equivalent.
  This concludes the coinduction, hence $\mathfrak{T}_S = \mathfrak{T}_T$.
  
  It remains to be shown that rewrite sequences that are not parallel permutation equivalent
  have different canonical proof trees. 
  Let $S : s \redord^\alpha t$ and $T : s \redord^\beta t$ be rewrite sequences 
  that are not parallel permutation equivalent.
  Note that, due to strong convergence, $S$ and $T$ contain only a finite number of steps at every depth $n$.
  Moreover, we may assume that:
  \begin{itemize}[label=($\star$)]
    \item [($\star$)] For every rule $\rho \in \atrs$ and position $\apos \in \nat^*$,
      $S$ and $T$ contain the same number of steps arising from an application of $\rho$ at position $\apos$.
  \end{itemize}
  If ($\star$) was violated, then $S$ and $T$ cannot correspond to the same proof tree.
  The reason is that, from a given proof tree one can derive 
  the steps at position~$p$ with respect to rule~$\rho$.
  
  By Lemma~\ref{lem:permute:injective} there exists a largest ordinal $\alpha' \le \alpha$
  such that there exists an injective function $f : \alpha' \to \beta$ with the properties
  \begin{enumerate}
    \item 
      $\therule{S}{\gamma} = \therule{T}{f(\gamma)}$ and 
      $\thepos{S}{\gamma} = \thepos{T}{f(\gamma)}$ 
      for every $\gamma < \alpha'$, and
    \item 
      $\thepos{S}{\gamma_1} \parallel \thepos{S}{\gamma_2}$
      whenever $\gamma_1 < \gamma_2 < \alpha'$ and $f(\gamma_1) > f(\gamma_2)$.
    \item 
      $\thepos{T}{\gamma_1} \parallel \thepos{T}{\gamma_2}$
      whenever $\gamma_1 < \gamma_2 < \beta$, $\gamma_1$ is not in the image of $f$, but $\gamma_2$ is,
  \end{enumerate}
  and this function $f$ is unique.
  As $S$ and $T$ are not parallel permutation equivalent, 
  it follows that either
  \begin{enumerate}[label=\({\alph*}]
    \item $\alpha' < \alpha$, or
    \item $\alpha'= \alpha$ and there exists $\beta' < \beta$ that is not in the image of $f$.
  \end{enumerate}
  Note that (b) would contradict ($\star$) as all of the steps of $S$ are in the domain of $f$
  but the image of $f$ does not contain all of the steps of $T$.
  Then $T$ would have to contain more steps than $S$ at some position $p \in \nat^*$.

  Thus assume that $\alpha' < \alpha$.
  Then there exists some index $\gamma < \beta$ such that $\gamma$ is not in the image of~$f$,
  $\therule{S}{\alpha'} = \therule{T}{\gamma}$ and $\thepos{S}{\alpha'} = \thepos{T}{\gamma}$.
  For otherwise, if $\gamma$ would not exist,
  we had a contradiction with ($\star$) as then
  $S$ contained more steps than $T$ 
  at position $\thepos{S}{\alpha'}$ with respect to rule $\therule{S}{\alpha'}$.
  We define $g : \alpha' + 1 \to \beta$ by $g(\xi) = f(\xi)$ for every $\xi < \alpha'$ and $g(\alpha') = \gamma$.   
  Then $g$ is injective and satisfies property~\ref{lem:permute:injective:i}.
  For property~\ref{lem:permute:injective:ii} of $g$, we only need to consider the case that $\gamma_2 = \alpha'$,
  so $\gamma_1 < \gamma_2 = \alpha' < \alpha'+1$ 
  and $g(\gamma_2) = \gamma < g(\gamma_1)$.
  Then $\gamma < f(\gamma_1)$ and $\gamma$ is not in the image of $f$, while $f(\gamma_1)$ is.
  Then it follows from property~\ref{lem:permute:injective:iii} of $f$ 
  that 
  $\thepos{T}{\gamma} \parallel \thepos{T}{f(\gamma_1)}$.
  Hence $\thepos{T}{g(\gamma_2)} \parallel \thepos{T}{g(\gamma_1)}$
  and $\thepos{S}{\gamma_2} \parallel \thepos{S}{\gamma_1}$.
  Thus $g$ satisfies property~\ref{lem:permute:injective:ii}.
  However, due to the choice of $\alpha'$, 
  $g$ cannot satisfy all properties~\ref{lem:permute:injective:i}--~\ref{lem:permute:injective:iii};
  hence property~\ref{lem:permute:injective:iii} has to fail.
  For property~\ref{lem:permute:injective:iii} of $g$, 
  it suffices to consider the case that $\gamma_2 = \gamma$
  since $\gamma$ is the only element in the image of $g$ that is not in the image of $f$.
  So, due to failure of property~\ref{lem:permute:injective:iii} for $g$, we have: 
  \begin{itemize}[label=($\dagger$)]
    \item [($\dagger$)]
      There exist $\gamma_1 < \gamma$ such that
      $\gamma_1$ is not in the image of~$g$
      and $\thepos{T}{\gamma_1} \nparallel \thepos{T}{\gamma}$.
      Note that, in particular, $\gamma > 0$ since otherwise property (iii) would hold.
  \end{itemize}
  Define a set of rule applications $P = \{\, \chi_1,\, \chi_2 \,\}$, where
 \begin{align*}
   \chi_1 &= \pair{\therule{T}{\gamma_1}}{\thepos{T}{\gamma_1}}\\
  \chi_2 &= \pair{\therule{T}{\gamma}}{\thepos{T}{\gamma}}
\end{align*}
We compare the order of rule applications in $P$ 
  in the rewrite sequences $S$ and $T$.
  From properties~\ref{lem:permute:injective:ii} and~\ref{lem:permute:injective:iii}
  it follows that $f$ maps
  the $i$-th $\chi_1$-step in $S$ to the $i$-th  $\chi_1$-step in $T$,
  and likewise 
  the $i$-th $\chi_2$-step in $S$ to the $i$-th  $\chi_2$-step in $T$.
  Then it follows from ($\dagger$) that $S$ and $T$ 
  do not contain the same order of rule applications in $P$.
  By Lemma~\ref{lem:order:non:parallel} it follows that $S$ and $T$ 
  cannot arise from the same proof tree.
\end{proof}

\section{A Formalization in Coq}\label{sec:coq}

The standard definition of infinitary rewriting,
using ordinal length rewrite sequences and strong convergence at limit ordinals,
is difficult to formalize. 
The coinductive framework we propose, is easy to formalize and work with in theorem provers.
We discuss the important steps of the formalisation of infinitary rewriting
and the compression lemma.

\subsection{Formalisation of Relations and Vectors}

We have formalised binary relations
and properties of relations, such as reflexivity, transitivity, inclusion and equality, as follows.

\begin{formalisation}[Relations and properties of relations]\mbox{}
\begin{itemize}[leftmargin=4ex]
\item []
{\small
\begin{verbatim}
Variables A : Type.

Definition relation := A -> A -> Prop.

Definition reflexive  (R : relation) := forall x, R x x.
Definition transitive (R : relation) := forall x y z, R x y -> R y z -> R x z.
Definition subrel  (R R' : relation) := forall x y, R x y -> R' x y.
Definition eqrel   (R R' : relation) := subrel R R' /\ subrel R' R.
\end{verbatim}}
\end{itemize}
Note that we have formalised relations as functions.
\end{formalisation}

We have moreover formalised operations on relations
such as composition, union and the reflexive transitive closure. 
\begin{formalisation}[Operations on relations]\mbox{}
\begin{itemize}[leftmargin=4ex]
\item []
{\small
\begin{verbatim}
Definition compose (R S : relation) : relation := 
  fun a c =>  exists b, R a b /\ S b c.

Definition Runion  (R S : relation) : relation :=
  fun a b =>  R a b \/ S a b.

Inductive refl_trans_close (R : relation) : relation :=
  | refl_trans_step  : subrel R (refl_trans_close R)
  | refl_trans_refl  : reflexive (refl_trans_close R) 
  | refl_trans_trans : transitive (refl_trans_close R).

Notation "R ;; S" := (compose R S) (right associativity).
Notation "R (+) S" := (Runion R S) (right associativity).
Notation "R *" := (refl_trans_close R) (left associativity).
\end{verbatim}}
\end{itemize}
\end{formalisation}

\noindent Our formalisation of vectors is based on the formalisation of vectors by Pierre Boutillier
(in the Coq standard library).
Thereby a vector $v$ of length $n$ with elements from $A$ is a function $v : \texttt{Fin n} \to A$.
Here \texttt{Fin n} is an $n$-element set;
for example \texttt{Fin 4} consists of the following elements
\begin{align*}
  \texttt{Fin 4} = \{\; 
    &\texttt{First 3}, \\
    &\texttt{Next (First 2)}, \\
    & \texttt{Next (Next (First 1))} \\
    & \texttt{Next (Next (Next (First 0)))} \;\} \\
\end{align*}

\begin{formalisation}[Vectors and a map operation]\mbox{}
\begin{itemize}[leftmargin=4ex]
\item []
{\small
\begin{verbatim}
Inductive Fin : nat -> Type :=
  | First : forall n, Fin (S n)
  | Next  : forall n, Fin n -> Fin (S n).

Definition vector (n : nat) :=  Fin n -> A.
  
Definition vmap (n : nat) (f : A -> B) : vector A n -> vector B n :=
  fun v i =>  f (v i).
\end{verbatim}}
\end{itemize}
\end{formalisation}

\subsection{Formalisation of Infinite Terms}

For the formalisation of terms, we begin with the set of variables and the signature.
\begin{formalisation}[Variables and Signature]\mbox{}
\begin{itemize}[leftmargin=4ex]
\item []
{\small
\begin{verbatim}
Record variables : Type := Variables {
  variable :> Type;
  beq_var : variable -> variable -> bool;
  beq_var_ok : forall x y, beq_var x y = true <-> x = y
}.

Record signature : Type := Signature {
  symbol :> Type;
  arity : symbol -> nat;
  beq_symb : symbol -> symbol -> bool;
  beq_symb_ok : forall x y, beq_symb x y = true <-> x = y
}.  
\end{verbatim}}
\end{itemize}
Notably, next to the set of variables and functions symbols itself,
our formalisation includes functions \texttt{beq\_var} and \texttt{beq\_symb}
for a \emph{decidable equality} on the variables and function symbols, respectively.
The functions \texttt{beq\_var\_ok} and \texttt{beq\_symb\_ok} guarantee that
the decidable equality coincides with the standard equality in Coq.
\end{formalisation} 
\noindent

Recall that a term is either
\begin{enumerate}
  \item a variable $x \in \avars$, or
  \item a function symbol $f \in \Sigma$ together with a vector of terms of length $\arity{f}$. 
\end{enumerate} 
The inductive interpretation of this principle yields the finite terms (\texttt{finite\_term} in Coq), 
the coinductive interpretation gives rise to the finite and infinite terms (\texttt{term} in Coq).
\begin{formalisation}[Finite and Infinite Terms]\mbox{}
\begin{itemize}[leftmargin=4ex]
\item []
{\small
\begin{verbatim}
Variable F : signature.
Variable X : variables.

Inductive finite_term : Type :=
  | FVar : X -> finite_term
  | FFun : forall f : F, vector finite_term (arity f) -> finite_term.

CoInductive term : Type :=
  | Var : X -> term
  | Fun : forall f : F, vector term (arity f) -> term.
\end{verbatim}}
\end{itemize}
\end{formalisation}

\noindent In Coq, there is no extensional equality, that is, $\forall x.\; f(x) = g(x)$ does not imply $f = g$.
Similarly, infinite terms $s,t$ that coincide on every position, are not necessarily equal
with respect to the standard equality $=$ in Coq. 
As a consequence, equality $=$ is not suitable to work with infinite terms in Coq.
We therefore work with bisimilarity $\xbis$, as is common practice in coalgebra.
Terms are bisimilar if and only if they coincide on every position.
\begin{formalisation}[Bisimilarity on terms]\mbox{}
\begin{itemize}[leftmargin=4ex]
\item []
{\small
\begin{verbatim}
CoInductive term_bis : term -> term -> Prop :=
  | Var_bis : forall x, term_bis (Var x) (Var x)
  | Fun_bis : forall f v w, (forall i, term_bis (v i) (w i)) ->
                term_bis (Fun f v) (Fun f w).

Infix " [~] " := term_bis (no associativity, at level 70).
\end{verbatim}}
\end{itemize}
In the sequel, we write $\xbis$ for bisimilarity of terms in Coq.
\end{formalisation}

\subsection{Formalisation of Term Rewriting Systems}
Next, we formalise rewrite rules and term rewriting systems as lists of rules. 
In the following definition, we leave \texttt{is\_var}, \texttt{vars}, \texttt{inc} and \texttt{list} implicit;
the function 
\begin{enumerate}
  \item \texttt{is\_var $:$ term $\to$ bool} returns \texttt{true} if and only if the given term is a variable,
  \item \texttt{vars $:$ finite\_term $\to$ (X $\to$ Prop)} returns the set of variables in a term,
  \item \texttt{inc} stands for set inclusion, and
  \item \texttt{list} is the implementation of lists in the standard library of Coq. 
\end{enumerate}

\begin{formalisation}[Rules and term rewriting systems]
These are formalized as
\begin{itemize}[leftmargin=4ex]
\item []
{\small
\begin{verbatim}
Record rule : Type := Rule {
  lhs     : finite_term;
  rhs     : term;
  rule_wf : is_var lhs = false /\ inc (vars rhs) (vars lhs)
}.

Definition trs := list rule.
\end{verbatim}}
\end{itemize}
A rule consists of a \emph{finite} left-hand side (\texttt{lhs}) and \emph{finite or infinite} right-hand side (\texttt{rhs}), and 
a proposition \texttt{rule\_wf} stating that the \texttt{lhs} is not a variable,
and the variables in \texttt{rhs} are a subset of the variables in \texttt{lhs}.
\end{formalisation}

\begin{rem}
  We note that the definition of \texttt{rules}
  could easily be generalised to infinite left-hand sides.
  This is not a restriction of our coinductive framework for infinitary rewriting.
  In the literature, \emph{infinitary term rewriting systems}
  are typically defined to have finite left-hand sides
  to keep matching (with respect to left-linear rules) decidable.
  We have chosen to adopt this restriction as
  our goal was a formalisation of the Compression Lemma
  which only holds for term rewrite systems with finite (and linear) left-hand sides.
\end{rem}

We introduce substitutions as maps from variables to finite or infinite terms. 
\begin{formalisation}[Substitution]\mbox{}
\begin{itemize}[leftmargin=4ex]
\item []
{\small
\begin{verbatim}
Definition substitution := X -> term.

CoFixpoint substitute (sigma : substitution) (t : term) : term :=
  match t with
  | Var x      => sigma x
  | Fun f args => Fun f (vmap (substitute sigma) args)
  end.
\end{verbatim}}
\end{itemize}
The \texttt{substitute} function applies a substitution to a finite or infinite term.
\end{formalisation}

\subsection{Formalisation of Infinitary Rewriting}

The closure of the rules under substitutions gives rise to root steps
on finite and infinite terms. 
\begin{formalisation}[Root steps on finite and infinite terms]\mbox{}
\begin{itemize}[leftmargin=4ex]
\item []
{\small
\begin{verbatim}
Variable system : trs.

Inductive root_step : relation term :=
  | Root_step : 
      forall (s t : term) (r : rule) (u : substitution),
        In r system ->
        substitute u (lhs r) [~] s ->
        substitute u (rhs r) [~] t ->
          root_step s t.
\end{verbatim}}
\end{itemize}
As discussed above, we must include bisimilarity $\xbis$
since the equality $=$ in Coq is too strict.
Here `\texttt{In r system}' checks whether \texttt{r} (a rule) occurs in \texttt{system} (a list of rules).
\end{formalisation}

We formalise the lifting operation $\down{R}$ (Definition~\ref{def:lifting}) as follows.
\begin{formalisation}[Lifting]\mbox{}
\begin{itemize}[leftmargin=4ex]
\item []
{\small
\begin{verbatim}
Inductive lift (R : relation term) : relation term :=
  | lift_id : forall s t, s [~] t -> lift R s t
  | lift_step :
      forall (f : F) (s t : vector term (arity f)),
             (forall i, R (s i) (t i)) -> 
             forall fs ft, Fun f s [~] fs -> Fun f t [~] ft -> lift R fs ft.
\end{verbatim}}
\end{itemize}
Again, we include bisimilarity $\xbis$ instead of the standard equality $=$ in Coq.
\end{formalisation}

We use the root step rewrite relation and the lifting operation to introduce infinitary strongly convergent reductions.
Our coinductive definition (Definition~\ref{def:ired:fixedpoint})
is based on mixed induction and coinduction:
\begin{align*}
  {\ired} \;\;\defd\;\; \lfp{R}{\gfp{S}{(\rstep \cup \mathrel{\down{R}})^*\relcomp \down{S}}} \,.
\end{align*}
Unfortunately, Coq has no support for mutual inductive and coinductive definitions.

To overcome this problem, we employ the fact that the greatest fixed point $\gfp{S}{F(S)}$ 
is the union of all $S$ for which $S \subseteq F(S)$ (under the condition that $F : L \to L$ is monotone);
see further Section~\ref{sec:coinduction}.
In other words, 
$\gfp{S}{F(S)}$
is the smallest relation $T$ such that
$S \subseteq T$ whenever $S \subseteq F(S)$.
Hence, we have
\begin{align}
  {\ired} \;\;\defd\;\; 
    \lfp{R}{\;\Bigg(\;\;
  \begin{aligned}
      &\text{the smallest $T$ such that: } \\[-1ex]
      &\text{for all relations $S$, } \\[-1ex]
      &\quad S \subseteq (\rstep \cup \mathrel{\down{R}})^*\relcomp \down{S} \implies S \subseteq T
  \end{aligned}
    \;\;\Bigg)}
  \label{eq:ired:union}
\end{align} 
This definition lends itself to a formalisation in Coq.
% We obtain
% \begin{align}
%   {\ired} \;\;\defd\;\; \lfp{R}{\;\bigcup \;\{\;S \mid S \subseteq (\rstep \cup \mathrel{\down{R}})^*\relcomp \down{S} \;\}} \,.
%   \label{eq:def:union}
% \end{align}
\begin{formalisation}\label{form:ired:wrong}
Formalisation of~\eqref{eq:ired:union}:
\begin{itemize}[leftmargin=4ex]
\item []
{\small
\begin{verbatim}
Inductive ired : relation (term F X) :=
  | Ired : 
      forall S : relation (term F X),
      subrel S ((root_step (+) lift ired)* ;; lift S) ->
      subrel S ired.
\end{verbatim}}
\end{itemize}
Here 
\verb=;;= is relation composition,
\verb=(+)= is the union, and
\verb=*= the reflexive-transitive closure.
The statement \texttt{Inductive ired} in the formalisation corresponds to $\slfp{R}$ in~\eqref{eq:ired:union},
and thus \texttt{ired} corresponds to $R$.
\end{formalisation}

While Formalisation~\ref{form:ired:wrong} is correct, 
it turns out that Coq is not able to generate a good induction
principle from the definition.
The generated induction principle is:
\begin{itemize}[leftmargin=4ex]
\item []
{\small
\begin{verbatim}
ired_ind :  forall P : term -> term -> Prop,
            ( forall S : relation term, 
              subrel S ((root_step (+) lift ired)* ;; lift S) -> subrel S P )
            -> subrel ired P
\end{verbatim}}
\end{itemize}
In mathematical notation this reads as follows:
\begin{align*}
  \forall P.\; \Big(\; 
    \forall S.\; \Big(\;S \subseteq (\rstep \cup \mathrel{\down{\texttt{ired}}})^*\relcomp \down{S} \;\Big) \;\to\; S \subseteq P
  \;\Big) \;\;\to\;\; \texttt{ired} \subseteq P
\end{align*}
Note that, in particular we have $\texttt{ired} = (\rstep \cup \mathrel{\down{\texttt{ired}}})^*$.
As a consequence, we already have to show $\texttt{ired} \subseteq P$ as part of the induction step.
Hence, this induction principle is void.

To overcome this problem, we adapt~\eqref{eq:ired:union} and Formalisation~\ref{form:ired:wrong} as follows:
\begin{align}
  {\ired} \;\;\defd\;\; 
    \lfp{R}{\;\Bigg(\;\;
  \begin{aligned}
      &\text{the smallest $T$ such that: } \\[-1ex]
      &\text{for all relations $S$ and $I$, }\\[-1ex]
      &\quad I \subseteq R \implies
       S \subseteq (\rstep \cup \mathrel{\down{I}})^*\relcomp \down{S} \implies S \subseteq T
  \end{aligned}
    \;\;\Bigg)}
  \label{eq:ired}
\end{align} 
To help Coq generate a good induction principle,
we introduce an auxiliary relation $I$, and
we replace the occurrences of $R$ in the body of the definition by $I$.
To preserve the semantics of the definition, 
we require $\texttt{I} \subseteq \texttt{ired}$.
(In other words, $I$ is a lower-approximant of $R$.)

To see that~\eqref{eq:ired:union} and~\eqref{eq:ired} 
give rise to the same relation $\ired$, we argue as follows.
If we were to replace $I \subseteq R$ by $ I = R$ in~\eqref{eq:ired}, 
then both definitions would obviously coincide.
However, both definitions also coincide without the replacement
since sets $I \subsetneq R$ do not contribute due to monotonicity.

We formalise~\eqref{eq:ired} as follows.
\begin{formalisation}[Strongly convergent rewrite relation]\footnote{%
  Note that canonical proof trees could be formalised by a small adaptation of this formalization.
  On the one hand, the restricted (canonical) proof trees can simplify proofs whose source is (a proof tree for) an infinite reduction.
  On the other hand, the restriction might complicate proofs whose target is (a proof tree for) an infinite reduction.
  To combine the advantages of both choices,
  it would be interesting to formally prove that every proof tree can be transformed into an equivalent canonical proof tree.
}\mbox{}
\begin{itemize}[leftmargin=4ex]
\item []
{\small
\begin{verbatim}
Inductive ired : relation term :=
  | Ired : 
      forall S I : relation term,
      subrel I ired ->
      subrel S ((root_step (+) lift I)* ;; lift S) ->
      subrel S ired.
\end{verbatim}}
\end{itemize}  
\end{formalisation}
\noindent
\begin{rem}\label{rem:ired:induction}
  For this definition, Coq generates the following good induction principle:
  \begin{itemize}[leftmargin=4ex]
  \item []
  {\small
  \begin{verbatim}
  ired_ind : forall P : term -> term -> Prop,
             ( forall S I : relation (term F X),
               subrel I ired ->
               subrel I P ->                
               subrel S ((root_step (+) lift I)* ;; lift S) ->
               subrel S P) ->
             subrel ired P
  \end{verbatim}}
  \end{itemize}
  Thus in order to prove ${\ired} \subseteq P$, it suffices to show
  \begin{align*}
    I \subset {\ired} 
    \;\implies\; I \subseteq P 
    \;\implies\; S \subseteq {(\rstep \cup \mathrel{\down{I}})^*\relcomp \down{S}} 
    \;\implies\; S \subseteq P
  \end{align*}
  for every relation $I$ and $S$.
  Below, we will see in several examples, 
  that this is a useful induction principle that is easy to work with.
\end{rem}

\subsection{Formalisation of Omega Rewriting}
In order to formalise the Compression Lemma, we need rewrite sequences of length $\le \omega$.
To formalise $\to^{\le \omega}$, we use
\begin{align}
  \to^{\le \omega} \;\;\defd\;\; \gfp{R}{({\to^*} \relcomp {\down{R}})} \label{eq:ored:good}
\end{align}
That is, a rewrite sequence of length $\le \omega$ 
is a finite rewrite sequence
followed by rewrite sequences of length $\le \omega$ on subterms (below the root).
Using dovetailing, this gives rise to the usual concept of rewrite sequences indexed by an ordinal $\le \omega$.

As above for $\ired$, we have:
\begin{align}
  {\to^{\le \omega}} \;\;\defd\;\; 
 \begin{aligned} 
      &\text{the smallest $T$ such that: } \\[-1ex]
      &\text{for all relations $S$, } S \subseteq ({\to^*} \relcomp {\down{S}}) \implies S \subseteq T
  \end{aligned} 
  \label{eq:ored}
\end{align}
This definition can be formalised in Coq as follows.
\begin{formalisation}[Rewrite sequences of length at most $\omega$]\mbox{}\label{form:ored}
\begin{itemize}[leftmargin=4ex]
\item []
{\small
\begin{verbatim}
Inductive ored : relation term :=
  | Ored : 
      forall S : relation term,
      subrel S (mred ;; lift S) ->
      subrel S ored.
\end{verbatim}}
\end{itemize}  
Here \verb=mred= are finite rewrite sequences $\to^*$.
\end{formalisation}

To keep the proof of compression in Coq as simple as possible,  
we have chosen to define the finite rewrite relation $\to^*$ 
in a non-standard way.
We introduce \texttt{mred} as the smallest relation $S$ that fulfils the following conditions:
\begin{enumerate}
  \item \texttt{mred\_bis}: if $s \mathrel{S} t$, $s \xbis s'$ and $t \xbis t'$, then $s' \mathrel{S} t'$,
  \item \texttt{mred\_refl}: $\mathrel{S}$ is reflexive,
  \item \texttt{mred\_trans}: $\mathrel{S}$ is transitive,
  \item \texttt{mred\_root}: ${\rstep} \subseteq {\mathrel{S}}$, and
  \item \texttt{mred\_fun}: if $u_1 \mathrel{S} v_1$, \ldots, $u_n \mathrel{S} v_n$, then $f(u_1,\ldots,u_n) \mathrel{S} f(v_1,\ldots,v_n)$.
\end{enumerate}
Let us check that \texttt{mred} indeed is the finite rewrite relation $\to^*$:
\begin{enumerate}[label=\({\alph*}]
  \item 
    For ${\texttt{mred}} \subseteq {\to^*}$ note that the finite rewrite relation $\to^*$ fulfils all these criteria.
  \item 
    For ${\to^*} \subseteq {\texttt{mred}}$ we argue as follows.
    We have ${\rstep} \subseteq {\texttt{mred}}$ by \texttt{mred\_root}.
    By \texttt{mred\_fun} together with \texttt{mred\_refl} 
    it follows that \texttt{mred} is closed under contexts. 
    Thus ${\to} \subseteq {\texttt{mred}}$.
    By \texttt{mred\_refl} and \texttt{mred\_trans} we obtain that ${\to^*} \subseteq {\texttt{mred}}$.
\end{enumerate}
Hence ${\texttt{mred}} = {\to^*}$.

\begin{formalisation}[Finite rewriting relation on infinite terms]\mbox{}
\begin{itemize}[leftmargin=4ex]
\item []
{\small
\begin{verbatim}
Inductive mred : relation term :=
  | mred_bis :
      forall s s' t t', s [~] s' -> t [~] t' -> mred s t -> mred s' t'
  | mred_refl :
      forall s, mred s s
  | mred_trans :
      forall s t u, mred s t -> mred t u -> mred s u
  | mred_root : 
      forall (s t : term F X) , root_step s t -> mred s t
  | mred_fun : 
      forall (f : F) (u v : vector (term F X) (arity f)), 
      (forall i : Fin (arity f), mred (u i) (v i)) -> 
      mred (Fun f u) (Fun f v).
\end{verbatim}}
\end{itemize}  
\end{formalisation}

\begin{nota}
  In the remainder of this section, we will write
  \begin{center}
    $\cmred$ for \texttt{mred}\;,\quad
    $\cired$ for \texttt{ired}\;,\quad
    $\cored$ for \texttt{ored}\;, and\quad
    $\crstep$ for \texttt{root\_step}.
  \end{center} 
\end{nota}

We have formalised Lemma~\ref{lem:ored:ok} to show that
Formalisation~\ref{form:ored} of \texttt{ored} indeed corresponds to Equation~\ref{eq:ored:good}.
We first prove two auxiliary facts.

\begin{lem}[\texttt{ored\_subrel}]\label{lem:ored:subrel}
  We have ${\cored} \subseteq {\cmred} \relcomp {\down{\cored}}$.
\end{lem}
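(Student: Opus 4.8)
The plan is to prove ${\cored} \subseteq {\cmred} \relcomp {\down{\cored}}$ by using the induction principle that Coq generates for Formalisation~\ref{form:ored}, which is the exact analogue of the one for $\cired$ discussed in Remark~\ref{rem:ired:induction}: for any relation $P$, in order to conclude ${\cored} \subseteq P$ it suffices to show that every relation $S$ with $S \subseteq {\cmred} \relcomp {\down{S}}$ also satisfies $S \subseteq P$. Conceptually this just reflects the characterisation of $\cored$ in~\eqref{eq:ored} as the smallest relation containing every post-fixed point of the monotone operator $R \mapsto {\cmred} \relcomp {\down{R}}$; equivalently $\cored = \gfp{R}{({\cmred} \relcomp {\down{R}})}$ as in~\eqref{eq:ored:good}, so $\cored$ is itself a fixed point of that operator and the claimed inclusion (in fact equality) is immediate. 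I would carry out the Coq-style argument through the induction principle, instantiating $P$ with the relation ${\cmred} \relcomp {\down{\cored}}$.

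So fix an arbitrary relation $S$ with $S \subseteq {\cmred} \relcomp {\down{S}}$; the remaining goal is $S \subseteq {\cmred} \relcomp {\down{\cored}}$. First, applying the constructor \texttt{Ored} to $S$ itself — whose premise is precisely $S \subseteq {\cmred} \relcomp {\down{S}}$ — yields $S \subseteq {\cored}$. Next, by monotonicity of the lifting operator $\down{\cdot}$ (Definition~\ref{def:lifting}, noted monotone at the end of Section~\ref{sec:coinduction}) this gives $\down{S} \subseteq \down{\cored}$, and by monotonicity of relational composition we get ${\cmred} \relcomp {\down{S}} \subseteq {\cmred} \relcomp {\down{\cored}}$. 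Chaining this with the assumed inclusion $S \subseteq {\cmred} \relcomp {\down{S}}$ produces $S \subseteq {\cmred} \relcomp {\down{\cored}}$, which closes the induction and hence establishes ${\cored} \subseteq {\cmred} \relcomp {\down{\cored}}$.

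I do not expect any real obstacle here: the proof is a single unfolding of the (co)inductive content of $\cored$ followed by two applications of monotonicity. The only point requiring care is that one must use the "good" induction principle of Formalisation~\ref{form:ored} — the one quantifying over all witness relations $S$ — rather than attempting a naive structural induction, exactly as in Remark~\ref{rem:ired:induction}; and one should note that the occurrence of $\cored$ in the goal $P = {\cmred} \relcomp {\down{\cored}}$ is harmless, since $P$ is a fixed relation once $\cored$ has been defined.
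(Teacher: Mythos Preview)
Your proposal is correct and follows essentially the same approach as the paper. The paper's proof proceeds element-wise---given $s \cored t$, it unpacks the inductive definition to obtain a witness relation $S$ with $s \mathrel{S} t$, $S \subseteq {\cmred}\relcomp\down{S}$ and $S \subseteq {\cored}$, then applies monotonicity of lifting and composition---whereas you phrase the same computation through the generated induction principle with $P = {\cmred}\relcomp\down{\cored}$; the key steps (obtaining $S \subseteq {\cored}$ via the constructor and then using monotonicity of $\down{\cdot}$ and $\relcomp$) are identical.
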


\begin{proof}
  Let $s \cored t$.
  By definition of $\cored$ there exists a relation $S \subseteq {\cored}$ 
  such that $S \subseteq {\cmred} \relcomp {\down{S}}$ and $s \mathrel{S} t$.
  As $s \mathrel{{\cmred} \relcomp {\down{S}}} t$ and $S \subseteq {\cored}$, we have $s \cmred \relcomp \mathrel{\down{\cored}} t$.
\end{proof}

\begin{lem}[\texttt{subrel\_lift\_ored\_ored}]\label{lem:ored:subrel:lift:ored:ored}
  We have ${\down{\cored}} \subseteq {\cored}$.
\end{lem}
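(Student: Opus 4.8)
The plan is to invoke the coinductive characterisation of $\cored$ supplied by Formalisation~\ref{form:ored}: to establish ${\down{\cored}} \subseteq {\cored}$ it suffices to produce a relation $S$ with ${\down{\cored}} \subseteq S$ and $S \subseteq {\cmred \relcomp \down{S}}$, because $\cored$ is by construction the largest such relation (equivalently, the greatest post-fixed point of $R \mapsto {\cmred \relcomp \down{R}}$). The key observation is that we may simply take $S \defd {\down{\cored}}$ itself, so the entire task collapses to verifying the single inclusion
\begin{align*}
  {\down{\cored}} \;\subseteq\; {\cmred \relcomp \down{\down{\cored}}}\,.
\end{align*}

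To check this, I would take an arbitrary pair in $\down{\cored}$ and distinguish the two shapes allowed by Definition~\ref{def:lifting}. If the pair is $\pair{s}{s}$, coming from the $\id$ summand, it lies in ${\cmred \relcomp \down{\down{\cored}}}$ at once: $s \cmred s$ by reflexivity of $\cmred$, and $\pair{s}{s} \in {\down{\down{\cored}}}$ since every lifting contains the identity. Otherwise the pair is $\pair{f(s_1,\ldots,s_n)}{f(t_1,\ldots,t_n)}$ with $s_i \cored t_i$ for each $i$. Here I would apply Lemma~\ref{lem:ored:subrel} argumentwise: from $s_i \cored t_i$ we obtain a term $u_i$ with $s_i \cmred u_i$ and $u_i \mathrel{\down{\cored}} t_i$. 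Closure of the finite reduction $\cmred$ under function symbols (the \texttt{mred\_fun} clause) then gives $f(s_1,\ldots,s_n) \cmred f(u_1,\ldots,u_n)$, while $f(u_1,\ldots,u_n) \mathrel{\down{\down{\cored}}} f(t_1,\ldots,t_n)$ holds directly from the $u_i \mathrel{\down{\cored}} t_i$. Hence the pair lies in ${\cmred \relcomp \down{\down{\cored}}}$, which is exactly ${\cmred \relcomp \down{S}}$ for $S = {\down{\cored}}$.

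With the displayed inclusion established, Formalisation~\ref{form:ored} immediately yields ${\down{\cored}} \subseteq {\cored}$, completing the proof. I do not expect a genuine obstacle; the only point requiring a little care is the bookkeeping modulo bisimilarity $\xbis$ in the Coq development, since Lemma~\ref{lem:ored:subrel} produces the intermediate terms $u_i$ only up to $\xbis$, so the \texttt{mred\_fun}, \texttt{lift\_step} and \texttt{lift\_id} constructors must be combined with \texttt{mred\_bis} to absorb these bisimilarities. This is routine and does not affect the structure of the argument: the essential content is just the realisation that ${\down{\cored}}$ is itself a post-fixed point of the defining operator, reducing everything to one application of Lemma~\ref{lem:ored:subrel} per argument together with the congruence of finite reduction.
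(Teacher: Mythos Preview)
Your argument is correct, but it takes a different and somewhat more laborious route than the paper's. You choose $S \defd \down{\cored}$ and then have to verify $\down{\cored} \subseteq \cmred \relcomp \down{\down{\cored}}$ by case analysis on the shape of a pair in $\down{\cored}$, invoking Lemma~\ref{lem:ored:subrel} argumentwise together with \texttt{mred\_fun}. The paper instead, for a given pair $\pair{s}{t} \in \down{\cored}$, takes $S \defd {\cored} \cup \{\pair{s}{t}\}$: since $\cored \subseteq S$, monotonicity of lifting gives $\down{\cored} \subseteq \down{S}$ immediately, so $\pair{s}{t} \in \down{S}$ and hence $\pair{s}{t} \in \cmred \relcomp \down{S}$ by reflexivity of $\cmred$; the $\cored$-part of $S$ is handled by a single global application of Lemma~\ref{lem:ored:subrel}. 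Thus the paper never opens up $\down{\cored}$ and never needs \texttt{mred\_fun}. Your approach buys a uniform, global witness relation at the price of the extra case analysis; the paper's approach buys economy by enlarging $S$ so that monotonicity of $\down{\,\cdot\,}$ does the work.
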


\begin{proof}
  Let $s \mathrel{\down{\cored}} t$. 
  Define $S = {{\cored} \cup {\pair{s}{t}}}$.
  Then we have $s \mathrel{S} t$.
  To prove $s \cored t$, it suffices (by definition of $\cored$) to show that $S \subseteq {{\cmred} \relcomp {\down{S}}}$.
  By Lemma~\ref{lem:ored:subrel},
  we have ${\cored} \subseteq {\cmred} \relcomp {\down{\cored}}$,
  and hence ${\cored} \subseteq {\cmred} \relcomp {\down{S}}$.
  Moreover,
  we have $\pair{s}{t} \in {\cmred} \relcomp {\down{S}}$
  as a consequence of reflexively of ${\cmred}$,
  and since $s \mathrel{\down{\cored}} t$ and ${\cored} \subseteq S$ imply $s \mathrel{\down{S}} t$.
  This concludes the proof.
\end{proof}

\begin{lem}[\texttt{ored\_ok}]\label{lem:ored:ok}
  We have ${\cored} = {\cmred} \relcomp {\down{\cored}}$.
\end{lem}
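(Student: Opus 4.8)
The plan is to derive the statement directly from the two preceding lemmas. The inclusion ${\cored} \subseteq {\cmred} \relcomp {\down{\cored}}$ is precisely Lemma~\ref{lem:ored:subrel}, so the only work is the reverse inclusion ${\cmred} \relcomp {\down{\cored}} \subseteq {\cored}$. For this I would exploit the fact that, by Formalisation~\ref{form:ored}, the relation $\cored$ is the smallest relation $T$ with the property that $S \subseteq {\cmred} \relcomp {\down{S}}$ implies $S \subseteq T$; equivalently, $\cored$ is the greatest post-fixed point of the monotone operator $R \mapsto {\cmred} \relcomp {\down{R}}$, so it comes with the coinduction rule: in order to show $S \subseteq {\cored}$ it suffices to show $S \subseteq {\cmred} \relcomp {\down{S}}$.

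First I would instantiate this rule with the relation $S \defd {\cmred} \relcomp {\down{\cored}}$. By Lemma~\ref{lem:ored:subrel} we have ${\cored} \subseteq {\cmred} \relcomp {\down{\cored}} = S$. Since the lifting operation $\down{\cdot}$ and relational composition are monotone (see Section~\ref{sec:coinduction} and Definition~\ref{def:lifting}), this gives ${\down{\cored}} \subseteq {\down{S}}$ and hence $S = {\cmred} \relcomp {\down{\cored}} \subseteq {\cmred} \relcomp {\down{S}}$. By the coinduction rule above, $S \subseteq {\cored}$, that is, ${\cmred} \relcomp {\down{\cored}} \subseteq {\cored}$. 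Combining this with Lemma~\ref{lem:ored:subrel} yields the desired equality ${\cored} = {\cmred} \relcomp {\down{\cored}}$.

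I expect essentially no obstacle here: once the inductive definition of $\cored$ in Formalisation~\ref{form:ored} is read as the coinduction principle ``$S \subseteq {\cmred} \relcomp {\down{S}} \implies S \subseteq {\cored}$'', the argument is a two-line diagram chase using only monotonicity of $\down{\cdot}$ and of composition together with Lemma~\ref{lem:ored:subrel}. The one point worth stating carefully is this reading of the definition. If one prefers to avoid the coinduction rule, one can instead route the reverse inclusion through Lemma~\ref{lem:ored:subrel:lift:ored:ored} (so that ${\cmred} \relcomp {\down{\cored}} \subseteq {\cmred} \relcomp {\cored}$) and then appeal to transitivity of $\cmred$, but that detour is not needed.
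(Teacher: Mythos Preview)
Your proof is correct. It differs from the paper's in a useful way. The paper argues element-wise: given $s \cmred t \mathrel{\down{\cored}} u$, it first invokes Lemma~\ref{lem:ored:subrel:lift:ored:ored} to obtain $t \cored u$, then unfolds the definition of $\cored$ to extract a post-fixed point $S$ with $t \mathrel{S} u$, and finally sets $S' = S \cup \{\pair{s}{u}\}$ and checks $S' \subseteq {\cmred}\relcomp\down{S'}$ using transitivity of $\cmred$. Your approach is pointfree: you apply the coinduction rule once to the whole relation $S \defd {\cmred}\relcomp\down{\cored}$ and close the loop using only Lemma~\ref{lem:ored:subrel} and monotonicity of $\down{\cdot}$ and composition. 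This is shorter and does not need Lemma~\ref{lem:ored:subrel:lift:ored:ored} or \texttt{mred\_trans}; the paper's version, on the other hand, mirrors more closely how one manipulates concrete witnesses in the Coq development.
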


\begin{proof}
  By Lemma~\ref{lem:ored:subrel} it suffices to show that ${\cmred} \relcomp {\down{\cored}} \subseteq {\cored}$.
  Let $s \cmred t \mathrel{\down{\cored}} u$.
  Then, by Lemma~\ref{lem:ored:subrel:lift:ored:ored}, we have $s \cmred t \cored u$.
  By definition of $\cored$ there exists a relation $S \subseteq {\cored}$ 
  such that $S \subseteq {\cmred} \relcomp {\down{S}}$ and $t \mathrel{S} u$.
  Thus $t \cmred t' \mathrel{\down{S}} u$ for some $t'$.
  Define $S' = S \cup {\pair{s}{u}}$.
  Then $s \mathrel{S'} u$ and $S' \subseteq {\cmred} \relcomp {\down{S'}}$ 
  since $s \cmred t' \mathrel{\down{S'}} u$ 
  by \texttt{mred\_trans} and $S \subseteq S'$.
  Hence $s \cored u$ by definition of $\cored$ using $S'$.
\end{proof}

\subsection{Formalisation of the Compression Lemma}
Using the above definitions, we will now prove the Compression Lemma.
The proof realises a transformation of $\ired$ proof trees into $\ored$ proof trees.

\begin{lem}[Compression Lemma]
  Let $\atrs$ be a left-linear term rewriting system with finite left-hand sides.
  Whenever there is an infinite reduction from $s$ to $t$ ($s \ired t$) 
  then there exists a reduction of length at most $\omega$ from $s$ to~$t$ ($s \to^{\le \omega} t$).
\end{lem}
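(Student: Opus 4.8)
The plan is to establish the nontrivial inclusion ${\ired}\subseteq{\ored}$; the reverse inclusion is immediate from the definitions. Throughout I work with the fixed-point characterisation ${\ored}=\gfp{R}{{\to^*}\relcomp\down{R}}$ and the unfolding ${\ored}={\to^*}\relcomp\down{\ored}$ of Lemma~\ref{lem:ored:ok}, together with ${\down{\ored}}\subseteq{\ored}$ of Lemma~\ref{lem:ored:subrel:lift:ored:ored}. From these and reflexivity and transitivity of ${\to^*}$ one gets at once that ${\ored}$ is reflexive and that ${\to^*}\relcomp{\ored}\subseteq{\ored}$ (``prepending a finite reduction''), and, by the $\nu$-rule of~\eqref{eq:coind-ind-rules} applied to the relation consisting of all pairs $(w\sigma',w\sigma)$ with $\sigma'(x)\ored\sigma(x)$ for every variable $x$, together with ${\ored}$, the substitution property: $w\sigma'\ored w\sigma$ whenever $\sigma'(x)\ored\sigma(x)$ for all $x$.

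The technical heart is the single-step postponement property
\begin{align}
  {\ored}\relcomp{\to}\;\subseteq\;{\ored}\,,
  \label{eq:plan-postpone}
\end{align}
proved by induction on the depth of the appended step. If the step is below the root, one descends into the argument in which it occurs and applies the induction hypothesis, using the unfolding of ${\ored}$. If the step is a root step $s'=\ell\sigma\rstep r\sigma$, one exploits that the left-hand side $\ell$ is \emph{finite}: from $s\ored\ell\sigma$ one shows, by structural induction on $\ell$ and using \emph{left-linearity} (so that the matched arguments at the variable positions of $\ell$ can be collected into one substitution $\sigma'$), that $s\to^*\ell\sigma'$ for a $\sigma'$ with $\sigma'(x)\ored\sigma(x)$ for all $x$; then $\ell\sigma'\rstep r\sigma'$, and $r\sigma'\ored r\sigma$ by the substitution property, so that $s\to^*\ell\sigma'\rstep r\sigma'\ored r\sigma$ and \eqref{eq:plan-postpone} follows from ${\to^*}\relcomp{\ored}\subseteq{\ored}$. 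This root-step case is precisely where finiteness of left-hand sides and left-linearity are indispensable --- without left-linearity, Example~\ref{ex:fab} shows that compression fails --- and it is the step I expect to be the main obstacle, since it is where the informal observation ``a root step inspects only a finite part of the term'' must be made rigorous.

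From \eqref{eq:plan-postpone} I then derive, by routine fixed-point manipulations, transitivity of ${\ored}$ (show ${\ored}\relcomp{\ored}\subseteq{\to^*}\relcomp\down{({\ored}\relcomp{\ored})}$ and apply the $\nu$-rule, using \eqref{eq:plan-postpone} iterated to absorb finite reductions on the right), transitivity of ${\down{\ored}}$, the commutation ${\down{\ored}}\relcomp{\to^*}\subseteq{\to^*}\relcomp{\down{\ored}}$, and finally, by induction on the length of the prefix,
\begin{align}
  (\rstep\cup{\down{\ored}})^*\;\subseteq\;{\to^*}\relcomp{\down{\ored}}\,.
  \label{eq:plan-prefix}
\end{align}
For the main step I apply the $\mu$-rule to Definition~\ref{def:ired:fixedpoint}: writing $F(R)=\gfp{S}{(\rstep\cup\down{R})^*\relcomp\down{S}}$, it suffices to prove that ${\ored}$ is a pre-fixed point, i.e.\ $F({\ored})\subseteq{\ored}$. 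Set $M:=F({\ored})$, so that $M=(\rstep\cup{\down{\ored}})^*\relcomp{\down M}$. By \eqref{eq:plan-prefix} we have $M\subseteq{\to^*}\relcomp{\down{\ored}}\relcomp{\down M}$, and since ${\down{\ored}}\relcomp{\down M}\subseteq{\down{({\ored}\relcomp M)}}$ (checking the $\id$-parts of the liftings separately, using reflexivity of ${\ored}$ and $M$), the relation $K:={\ored}\relcomp M$ satisfies ${M}\subseteq{K}$ and, using the commutation lemma together with transitivity of ${\ored}$ and of ${\down{\ored}}$, the post-fixed-point inequality $K\subseteq{\to^*}\relcomp{\down K}$. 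The $\nu$-rule then yields $K\subseteq{\ored}$, hence $M\subseteq{\ored}$ and therefore ${\ired}\subseteq{\ored}$. Read at the level of derivations, this computation is exactly the announced transformation of an ${\ired}$-proof tree into an ${\ored}$-proof tree: the finite-prefix reorganisation coming from \eqref{eq:plan-prefix} rebuilds the $\rsplit$-applications, while the outer $\nu$-rule produces the coinductive $\down{(\cdot)}$-structure on the arguments.
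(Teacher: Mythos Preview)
Your proposal is correct and follows essentially the same route as the paper. The key lemmas you isolate---the matching lemma for finite linear left-hand sides, the substitution lemma, single-step postponement ${\ored}\relcomp{\to}\subseteq{\ored}$, and transitivity of ${\ored}$---are exactly the paper's Lemmas~\ref{lem:ored:match}, \ref{lem:subsitution:ored}, \ref{lem:ored:rstep}/\ref{lem:ored:mred}, and~\ref{lem:ored:ored}; your final step, showing $F({\ored})\subseteq{\ored}$ via the auxiliary relation $K={\ored}\relcomp M$ and the $\nu$-rule, is the content of the paper's Lemma~\ref{lem:ored:liftR} combined with Lemma~\ref{lem:ored:mix}, after which the $\mu$-rule (equivalently, the paper's Coq-generated induction principle of Remark~\ref{rem:ired:induction}) concludes.

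The only cosmetic differences are that you pass through the sharper intermediate $(\rstep\cup\down{\ored})^*\subseteq{\to^*}\relcomp\down{\ored}$ using an explicit commutation $\down{\ored}\relcomp{\to^*}\subseteq{\to^*}\relcomp\down{\ored}$, whereas the paper proves $(\crstep\cup\down{\cored})^*\subseteq{\cored}$ directly by induction on the reflexive--transitive closure (Lemma~\ref{lem:ored:mix}); and you invoke the plain $\mu$-rule while the paper uses the equivalent Coq-specific induction principle with the auxiliary relation~$I$. Neither constitutes a genuinely different approach. (One small remark: the lemma is stated as a one-way implication, so there is no ``reverse inclusion'' to establish.)
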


We have formalised the Compression Lemma as follows:
\begin{lem}\label{lem:compression}
  For left-linear \texttt{trs}'s we have ${\cired} \subseteq {\cored}$.
\end{lem}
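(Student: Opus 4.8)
The plan is to derive ${\cired} \subseteq {\cored}$ from the induction principle for \texttt{ired} recorded in Remark~\ref{rem:ired:induction}, instantiated with $P \defd {\cored}$. After this instantiation it suffices to prove: for all relations $I$ and $S$ with $I \subseteq {\cired}$, with $I \subseteq {\cored}$ (the induction hypothesis), and with $S \subseteq (\rstep \cup \down{I})^{*}\relcomp \down{S}$, one has $S \subseteq {\cored}$. Since $I \subseteq {\cored}$ entails $\down{I} \subseteq \down{\cored}$, the hypothesis on $S$ weakens to $S \subseteq (\rstep \cup \down{\cored})^{*}\relcomp \down{S}$, and the remaining task is to absorb the \emph{finite} alternation of root steps and below-root $\le\omega$-reductions appearing on the left of $\down{S}$ into a single finite reduction $\cmred$, keeping only a below-root continuation that can be recorded in a lifting.

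Concretely, I would exhibit a relation $S'$ that contains $S$ and is obtained from $S$ and $\cored$ by closing under the operations occurring in the definition of $\cored$, and verify $S' \subseteq {\cmred}\relcomp\down{S'}$; by the defining property of $\cored$ as the least relation $T$ with $T \subseteq {\cmred}\relcomp\down{T}$ (cf.\ Formalisation~\ref{form:ored}) this yields $S' \subseteq {\cored}$, hence $S \subseteq {\cored}$. For the pairs drawn from $\cored$ the required inclusion is exactly Lemma~\ref{lem:ored:ok}. For the pairs witnessing $S \subseteq (\rstep \cup \down{\cored})^{*}\relcomp\down{S}$, one uses that $(\rstep \cup \down{\cored})^{*}$ is a finite sequence and hence contains only finitely many root steps; repeatedly commuting each root step towards the front rewrites $u \mathrel{(\rstep \cup \down{\cored})^{*}} u' \mathrel{\down{S}} v$ into $u \mathrel{\cmred} w \mathrel{B} v$, where $B$ is a below-root reduction that on each direct argument is a finite composition of $\cored$-reductions followed by one $S$-step. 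Dovetailing these argumentwise reductions — processing, for each depth $n$, the finitely many steps at depth $\le n$ first, which is licit by Theorem~\ref{thm:finite} — exhibits $B$ as an element of $\cmred \relcomp \down{S'}$, and composing with the leading $\cmred$ via \texttt{mred\_trans} (and \texttt{mred\_fun} for the below-root part) discharges the pair.

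The one genuinely non-routine ingredient, and the only place where left-linearity and finiteness of the left-hand sides are used, is commuting a root step to the front past a preceding below-root $\le\omega$-reduction: from $f(c_1,\dots,c_n) \mathrel{\down{\cored}} \ell\sigma \rstep r\sigma$ with $\ell$ a finite linear left-hand side, one must produce a term $f(c_1',\dots,c_n')$ reached from $f(c_1,\dots,c_n)$ by a \emph{finite} reduction that already exhibits the redex pattern $\ell$ at the root, contract it there, and then resume the suspended argument reductions on the subterms of $r\sigma$. This is possible because matching $\ell$ at the root inspects only the finitely many positions occurring in $\ell$, so a sufficiently long finite prefix of each argument reduction already realises those positions; left-linearity guarantees that the substitution read off in this way is well-defined and agrees with $\sigma$ on $\vars{\ell}$. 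Iterating this absorption over the finitely many root steps, while tracking that the argument reductions fed to the lift-rule come from $\down{I}$ with $I \subseteq {\cored}$ (so they are already genuine $\le\omega$-reductions and need no recompression), completes the argument. I expect the bookkeeping for the dovetailing of the below-root $\le\omega$-reductions and for the finitely-many-positions matching step — rather than any conceptual difficulty — to be the main obstacle; the remainder is an unfolding of the fixed-point characterisations of $\cired$ and $\cored$ provided by Remark~\ref{rem:ired:induction} and Lemmas~\ref{lem:ored:subrel}--\ref{lem:ored:ok}.
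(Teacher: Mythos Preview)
Your approach is correct and follows the paper's line: both invoke the induction principle of Remark~\ref{rem:ired:induction} with $P=\cored$, reduce to showing that $S \subseteq (\rstep \cup \down{\cored})^{*}\relcomp \down{S}$ implies $S\subseteq\cored$, and correctly identify the commutation of a root step past a preceding $\le\omega$-reduction (your third paragraph) as the only place where left-linearity and finite left-hand sides are used---this is exactly Lemmas~\ref{lem:ored:match} and~\ref{lem:ored:rstep}. The paper's organisation is cleaner than your sketch, however. It first proves $(\rstep \cup \down{\cored})^{*}\subseteq\cored$ outright (Lemma~\ref{lem:ored:mix}, a short induction on the star using Lemma~\ref{lem:ored:rstep} and transitivity of $\cored$), obtaining $S\subseteq\cored\relcomp\down{S}$; it then isolates the general fact that $R\subseteq\cored\relcomp\down{R}$ implies $R\subseteq\cored$ (Lemma~\ref{lem:ored:liftR}), taking the explicit $S'=(\cored\relcomp R)\cup(\cored\relcomp\down{R})$ and checking $S'\subseteq\cmred\relcomp\down{S'}$ purely from $\cored=\cmred\relcomp\down{\cored}$ and $\down{\cored}\relcomp\down{R}\subseteq\down{\cored\relcomp R}$. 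In particular your dovetailing step via Theorem~\ref{thm:finite} is both unnecessary and out of place (that theorem concerns ordinal-indexed sequences, whereas here everything is coinductive): once the argumentwise relation is a finite composite of $\cored$-steps followed by an $S$-step, transitivity of $\cored$ (Lemma~\ref{lem:ored:ored}) collapses it to $\cored\relcomp S\subseteq S'$, so the below-root part already lies in $\down{S'}$ with no interleaving needed. One small slip: $\cored$ is the \emph{greatest} post-fixed point of $R\mapsto\cmred\relcomp\down{R}$, not the least (the empty relation is also a post-fixed point); you do draw the correct conclusion from it, though.
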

\noindent
The condition \emph{finite left-hand sides} is
part of the formalisation of \texttt{trs}'s, see above. 
In the remainder of this section we tacitly assume that the underlying \texttt{trs} is \emph{left-linear}.
This is a necessary condition for all lemmas~\ref{lem:ored:match}--\ref{lem:ored:liftR}.

We present the proof of Lemma~\ref{lem:compression} as close as possible to to our formalisation in Coq.
We begin with a few auxiliary lemmas.
\begin{lem}[\texttt{ored\_match}]\label{lem:ored:match}
  If $s \cored \ell\sigma$ for a finite, linear term $\ell$, 
  then $s \cmred \ell\tau$ for some substitution $\tau$ with
  $\forall x.\;\tau(x) \cored \sigma(x)$. 
\end{lem}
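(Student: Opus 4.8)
The plan is to prove the lemma by structural induction on the finite linear term $\ell$, the crucial tool being Lemma~\ref{lem:ored:ok}, which states ${\cored} = {\cmred} \relcomp {\down{\cored}}$ and lets us peel off a finite prefix of a $\le\omega$-reduction so that the root symbol of $\ell$ becomes visible. Reflexivity of both $\cmred$ and $\cored$ (the latter by the $\nu$-rule applied to $\id$) and transitivity and context-closure of $\cmred$ (\texttt{mred\_refl}, \texttt{mred\_trans}, \texttt{mred\_fun}) will be used freely.

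In the base case $\ell = x \in \avars$ we have $\ell\sigma = \sigma(x)$, and I would take $\tau$ with $\tau(x) = s$ and $\tau(y) = \sigma(y)$ for $y \neq x$. Then $\ell\tau = s$, so $s \cmred \ell\tau$ by reflexivity of $\cmred$; and $\tau(x) = s \cored \sigma(x)$ by hypothesis, while $\tau(y) = \sigma(y) \cored \sigma(y)$ by reflexivity of $\cored$.

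In the inductive case $\ell = f(\ell_1,\ldots,\ell_n)$, linearity of $\ell$ makes the variable sets $\vars{\ell_i}$ pairwise disjoint. By Lemma~\ref{lem:ored:ok} we get $s \cmred s' \mathrel{\down{\cored}} f(\ell_1\sigma,\ldots,\ell_n\sigma)$ for some $s'$, and a case distinction on $\down{\cored}$ (the identity clause versus the \texttt{lift} clause, up to bisimilarity) shows $s' = f(s'_1,\ldots,s'_n)$ with $s'_i \cored \ell_i\sigma$ for each $i$ — in the identity case one sets $s'_i = \ell_i\sigma$ and uses reflexivity of $\cored$. Applying the induction hypothesis to each $\ell_i$ and $s'_i \cored \ell_i\sigma$ yields substitutions $\tau_i$ with $s'_i \cmred \ell_i\tau_i$ and $\tau_i(x) \cored \sigma(x)$ for all $x$. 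I then define $\tau$ by $\tau(x) = \tau_i(x)$ when $x \in \vars{\ell_i}$ (well defined by disjointness) and $\tau(x) = \sigma(x)$ otherwise; hence $\ell_i\tau = \ell_i\tau_i$, so $s'_i \cmred \ell_i\tau$, and context-closure plus transitivity of $\cmred$ give $s \cmred f(\ell_1\tau,\ldots,\ell_n\tau) = \ell\tau$. The side condition $\tau(x)\cored\sigma(x)$ for all $x$ is then immediate from the induction hypotheses and reflexivity of $\cored$ on the remaining variables.

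The main obstacle is the step that exposes the root of $\ell$: one must argue that an arbitrary $\le\omega$-reduction ending in $f(\ell_1\sigma,\ldots,\ell_n\sigma)$ really decomposes, after some finite prefix, into a tail acting strictly below the root whose argument components are themselves $\le\omega$-reductions. This is exactly what Lemma~\ref{lem:ored:ok} together with the shape of $\down{\cdot}$ delivers, but in the Coq development the bisimilarity bookkeeping hidden in the \texttt{lift} constructor — the term reached after the finite prefix is only \emph{bisimilar} to an $f$-headed term, not syntactically equal to one — has to be threaded through the case analysis with care. A secondary, but genuinely necessary, subtlety is that linearity of $\ell$ is what allows the independently obtained $\tau_i$ to be glued into a single coherent substitution $\tau$; the argument fails without it.
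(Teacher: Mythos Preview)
Your proposal is correct and follows essentially the same structural-induction argument as the paper, using Lemma~\ref{lem:ored:ok} to expose the root symbol and linearity to merge the per-argument substitutions. The only minor divergence is in the base case: you set $\tau(x)=s$ and appeal directly to the hypothesis $s\cored\sigma(x)$, whereas the paper first applies Lemma~\ref{lem:ored:ok} to obtain $s\cmred t\mathrel{\down{\cored}}\sigma(x)$ and sets $\tau(x)=t$, then uses Lemma~\ref{lem:ored:subrel:lift:ored:ored} to conclude $\tau(x)\cored\sigma(x)$ --- your variant is slightly more direct and equally valid.
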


\begin{proof}
  The proof proceeds by induction on $\ell$.
  If $s \cored \ell\sigma$, then there exists a term $t$ 
  such that $s \cmred t \mathrel{\down{\cored}} \ell\sigma$.
  Assume that $\ell$ is a variable, say $\ell = x$.
  Then define $\tau(x) = t$ and  $\tau(y) = \sigma(y)$ for every $y \ne x$.
  We have $\tau(x) = t \mathrel{\down{\cored}} \ell\sigma = \sigma(x)$.
  Then $s \cmred \ell\tau$,
  and we have $\forall x.\;\tau(x) \cored \sigma(x)$
  since ${\down{\cored}} \subseteq {\cored}$ by Lemma~\ref{lem:ored:subrel:lift:ored:ored} and $\cored$ is reflexive. 
  
  If $\ell$ is not a variable, let $\ell = f(\ell_1,\ldots,\ell_n)$.
  Then $t \mathrel{\down{\cored}} \ell\sigma$ implies that
  $t = f(t_1,\ldots,t_n)$ for some terms $t_1,\ldots,t_n$,
  and we have $t_i \cored \ell_i\sigma$ for every $i \in \{1,\ldots,n\}$.
  Then by induction hypothesis, we have
  $t_i \cmred \ell_i\tau_i$ and $\forall x.\;\tau_i(x) \cored \sigma(x)$
  for every $i \in \{1,\ldots,n\}$.
  From linearity of $\ell$ it follows that $\ell_1,\ldots,\ell_n$ do not share any variables.
  Consequently, we can define the substitution $\tau$ as follows:
  $\tau(x) = \tau_i(x)$ if $x \in \vars{\ell_i}$ 
  and $\tau(x) = \sigma(x)$ if $x \not\in \vars{\ell}$.
  Then $\ell_i\tau = \ell_i\tau_i$ for every $i \in \{1,\ldots,n\}$
  and $\ell\tau = f(\ell_1\tau_1,\ldots,\ell_n\tau_n)$.
  Moreover we have $\forall x.\;\tau(x) \cored \sigma(x)$ by definition of $\tau$.
  It remains to show $s \cmred \ell\tau$.
  Using \texttt{mred\_fun} we get 
  $t = f(t_1,\ldots,t_n) \cmred f(\ell_1\tau_1,\ldots,\ell_n\tau_n) = \ell\tau$.
  By \texttt{mred\_trans} we obtain $s \cmred \ell\tau$.
\end{proof}

\begin{lem}[\texttt{subsitution\_ored}]\label{lem:subsitution:ored}
  Let $t$ be a (finite or infinite) term and $\sigma, \tau$ substitutions such that
  $\sigma(x) \cored \tau(x)$ for every $x \in \vars{t}$.
  Then $t\sigma \cored t\tau$.
\end{lem}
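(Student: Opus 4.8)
The plan is to prove this by the coinduction principle built into the definition of $\cored$ (Formalisation~\ref{form:ored}): it suffices to exhibit a relation $S$ with $\pair{t\sigma}{t\tau} \in S$ and $S \subseteq {\cmred} \relcomp {\down{S}}$, for then $S \subseteq {\cored}$ and in particular $t\sigma \cored t\tau$. Concretely, I would take
\begin{align*}
  S \;\defd\; {\cored} \;\cup\; \{\, \pair{u\rho_1}{u\rho_2} \mid u \in \iter{\Sigma}{\avars},\ \forall x \in \vars{u}.\; \rho_1(x) \cored \rho_2(x) \,\}\,.
\end{align*}
By the hypothesis on $\sigma$ and $\tau$ we have $\pair{t\sigma}{t\tau} \in S$, so everything reduces to verifying the post-fixed-point inclusion $S \subseteq {\cmred} \relcomp {\down{S}}$.

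To verify it, take an arbitrary pair in $S$. If the pair already lies in $\cored$, then Lemma~\ref{lem:ored:subrel} gives ${\cored} \subseteq {\cmred} \relcomp {\down{\cored}}$, and since ${\cored} \subseteq S$ and the lifting operation $\down{\cdot}$ is monotone, the pair lies in ${\cmred} \relcomp {\down{S}}$. Otherwise the pair is of the form $\pair{u\rho_1}{u\rho_2}$ with $\rho_1(x) \cored \rho_2(x)$ for all $x \in \vars{u}$, and I would split on the head of $u$. If $u$ is a variable $x$, then $u\rho_1 = \rho_1(x)$ and $u\rho_2 = \rho_2(x)$, so the pair is in $\cored$ and we are back in the previous case. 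If $u = f(u_1,\ldots,u_n)$, then, by the corecursive definition of substitution, $u\rho_1 = f(u_1\rho_1,\ldots,u_n\rho_1)$ and $u\rho_2 = f(u_1\rho_2,\ldots,u_n\rho_2)$; since $\vars{u_i} \subseteq \vars{u}$ for each $i$, each pair $\pair{u_i\rho_1}{u_i\rho_2}$ belongs to $S$, hence $\pair{u\rho_1}{u\rho_2} \in {\down{S}}$, and using reflexivity of $\cmred$ (rule \texttt{mred\_refl}) we conclude $\pair{u\rho_1}{u\rho_2} \in {\cmred} \relcomp {\down{S}}$.

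The only genuine design decision is the shape of $S$: it has to contain $\cored$ outright, because the remainder of a $\cored$-step produced by Lemma~\ref{lem:ored:subrel} --- and, likewise, the variable case, which is literally an instance of $\cored$ --- must stay inside $\down{S}$. Beyond that, the argument is a single case distinction on the head symbol of $u$ with no recursion, so no induction principle on the possibly infinite term $t$ is invoked; all the coinductive strength is carried by the definition of $\cored$. I expect no real obstacle here: the proof is short, and left-linearity, tacitly in force in this part of the paper, is not needed for this particular lemma.
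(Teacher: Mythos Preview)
Your proposal is correct and follows essentially the same approach as the paper: exhibit a post-fixed point $S$ of $R \mapsto {\cmred} \relcomp {\down{R}}$ containing both $\cored$ and the relevant substituted pairs, then do a case split on the head of $u$. The only difference is cosmetic: the paper takes $S = {\cored} \cup \{\pair{u\sigma}{u\tau} \mid u \text{ a term}\}$ with the fixed $\sigma,\tau$, whereas you quantify over all substitution pairs satisfying the hypothesis; your version is in fact slightly cleaner with respect to the restriction $x \in \vars{t}$, but the argument is the same.
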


\begin{proof}
  By definition of $\cored$
  it suffices to find a relation $S$ such that
  $t\sigma \mathrel{S} t\tau$
  and $S \subseteq {\cmred \relcomp \mathrel{\down{S}}}$.
  We define 
  $S = {\cored \cup \mathrel{\{\; \pair{u\sigma}{u\tau} \mid \text{$u$ a term} \;\}}}$.
  We show  $S \subseteq {\cmred \relcomp \mathrel{\down{S}}}$.
  We have 
  \begin{align}
    {\cored} 
    \;\;\subseteq\;\; {\cmred \relcomp \mathrel{\down{\cored}}}
    \;\;\subseteq\;\; {\cmred \relcomp \mathrel{\down{S}}}
    \label{eq:subsitution:ored:ored}
  \end{align}
  by Lemma~\ref{lem:ored:ok}.
  
  So consider $u\sigma \mathrel{S} u\tau$ for some term $u$.
  If $u$ is a variable, then $u\sigma = \sigma(u) \cored \tau(u) = u\tau$.
  Then $u\sigma \cmred \relcomp \mathrel{\down{S}} u\tau$ follows from \eqref{eq:subsitution:ored:ored}.
  Thus, let $u = f(u_1,\ldots,u_n)$
  for some symbol $f$ and terms $u_1,\ldots,u_n$.
  Then we have $u\sigma = f(u_1\sigma,\ldots,u_n\sigma) \mathrel{\down{S}}  f(u_1\tau,\ldots,u_n\tau) = u\tau$,
  and clearly $\down{S} \subseteq {\cmred \relcomp \mathrel{\down{S}}}$.
  Hence $S \subseteq {\cmred \relcomp \mathrel{\down{S}}}$.
\end{proof}

\begin{lem}[\texttt{ored\_rstep}]\label{lem:ored:rstep}
  We have ${\cored \relcomp \crstep} \subseteq {\cored}$.
\end{lem}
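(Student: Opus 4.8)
The plan is to turn a reduction of length $\le\omega$ followed by a root step into a single reduction of length $\le\omega$, by ``pulling'' the root step into the finite prefix that left-linearity of $\atrs$ makes available. Assume $s \cored t$ and $t \crstep u$. By definition of $\crstep$ there are a rule $\ell \to r \in \atrs$ and a substitution $\sigma$ with $\ell\sigma \xbis t$ and $r\sigma \xbis u$; note that $\ell$ is finite, and linear because $\atrs$ is left-linear. Since $\cored$ is closed under bisimilarity (which holds because $\cmred$ and the lifting operation are closed under bisimilarity), we may replace $t$ by $\ell\sigma$ and obtain $s \cored \ell\sigma$.

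First I would apply Lemma~\ref{lem:ored:match} to $s \cored \ell\sigma$, yielding a substitution $\tau$ with $s \cmred \ell\tau$ and $\tau(x) \cored \sigma(x)$ for every $x$. The crucial point is that the reduction $s \cmred \ell\tau$ is genuinely finite and ends in a term matching the left-hand side of $\ell \to r$ at the root; hence we can fire this rule, $\ell\tau \crstep r\tau$, so $\ell\tau \cmred r\tau$ by \texttt{mred\_root}, and therefore $s \cmred r\tau$ by \texttt{mred\_trans}.

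Next I would close the gap between $r\tau$ and $u$. By Lemma~\ref{lem:subsitution:ored}, using $\tau(x) \cored \sigma(x)$ for every $x \in \vars{r}$, we get $r\tau \cored r\sigma$, and since $r\sigma \xbis u$ and $\cored$ is bisimilarity-closed, $r\tau \cored u$. To splice $s \cmred r\tau$ with $r\tau \cored u$ into one $\cored$-derivation, I would invoke Lemma~\ref{lem:ored:subrel} to write $r\tau \cmred w$ and $w \mathrel{\down{\cored}} u$ for some $w$; then $s \cmred w$ by \texttt{mred\_trans}, so $s \mathrel{(\cmred \relcomp \down{\cored})} u$, and Lemma~\ref{lem:ored:ok} gives $s \cored u$, as desired.

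The main obstacle, and the reason left-linearity is essential, is the very first step. A priori the reduction $s \cored t$ may perform infinitely many steps, none of them at the root, with the root redex only materializing in the limit; there is then no ``last'' term before which we could append the root step. Lemma~\ref{lem:ored:match} is precisely what resolves this: to build up the \emph{finite} left-hand side $\ell$ only finitely many steps are needed, the remaining infinitary activity being absorbed into the substitution $\tau$ --- this is exactly where finiteness and linearity of $\ell$ enter. Everything after that is bookkeeping with bisimilarity and the two characterizations of $\cored$ in Lemmas~\ref{lem:ored:ok} and~\ref{lem:ored:subrel}.
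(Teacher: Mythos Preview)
Your proof is correct and follows essentially the same route as the paper: apply \texttt{ored\_match} to obtain $s \cmred \ell\tau$ with $\tau(x)\cored\sigma(x)$, fire the root step to get $s \cmred r\tau$, use \texttt{substitution\_ored} for $r\tau \cored r\sigma$, decompose this via \texttt{ored\_subrel}/\texttt{ored\_ok} into $\cmred \relcomp \down{\cored}$, and recombine. Your explicit treatment of bisimilarity is a welcome bit of extra care that the paper's proof leaves implicit.
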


\begin{proof}
  Let $\ell \to r \in \atrs$ be a rule, $\sigma$ a substitution and consider $s \cored \ell\sigma \crstep r\sigma$.
  Then by Lemma~\ref{lem:ored:match} we have
  $s \cmred \ell\tau$ for some substitution $\tau$ with $\forall x.\;\tau(x) \cored \sigma(x)$.
  We have $r\tau \cored r\sigma$ by Lemma~\ref{lem:subsitution:ored},
  and thus $r\tau \cmred t \mathrel{\down{\cored}} r\sigma$ for some $t$ by Lemma~\ref{lem:ored:ok}.
  Then also $s \cmred \ell\tau \crstep r\tau \cmred t \mathrel{\down{\cored}} r\sigma$
  and $s \cmred t \mathrel{\down{\cored}} r\sigma$ by \texttt{mred\_root} and \texttt{mred\_trans}.
  Hence we have $s \cored r\sigma$ by Lemma~\ref{lem:ored:ok}.
\end{proof}

\begin{lem}[\texttt{ored\_mred}]\label{lem:ored:mred}
  We have ${\cored \relcomp \cmred} \subseteq {\cored}$.
\end{lem}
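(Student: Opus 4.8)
The plan is to prove the inclusion by induction on the derivation of the finite reduction $t \cmred u$, exploiting the inductive presentation of $\cmred$ by the five rules \texttt{mred\_bis}, \texttt{mred\_refl}, \texttt{mred\_trans}, \texttt{mred\_root} and \texttt{mred\_fun}. Precisely, I will show by \texttt{mred}-induction that whenever $t \cmred u$, then $s \cored t$ implies $s \cored u$ for every $s$; the statement of the lemma is then immediate. Throughout I intend to use Lemma~\ref{lem:ored:ok} in the form ${\cored} \;=\; {\cmred} \relcomp {\down{\cored}}$, reflexivity of $\cored$, and the routine fact that $\cored$ and $\down{\cdot}$ are closed under bisimilarity on both sides (which is built into Formalisation~\ref{form:ored} through the way $\xbis$ appears in \texttt{lift}).

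The cases \texttt{mred\_refl}, \texttt{mred\_trans} and \texttt{mred\_bis} are straightforward: the first is trivial, the second composes the two implications, and the third uses closure of $\cored$ under $\xbis$. The case \texttt{mred\_root}, i.e.\ $t \crstep u$, is precisely Lemma~\ref{lem:ored:rstep}: from $s \cored t \crstep u$ we obtain $s \cored u$. So the only case with content is \texttt{mred\_fun}, where $t = f(t_1,\ldots,t_n)$, $u = f(u_1,\ldots,u_n)$, $t_i \cmred u_i$ for all $i$, and the induction hypothesis yields, for every $i$, that $s_i' \cored t_i$ implies $s_i' \cored u_i$. Given $s \cored f(t_1,\ldots,t_n)$, I will apply Lemma~\ref{lem:ored:ok} to get $s \cmred s' \mathrel{\down{\cored}} f(t_1,\ldots,t_n)$ for some $s'$, and then split on how $s' \mathrel{\down{\cored}} f(\vec t)$ arises. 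If $s' \xbis f(\vec t)$, then $f(\vec t) \cmred f(\vec u)$ by \texttt{mred\_fun} on the $t_i \cmred u_i$, hence $s' \cmred f(\vec u)$ by \texttt{mred\_bis} and $s \cmred f(\vec u)$ by \texttt{mred\_trans}; since $f(\vec u) \mathrel{\down{\cored}} f(\vec u)$ by reflexivity, Lemma~\ref{lem:ored:ok} gives $s \cored f(\vec u)$. If instead $s'$ is (up to bisimilarity) of the form $f(s_1',\ldots,s_n')$ with $s_i' \cored t_i$ for all $i$, the induction hypothesis gives $s_i' \cored u_i$, hence $s' \mathrel{\down{\cored}} f(\vec u)$, and again $s \cmred s' \mathrel{\down{\cored}} f(\vec u)$ together with Lemma~\ref{lem:ored:ok} yields $s \cored f(\vec u)$.

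I expect the \texttt{mred\_fun} case --- in particular keeping track of the two sub-cases for $\down{\cored}$ and invoking the argument-wise induction hypothesis correctly --- to be the main obstacle; everything else is bookkeeping over Lemmas~\ref{lem:ored:ok} and~\ref{lem:ored:rstep} together with the closure-under-$\xbis$ facts. It is worth noting why a purely coinductive argument does not work: one would like to show that $\cored \relcomp \cmred$ is contained in $\cmred \relcomp \down{\cored \relcomp \cmred}$ and then appeal to the definition of $\cored$, but the finite reduction $t \cmred u$ may contract redexes at the root of $t$ and cannot in general be absorbed into the leading $\cmred$-segment produced by Lemma~\ref{lem:ored:ok}. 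Induction on the structure of $\cmred$ is what lets us peel off such root activity (via Lemma~\ref{lem:ored:rstep}) and descend through contexts (via \texttt{mred\_fun}).
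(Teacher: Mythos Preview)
Your proposal is correct and follows essentially the same route as the paper: induction on the derivation of $\cmred$, with \texttt{mred\_bis}, \texttt{mred\_refl}, \texttt{mred\_trans} trivial, \texttt{mred\_root} handled by Lemma~\ref{lem:ored:rstep}, and the \texttt{mred\_fun} case via Lemma~\ref{lem:ored:ok} and the argument-wise induction hypothesis. The only difference is that you explicitly split the $\down{\cored}$-step into its \texttt{lift\_id} and \texttt{lift\_step} sub-cases, whereas the paper silently folds the bisimilarity case into the statement ``$s' = f(s_1,\ldots,s_n)$ with $s_i \cored t_i$''; your treatment is slightly more careful but not substantively different.
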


\begin{proof}
  The proof proceeds by induction on the definition of $\cmred$ in $s \cored t \cmred u$.
  The induction step is trivial for \texttt{mred\_bis}, \texttt{mred\_refl} and \texttt{mred\_trans}.
  For \texttt{mred\_root} the claim follows from Lemma~\ref{lem:ored:rstep}.
  Here we only consider the case of \texttt{mred\_fun}.
  Then $t = f(t_1,\ldots,t_n)$ and $u = f(u_1,\ldots,u_n)$
  for some symbol~$f$ and $t_i \cmred u_i$ for every $i \in \{1,\ldots,n\}$.
  From $s \cored t$ it follows that
  $s \cmred s' \mathrel{\down{\cored}} t = f(t_1,\ldots,t_n)$
  for some $s'$.
  Hence $s' = f(s_1,\ldots,s_n)$ with $s_i \cored t_i$ for every $i \in \{1,\ldots,n\}$.
  By the induction hypothesis, we obtain 
  $s_i \cored u_i$ for every $i \in \{1,\ldots,n\}$,
  and, consequently, $s \cmred s' \mathrel{\down{\cored}} u$.
  Finally, $s \cored u$ by Lemma~\ref{lem:ored:ok}.
\end{proof}

\begin{lem}[\texttt{ored\_ored}]\label{lem:ored:ored}
  We have ${\cored \relcomp \cored} \subseteq {\cored}$ and 
  ${\cored \relcomp \mathrel{\down{\cored}}} \subseteq {\cored}$.
\end{lem}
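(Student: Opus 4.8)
The plan for proving Lemma~\ref{lem:ored:ored} is to establish the first inclusion ${\cored \relcomp \cored} \subseteq {\cored}$ by coinduction, and then read off the second one essentially for free. Recall that $\cored$ is, by its definition (Formalisation~\ref{form:ored}, equation~\eqref{eq:ored}), the largest relation $T$ such that $S \subseteq T$ whenever $S \subseteq {\cmred \relcomp \down{S}}$; equivalently, to prove $S \subseteq {\cored}$ it suffices to check $S \subseteq {\cmred \relcomp \down{S}}$. This is exactly the principle already used in the proofs of Lemmas~\ref{lem:ored:subrel:lift:ored:ored} and~\ref{lem:ored:ok}. So I would pick $S \defd {\cored} \cup {(\cored \relcomp \cored)}$ and verify $S \subseteq {\cmred \relcomp \down{S}}$.

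The verification splits into the two cases defining $S$. For a pair $s \cored u$, Lemma~\ref{lem:ored:ok} gives some $s'$ with $s \cmred s' \mathrel{\down{\cored}} u$; since $\down{(\cdot)}$ is monotone and ${\cored} \subseteq S$, this is already a witness for $s \mathrel{({\cmred} \relcomp {\down{S}})} u$. For a pair $s \cored t \cored u$, I would unfold both steps via Lemma~\ref{lem:ored:ok}, obtaining $s \cmred s' \mathrel{\down{\cored}} t$ and $t \cmred t' \mathrel{\down{\cored}} u$. Applying ${\down{\cored}} \subseteq {\cored}$ (Lemma~\ref{lem:ored:subrel:lift:ored:ored}) and then ${\cored \relcomp \cmred} \subseteq {\cored}$ (Lemma~\ref{lem:ored:mred}) to $s' \mathrel{\down{\cored}} t \cmred t'$ yields $s' \cored t'$. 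Unfolding this once more with Lemma~\ref{lem:ored:ok} gives $s' \cmred s'' \mathrel{\down{\cored}} t'$, so $s \cmred s' \cmred s''$, hence $s \cmred s''$ by transitivity of $\cmred$ (\texttt{mred\_trans}), while $s'' \mathrel{\down{\cored}} t' \mathrel{\down{\cored}} u$, i.e. $s'' \mathrel{({\down{\cored}} \relcomp {\down{\cored}})} u$. Finally ${\down{\cored}} \relcomp {\down{\cored}} \subseteq {\down{\cored}} \cup {\down{(\cored \relcomp \cored)}} \subseteq {\down{S}}$ — an elementary fact obtained by inspecting the two clauses of the lift (a composite of two lift-steps is a lift-step for $\cored \relcomp \cored$, and a composite involving the identity clause is again a lift-step for $\cored$) together with monotonicity of $\down{(\cdot)}$ and ${\cored} \subseteq S$, ${(\cored \relcomp \cored)} \subseteq S$. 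Thus $s \mathrel{({\cmred} \relcomp {\down{S}})} u$. This proves $S \subseteq {\cmred \relcomp \down{S}}$, hence $S \subseteq {\cored}$, and in particular ${\cored \relcomp \cored} \subseteq {\cored}$.

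For the second inclusion I would simply use ${\down{\cored}} \subseteq {\cored}$ (Lemma~\ref{lem:ored:subrel:lift:ored:ored}) together with the inclusion just established: ${\cored \relcomp \down{\cored}} \subseteq {\cored \relcomp \cored} \subseteq {\cored}$.

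The one slightly delicate point is the second application of Lemma~\ref{lem:ored:ok} inside the case $s \cored t \cored u$: after merging the finite prefixes with \texttt{mred\_trans} one is left with ${\down{\cored}} \relcomp {\down{\cored}}$ at the tail, and this has to land in $\down{S}$, not merely in $S$ — which is precisely the reason for choosing $S$ to contain both $\cored$ and $\cored \relcomp \cored$ rather than just the composition. Everything else is routine bookkeeping with the already-proven lemmas; no induction and only the single coinductive appeal to the defining property of $\cored$ are required.
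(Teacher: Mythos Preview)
Your proof is correct and uses the same underlying mechanism as the paper: a single appeal to the coinductive characterisation of $\cored$ (i.e.\ $S \subseteq {\cmred \relcomp \down{S}} \Rightarrow S \subseteq {\cored}$), together with Lemmas~\ref{lem:ored:ok} and~\ref{lem:ored:mred} to shuffle the finite prefixes. The organisational difference is in the choice of the coinductive witness. The paper takes $S \defd ({\cored}\relcomp{\cored}) \cup ({\cored}\relcomp{\down{\cored}})$ and proves both claimed inclusions in one shot, reducing the first disjunct to the second via ${\cored}\relcomp{\cored} \subseteq {\cored}\relcomp({\cmred}\relcomp{\down{\cored}}) \subseteq {\cored}\relcomp{\down{\cored}}$ and then showing ${\cored}\relcomp{\down{\cored}} \subseteq {\cmred}\relcomp{\down{({\cored}\relcomp{\cored})}} \subseteq {\cmred}\relcomp{\down{S}}$, all at the level of relation inclusions. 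You instead take $S \defd {\cored} \cup ({\cored}\relcomp{\cored})$, work pointwise, perform one extra unfolding via Lemma~\ref{lem:ored:ok}, and then recover the second inclusion afterwards from Lemma~\ref{lem:ored:subrel:lift:ored:ored}. Your route costs a little more bookkeeping (the third unfolding and the explicit case analysis on ${\down{\cored}}\relcomp{\down{\cored}}$), while the paper's choice of $S$ makes the argument shorter and handles both statements at once; conceptually, however, the two arguments are the same.
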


\begin{proof}
  Define $S = (\cored \relcomp \cored) \cup (\cored \relcomp \mathrel{\down{\cored}})$.
  We show $S \subseteq {\cored}$.
  By definition of $\cored$ it suffices that $S \subseteq {\cmred \relcomp \mathrel{\down{S}}}$.
  We have:
  \begin{align}
    {\cored \relcomp \cored} 
    \;\subseteq\;
    {\cored \relcomp \mathrel{(\cmred \relcomp \mathrel{\down{\cored}})}}
    \;\subseteq\;
    {\cored \relcomp \mathrel{\down{\cored}}}
  \end{align}
  by Lemma~\ref{lem:ored:ok} and Lemma~\ref{lem:ored:mred},
  and
  \begin{align*}
    {\cored \relcomp \mathrel{\down{\cored}}}
    \;\subseteq\;
    {\mathrel{(\cmred \relcomp \mathrel{\down{\cored}})} \relcomp \mathrel{\down{\cored}}}
    \;\subseteq\;
    {\cmred \relcomp \mathrel{(\mathrel{\down{\cored}} \relcomp \mathrel{\down{\cored}})}}
    \;\subseteq\;
    {\cmred \relcomp \mathrel{\down{(\cored \relcomp \cored)}}}
    \;\subseteq\;
    {\cmred \relcomp \mathrel{\down{S}}}
  \end{align*}
  by Lemma~\ref{lem:ored:ok}, associativity, definition of \,$\down{\;\cdot\;}$.
  This proves $S \subseteq {\cmred \relcomp \mathrel{\down{S}}}$.
\end{proof}

\begin{lem}[\texttt{ored\_mix}]\label{lem:ored:mix}
  We have ${(\,{\crstep} \cup {\down{\cored}}\,)^*} \subseteq {\cored}$.
\end{lem}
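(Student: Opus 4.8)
The plan is to prove the inclusion by induction on the inductive definition of the reflexive--transitive closure $(\cdot)^*$, which is generated by the three clauses for a single step, for reflexivity, and for transitivity (cf.\ \texttt{refl\_trans\_close} in the formalisation). Hence it suffices to establish three facts: (a) ${\crstep} \cup {\down{\cored}} \subseteq {\cored}$; (b) $\cored$ is reflexive; and (c) $\cored$ is transitive, i.e.\ ${\cored \relcomp \cored} \subseteq {\cored}$. Once these are in place, the induction closes immediately: the single-step case is (a), the reflexive case is (b), and the transitive case is (c).

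For (a), the inclusion ${\down{\cored}} \subseteq {\cored}$ is exactly Lemma~\ref{lem:ored:subrel:lift:ored:ored}. For ${\crstep} \subseteq {\cored}$, I would first note that a root step is in particular a finite reduction, so ${\crstep} \subseteq {\cmred}$ (this is the \texttt{mred\_root} clause), and then observe ${\cmred} \subseteq {\cored}$: since $\id \subseteq {\down{\cored}}$ by Definition~\ref{def:lifting}, we get ${\cmred} = {\cmred \relcomp \id} \subseteq {\cmred \relcomp \down{\cored}} = {\cored}$ using Lemma~\ref{lem:ored:ok}. For (b), reflexivity of $\cored$ follows by the same argument: $\id \subseteq {\cmred}$ (the \texttt{mred\_refl} clause) and $\id \subseteq {\down{\cored}}$, so $\id \subseteq {\cmred \relcomp \down{\cored}} = {\cored}$ by Lemma~\ref{lem:ored:ok}. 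For (c), transitivity of $\cored$ is precisely the first statement of Lemma~\ref{lem:ored:ored}.

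I do not expect a genuine obstacle here, since the real content --- the absorption properties of $\cored$ under composition with $\cmred$, root steps, and $\down{\cored}$ --- has already been discharged in Lemmas~\ref{lem:ored:ok}--\ref{lem:ored:ored}. The only mildly non-obvious point is the observation ${\cmred} \subseteq {\cored}$, which rests on $\id$ being a subset of every lifting $\down{\cdot}$ together with the characterisation $\cored = \cmred \relcomp \down{\cored}$; everything else is a routine induction on the shape of $(\cdot)^*$. (An alternative would be to use the ``smallest $T$'' characterisation of $\cored$ from~\eqref{eq:ored} directly on the relation $(\crstep \cup \down{\cored})^*$, but the induction on $(\cdot)^*$ is shorter and matches the Coq development more closely.)
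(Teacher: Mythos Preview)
Your proposal is correct and follows essentially the same approach as the paper: both prove the inclusion by induction on the three constructors of the reflexive--transitive closure, handling the single-step, reflexive, and transitive cases separately, and both discharge transitivity via Lemma~\ref{lem:ored:ored} and reflexivity via $\id \subseteq {\cmred}$, $\id \subseteq {\down{\cored}}$, and Lemma~\ref{lem:ored:ok}. The only cosmetic difference is in the single-step case: for ${\crstep} \subseteq {\cored}$ the paper cites Lemma~\ref{lem:ored:rstep} while you go through ${\crstep} \subseteq {\cmred} \subseteq {\cored}$, and for ${\down{\cored}} \subseteq {\cored}$ the paper invokes Lemma~\ref{lem:ored:ok} directly while you cite Lemma~\ref{lem:ored:subrel:lift:ored:ored}; both routes are valid and equally short.
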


\proof
  Define $S = (\,{\crstep} \cup {\down{\cored}}\,)$. 
  We prove $S^* \subseteq {\cored}$ by induction on (the definition of) the reflexive transitive closure:
  \begin{enumerate}
    \item \texttt{refl\_trans\_step}:
      For $s \mathrel{S} t$ we have either $s \crstep t$ or $s \mathrel{\down{\cored}} t$.
      Then $s \cored t$ as a consequence of either Lemma~\ref{lem:ored:rstep} or Lemma~\ref{lem:ored:ok}, respectively.
    \item \texttt{refl\_trans\_refl}:
      For $s \xbis t$ we $s \cmred t$ by \texttt{mred\_bis} and \texttt{mred\_refl}.
      Hence $s \cored t$ since ${\cmred} \relcomp {\down{\cored}} \subseteq \cored$ by Lemma~\ref{lem:ored:ok}
      and $\down{\cored}$ is reflexive by \texttt{lift\_id}.
    \item \texttt{refl\_trans\_trans}:
      For $s \mathrel{S^*} u \mathrel{S^*} t$
      we obtain $s \cored u$ and $u \cored t$ by induction hypothesis.
      Then $s \cored t$ follows from Lemma~\ref{lem:ored:ored}.
      \qedhere
  \end{enumerate}
\medskip

\begin{lem}[\texttt{ored\_liftR}]\label{lem:ored:liftR}
  We have $R \subseteq {\cored \relcomp \mathrel{\down{R}}}$ implies $R \subseteq {\cored}$.
\end{lem}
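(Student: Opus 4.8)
The plan is to apply the defining property of $\cored$ directly. By Formalisation~\ref{form:ored}, $\cored$ is the smallest relation $T$ such that every post-fixed point $S$ (i.e.\ $S \subseteq \cmred \relcomp \down S$) satisfies $S \subseteq T$; equivalently, to prove $R \subseteq \cored$ it suffices to exhibit a relation $S$ with $R \subseteq S$ and $S \subseteq \cmred \relcomp \down S$. I would take $S \defd \cored \cup (\cored \relcomp \down R)$. The inclusion $R \subseteq S$ is immediate from the hypothesis $R \subseteq \cored \relcomp \down R$, so everything reduces to checking the post-fixed-point property $S \subseteq \cmred \relcomp \down S$.

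For the summand $\cored$ this is Lemma~\ref{lem:ored:subrel} together with monotonicity of the lifting and $\cored \subseteq S$: $\cored \subseteq \cmred \relcomp \down\cored \subseteq \cmred \relcomp \down S$. For the summand $\cored \relcomp \down R$, take a pair related by $x \cored z$ and $z \mathrel{\down R} y$. Using Lemma~\ref{lem:ored:ok} write $x \cmred x' \mathrel{\down\cored} z$. If the step $z \mathrel{\down R} y$ comes from the bisimilarity clause of the lifting, i.e.\ $z \xbis y$, then $x' \mathrel{\down\cored} y$ since the lifting is closed under bisimilarity on the right, and $\down\cored \subseteq \down S$ gives $(x,y) \in \cmred \relcomp \down S$. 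Otherwise $z = f(s_1,\dots,s_n)$, $y = f(t_1,\dots,t_n)$ with $s_i \mathrel R t_i$ for all $i$; from $x' \mathrel{\down\cored} f(s_1,\dots,s_n)$ we get $x' = f(r_1,\dots,r_n)$ with $r_i \cored s_i$ for all $i$ (also in the bisimilarity case of $\down\cored$, using $\xbis \subseteq \cored$, which holds because $\cmred$ — hence $\cored$ — absorbs $\xbis$ and is reflexive). Then $r_i \mathrel{\cored \relcomp R} t_i$, and $\cored \relcomp R \subseteq S$ because $\cored \relcomp R \subseteq \cored \relcomp \cored \relcomp \down R \subseteq \cored \relcomp \down R \subseteq S$ by the hypothesis and Lemma~\ref{lem:ored:ored}. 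Hence $x' \mathrel{\down S} y$, so $(x,y) \in \cmred \relcomp \down S$. This establishes $S \subseteq \cmred \relcomp \down S$, whence $S \subseteq \cored$ and in particular $R \subseteq \cored$.

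The routine parts are the small "bisimilarity/identity bookkeeping" built into both $\down{(\cdot)}$ and $\cored$ (reflexivity of $\cored$, $\xbis \subseteq \cored$, stability of the lifting under $\xbis$), which all follow at once from Lemma~\ref{lem:ored:ok} and the constructors of $\cmred$. The one genuinely load-bearing point — the place where the hypothesis is consumed a second time — is the inclusion $\cored \relcomp R \subseteq S$: it is obtained by re-feeding $R \subseteq \cored \relcomp \down R$ into a $\cored$-prefix and then collapsing $\cored \relcomp \cored$ via transitivity (Lemma~\ref{lem:ored:ored}). That is the step I expect to be the main (and essentially only) obstacle; once it is in place, verifying the post-fixed-point inequality is mechanical.
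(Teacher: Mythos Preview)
Your proof is correct and follows the same strategy as the paper: exhibit a post-fixed point $S \supseteq R$ of $X \mapsto {\cmred} \relcomp \down{X}$ and invoke the definition of $\cored$. The paper takes $S = ({\cored} \relcomp R) \cup ({\cored} \relcomp \down{R})$ and reasons point-free via the inclusion $\down{\cored} \relcomp \down{R} \subseteq \down{({\cored} \relcomp R)}$, which replaces your element-wise case analysis on the lift constructors; apart from this cosmetic difference (and your extra summand $\cored$, which is harmless since $\id \subseteq \down{R}$), the two arguments coincide.
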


\begin{proof}
  Define $S = (\cored \relcomp \mathrel{R}) \cup (\cored \relcomp \mathrel{\down{R}})$.
  We show $S \subseteq (\cmred \relcomp \mathrel{\down{S}})$:
  \begin{align*}
    {\cored \relcomp \mathrel{R}} 
    \;\;\subseteq\;\; \cored \relcomp \mathrel{({\cored \relcomp \mathrel{\down{R}}})}
    \;\;\subseteq\;\; \cored \relcomp \mathrel{\down{R}}
  \end{align*}
  by $R \subseteq {\cored \relcomp \mathrel{\down{R}}}$ and Lemma~\ref{lem:ored:ored}, and
  \begin{align*}
    \cored \relcomp \mathrel{\down{R}}
    \;\;\subseteq\;\; \mathrel{(\cmred \relcomp \mathrel{\down{\cored}})} \relcomp \mathrel{\down{R}}
    \;\;\subseteq\;\; \cmred \relcomp \mathrel{\down{(\cored \relcomp \mathrel{R})}}
    \;\;\subseteq\;\; \cmred \relcomp \mathrel{\down{S}}
  \end{align*}
  by Lemma~\ref{lem:ored:ok}.
  Hence $S \subseteq (\cmred \relcomp \mathrel{\down{S}})$
  and consequently $S \subseteq {\cored}$.
  Note that $R \subseteq {\cored \relcomp \mathrel{\down{R}}} \subseteq S$. 
  Thus $R \subseteq S \subseteq {\cored}$.
\end{proof}

We are ready to prove compression.
\begin{proof}[Proof of Lemma~\ref{lem:compression}]
  The proof proceeds by induction on $\cired$.
  By the induction principle discussed in Remark~\ref{rem:ired:induction}
  we have to show 
  \begin{align*}
    I \subset {\cired} 
    \;\implies\; I \subseteq {\cored} 
    \;\implies\; S \subseteq {(\crstep \cup \mathrel{\down{I}})^*\relcomp \down{S}} 
    \;\implies\; S \subseteq {\cored}
  \end{align*}
  for every relation $I$ and $S$.
  So let $I \subseteq ({\cired} \cap {\cored})$ and $S \subseteq {(\crstep \cup \mathrel{\down{I}})^*\relcomp \down{S}}$.
  Since $I \subseteq {\cored}$ we get 
  $(\crstep \cup \mathrel{\down{I}})^* \subseteq (\crstep \cup \mathrel{\down{\cored}})^* \subseteq {\cored}$ 
  by Lemma~\ref{lem:ored:mix}.
  Consequently, $S \subseteq {\cored \relcomp \mathrel{\down{S}}}$
  and, by Lemma~\ref{lem:ored:liftR}, we obtain $S \subseteq {\cored}$.
  Hence ${\cired} \subseteq {\cored}$.
\end{proof}

\noindent

To the best of our knowledge this is the first formal proof of this well-known lemma.
The formalization is available at~\cite{compression}.

\section{Conclusion}\label{sec:conclusion}
We have proposed a coinductive framework which gives rise to several
natural variants of infinitary rewriting in a uniform way:
  \begin{enumerate}[label=\({\alph*}]
    \item infinitary equational reasoning
        ${\ieq} \;\;\defd\;\; \gfp{y}{(\leftarrow_{\varepsilon} \cup \rstep \cup \mathrel{\down{y}})^*}$,
  \smallskip
    \item bi-infinite rewriting
        ${\ibi} \;\;\defd\;\; \gfp{y}{(\rstep \cup \mathrel{\down{y}})^*}$, and
  \smallskip
    \item infinitary rewriting
        ${\ired} \;\;\defd\;\; \lfp{x}{\gfp{y}{(\rstep \cup \mathrel{\down{x}})^*\relcomp \down{y}}}$\,.
  \end{enumerate}
We believe that (a) and (b) are novel.
As a consequence of the coinduction over the term structure,
these notions have the strong convergence built-in,
and thus can profit from the well-developed techniques (such as tracing)
in infinitary rewriting.

We have given a mixed inductive/coinductive definition of infinitary rewriting 
and established a bridge between infinitary rewriting and coalgebra.
Both fields are concerned with infinite objects and we would like %
to understand their relation better. 
In contrast to previous coinductive treatments, 
the framework presented here captures rewrite sequences of arbitrary ordinal length,
and paves the way for formalizing infinitary rewriting in theorem provers
(as illustrated by our proof of the Compression Lemma in Coq).

Concerning proof trees/terms for infinite reductions, let us mention that 
an alternative approach has been developed in parallel by Lombardi, R\'{\i}os and de~Vrijer~\cite{lomb:rios:vrij:2014}.
While we focus on proof terms for the reduction relation and abstract from the order of steps in parallel subterms, 
they use proof terms for modeling the fine-structure of 
the infinite reductions themselves.
Another difference is that our framework allows for non-left-linear systems.
We believe that both approaches are complementary.
Theorems for which the fine-structure of rewrite sequences is crucial,
must be handled using~\cite{lomb:rios:vrij:2014}.
(But note that we can capture standard reductions by a restriction on proof trees 
and prove standardization using proof tree transformations, see~\cite{endr:polo:2012b}). 
If the fine-structure is not important, as for instance for proving confluence,
then our system is more convenient to work with due to simpler proof terms.

Our work lays the foundation for several directions of future research:
\begin{enumerate}
  \item 
    The coinductive treatment of infinitary $\lambda$-calculus~\cite{endr:polo:2012b}
    has led to elegant, significantly simpler proofs~\cite{czaj:2014,czaj:2015}
    of some central properties of the infinitary $\lambda$-calculus.
    The coinductive framework that we propose
    enables similar developments for infinitary term rewriting
    with reductions of arbitrary ordinal length.

  \item 
    The concepts of bi-infinite rewriting is novel, and the theory of infinitary equational reasoning is still underdeveloped.
    It would be interesting to study these concepts.
    Is there an equivalent of ordinal-indexed rewrite sequences for bi-infinite rewriting
    (maybe using Conway's surreal numbers~\cite{conw:2000})?
    Is it possible to establish some sort of Compression Lemma for bi-infinite rewriting?
    
    Moreover, it would be fruitful to compare the 
    Church--Rosser properties 
    \begin{align*}
      {\ieq}  \;\subseteq\; {\ired \relcomp \iredi} 
      &&\text{ and }&& 
      {(\iredi \relcomp \ired)^*} \;\subseteq\; {\ired \relcomp \iredi} \;\,.
    \end{align*}

  \item 
    The formalization of the proof of the Compression Lemma in Coq
    is just the first step towards the formalization of all major theorems in infinitary rewriting.

    It would also be interesting to formalise infinitary equational reasoning $\ieq$ and bi-infinite rewriting $\ibi$ in Coq. 
    We expect that it is straightforward to adapt our Coq formalization 
    of infinitary rewriting $\ired$ to equational reasoning $\ieq$ and bi-infinite rewriting $\ibi$.
    The latter two concepts have significantly simpler definitions in the fixed point calculus.
    
  \item It is interesting to investigate how the coinductive framework
    should be extended to incorporate the infinitary analysis of
    meaningless
    terms. \cite{bahr:2010,bahr:2010b,bahr:2012,endr:hend:klop:2012}
    This would be the natural stepping-stone to the formalization of
    confluence theorems in infinitary rewriting extended with $\bot$-reduction.
%    (This notion of reduction allows to rewrite all root-active terms, and
%    more generally a \emph{given set of meaningless terms}, to the constant $\bot$.)
    
  \item 
    We believe that the coinductive definitions will ease the development of new techniques for automated reasoning 
    about infinitary rewriting.
    For example, methods
    for proving (local) productivity~\cite{endr:grab:hend:2009,endr:hend:2011,zant:raff:2010},
    for (local) infinitary normalization~\cite{zant:2008,endr:grab:hend:klop:vrij:2009,endr:vrij:wald:10},
    for (local) unique normal forms~\cite{endr:hend:grab:klop:oost:2014},
    and for analysis of infinitary reachability and infinitary confluence.
    Due to the coinductive definitions, the implementation and formalization of these techniques 
    could make use of circular coinduction~\cite{gogu:lin:rosu:2000,endr:hend:bodi:2013}.  
\end{enumerate}

\subsection*{Acknowledgments}
We thank Patrick Bahr, Jeroen Ketema, and Vincent van Oostrom for fruitful discussions 
and comments on earlier versions of this paper.
We are thankful to the reviewers for pointing out mistakes and suggesting several improvements.
%

%\newpage

\bibliographystyle{plain}
\bibliography{main}

\end{document}